\theoremstyle{plain}
\newtheorem{theorem}{Theorem}
\newtheorem{corollary}[theorem]{Corollary}
\newtheorem{lemma}[theorem]{Lemma}
\theoremstyle{definition}
\newtheorem{definition}[theorem]{Definition}
\newcommand{\E}{\mathbb{E}}
\newcommand{\Sp}{\mathbb{S}}
\newcommand{\Hy}{\mathbb{H}}
\newcommand{\M}{\mathcal{M}}
\newcommand{\Mf}{\mathfrak{M}}
\newcommand{\D}{\mathcal{D}}
\newcommand{\OO}{\mathcal{O}}
\newcommand{\vect}[1]{\boldsymbol{#1}}
\newcommand{\norm}[1]{||{#1}||}
\newcommand{\R}{\mathbb{R} }
\newcommand{\W}{\mathbb{W} }
\newcommand{\Q}{\mathbb{Q} }
\newcommand{\C}{\mathbb{C} }
\newcommand{\al}[1]{\begin{align*}#1\end{align*}}
\newcommand{\acos}{\mathrm{acos}}
\newcommand{\Sec}{\mathrm{sec}}
\DeclareMathAlphabet{\mathpzc}{OT1}{pzc}{m}{it}
\newcommand*{\triangleline}{\mathpalette\@triangleline\relax}
\newcommand*{\@triangleline}[2]{{%
    \sbox0{\m@th$#1\bigtriangleup$}%
    \dimen@\fontdimen8
       \ifx#1\displaystyle\textfont\else
       \ifx#1\textstyle\textfont\else
       \ifx#1\scriptstyle\scriptfont\else
       \scriptscriptfont\fi\fi\fi 3
   \ooalign{\hfil\hbox{\vrule\@width\dimen@\@height\ht0\@depth-.5\dimen@}\hfil\cr\box0}%
}}
\DeclareMathOperator*{\argmin}{arg\,min}
\DeclareMathOperator*{\logit}{\text{logit}}
\DeclareMathOperator*{\expit}{\text{expit}}
\let\hat\widehat
\let\tilde\widetilde
\let\bar\overline
\title{Asymptotically Normal Estimation of Network Curvature
}
\author{Steven Wilkins-Reeves\\Department of Statistics\\University of Washington\\\texttt{stevewr@uw.edu} \and Tyler H. McCormick\\Department of Statistics\\Department of Sociology\\University of Washington\\\texttt{tylermc@uw.edu}}
\date{
}
\begin{document}

\maketitle
	
	\begin{abstract}
		Network data, commonly used throughout the physical, social, and biological sciences, consist of nodes (individuals) and the edges (interactions) between them. One way to represent network data's complex, high-dimensional structure is to embed the graph into a low-dimensional geometric space. The curvature of this space, in particular, provides insights about the structure in the graph, such as the propensity to form triangles or present tree-like structures. We derive an estimating function for curvature based on triangle side lengths and the length of the midpoint of a side to the opposing corner. We construct an estimator where the only input is a distance matrix and also establish asymptotic normality. We next introduce a novel latent distance matrix estimator for networks and an efficient algorithm to compute the estimate via solving iterative quadratic programs. We apply this method to the Los Alamos National Laboratory Unified Network and Host dataset and show how curvature estimates can be used to detect a red-team attack faster than naive methods, as well as discover non-constant latent curvature in co-authorship networks in physics.  The code for this paper is available at \url{https://github.com/SteveJWR/netcurve}, and the methods are implemented in the R package \url{https://github.com/SteveJWR/lolaR}.
		\noindent\textbf{Keywords:} Curvature, Networks, Geometry, Latent Spaces, Copula
	\end{abstract}

\section{Introduction} \label{sec:Introduction}
Networks or graphs, $G = (V,E)$, are widely used in multiple scientific fields. These objects characterize the relations between a set of $n$ nodes (or vertices) $V = \{1,2,\dots, n\}$ via a set of edges $E \subset \{(i,j)| i \not= j, i \in V, j \in V \}$ representing the connections between the nodes. Social networks, where nodes often represent individuals, are a common application of these models (e.g., \citet{Borgatti2009NetworkSciences}). Networks are also commonly seen in models of biological and physical sciences, such as nodes representing cells in neuroscience \citep{Bassett2018OnNeuroscience} and particles \citep{Papadopoulos2018NetworkGrains}. Complex and high-dimensional structure is an inherent feature of network data, which poses challenges in modeling and representation. 

One common modeling approach represents the graph through an embedding into a lower-dimensional geometric space, where both the properties of the geometric space and the positions of points within it provide insights into graph structure.  Models that use this embedding representation rely on latent distance matrices \citep{Hoff2002latent, Handcock2007Model-basedNetworks, hoff2007modeling, Smith2017TheData}, where the distances in the low-dimensional manifold are inversely proportional to the propensity to form a connection.  We refer to this broad class as latent distance models.  
The geometry of the underlying manifold has substantial implications for the types of connections we expect to see in the network.  Manifolds with positive curvature (hyperspheres) tend to encourage triangles to close and produce more group structure, whereas negatively curved manifolds make it easier to form trees with long paths.  

Our focus is on estimating the curvature of a manifold based on a set of (noisy) distance measurements.  Our results rely on the fundamental, but profound, observation that the properties of triangles depend on the curvature of the manifold on which they're embedded.  In particular, we leverage the fact that the distance between a vertex in a triangle and the midpoint of the side opposite of that vertex is varies based on the curvature of the manifold.    \cite{lubold2023identifying} study a similar problem of hypothesis testing the geometric class of various network models. However, we offer an alternative approach to estimating curvature with several distinct advantages. Notably, our method requires only four distances, derived from lengths on a triangle to estimate curvature, making it a local approach that can be used to develop tests for constant curvature. Furthermore, our estimating function is smooth, allowing a single equation to estimate the curvature and derive interpretable asymptotic results, unlike \cite{lubold2023identifying}, which requires an eigenvalue equation that is generally not smooth and does not lead to interpretable asymptotics.

Our contributions include the following. We develop a smooth estimating equation to estimate curvature using a noisy distance matrix, leveraging triangles and their median lengths.  We call the triangle median the distance from one vertex to the midpoint of the other two (not to be confused with the statistical median). These results are general and applicable to any noisy distance matrix, though in this paper we apply them specifically to social networks. We next consider various aspects of working with surrogate midpoints when a true midpoint is not observed in the data. We establish upper and lower bounds for the curvature when collecting a set of distances that does not contain the midpoint of another pair of points. We also establish ``good conditions" under which surrogate midpoints form arbitrarily close to the actual midpoint of other points. Next, we turn to the specifics of the latent distance model. In this case, we present a curvature estimator and demonstrate that, under the typical assumptions used to fit latent distance models, it is asymptotically normal.  We show that we can further improve estimation using a constrained estimator that reflects the triangle inequality among distance constraints.  Lastly, we demonstrate that our estimator is a basis for the development of new methodology in sociology and cybersecurity by testing for changes in curvature.

The remainder of the paper continues as follows. First, we include a literature review in Section~\ref{sec: lit_review}. Next, in Section~\ref{sec:methods}, we introduce our aforementioned methodological contributions. We next illustrate the efficacy of these methods through a simulation study. Then, we discuss downstream statistical tasks such as testing whether the curvature of a noisy distance matrix is constant in Section~\ref{sec:application_constant_curvature} and detecting changepoints in Section~\ref{sec:application_changepoint}. We further elaborate on these with applications to co-authorship networks in physics and an application in cybersecurity.

\subsection{Literature Review} \label{sec: lit_review}
The use of distance matrices for data analysis is prevalent across numerous fields. Originating from applications in psychometrics, multidimensional scaling (MDS) \citep{torgerson1952multidimensional} pioneered the use of distance-based methods and has been explored in various domains, including the analysis of protein shapes \citep{havel1985evaluation}, image classification \cite{tenenbaum2000global}, and natural language processing \cite{kusner2015word}. Notably, \cite{Hoff2002latent} introduced this idea in a model of social network formation, which has since been expanded in numerous ways, such as model-based clustering \cite{Handcock2007Model-basedNetworks}, multi-view networks \cite{Salter-Townshend2017LatentData}, and dynamic networks \cite{kim2018review}. Some models use mixtures of a block structure to model only at the individual level within a cluster of the network \citep{Fosdick2016MultiresolutionModels, Lok2021ModelingBlockmodel}. Latent distance models have been applied to problems like modeling social influence \citep{Sweet2020AInfluence}, social media relationships of politicians \citep{Lok2021ModelingBlockmodel}, and neuron connectivity \citep{Aliverti2019SpatialClustering}, among others.

We focus on the properties of the geometric space underlying a latent distance model, particularly the notion of curvature. The sectional curvature of a latent space is broadly defined as the deviation from a flat (Euclidean) space via the growth of the circumference of small circles as a function of their radius. An important class of manifolds are those that are simply connected and have constant curvature. A classical result from \cite{Killing1891UeberRaumformen} characterizes these as the spherical (positive curvature), Euclidean (0 or flat curvature), and hyperbolic (negative curvature) spaces. Though importantly, these do not represent the entire class of such manifolds.

Classically, the choice of the embedding space was at the discretion of the analyst. Notably, latent spherical and Euclidean spaces were used in \cite{Hoff2002latent}. However, other metric spaces, particularly spherical and hyperbolic spaces, have been found to better represent many network data types \cite{Smith2017TheData}. The authors also provide a simulation-based approach to compare the eigenspectrum of the graph Laplacian to models under spherical, hyperbolic, and Euclidean geometry. They show how the latent embedding space influences observed properties of the network, notably the degree distribution and clustering of the network, which can influence the behavior of network contagion processes (i.e., SIR models) \citep{Volz2011EffectsDynamics}.

Our work bears the closest resemblance to that of \cite{lubold2023identifying}, which discusses hypothesis testing of the latent space among a class of models and the estimation of the related distance matrix. Our work is distinct in several ways. Firstly, our method of estimating the curvature of the latent space is novel and useful for deriving interpretable asymptotic results. This is due to the fact that our method allows for an estimating equation approach to identifying curvature, which leads to desirable properties. Importantly, their approach tests whether the geometry can be embedded globally in each of the canonical spaces, whereas we provide a local approach derived from triangle distances. Furthermore, we also provide an improved latent distance estimator which allows for the construction of an asymptotically normal distance matrix, based on cliques (fully connected subgraphs) in a network. Our approach for curvature estimation is modular and can be applied to general distance matrices. Consequently, we illustrate how to test for constant curvature within an embedding space. 

An alternative definition of graph curvature worth discussing includes the Ollivier-Ricci curvature \citep{Ollivier2007RicciSpaces} and extensions such as Haantjes-Ricci curvature \citep{Saucan2020ANetworks} and Forman-Ricci curvature \citep{Leal2018Forman-RicciHypergraphs}. These definitions of curvature are derived from metrics arising from graph distances (i.e., integer-valued shortest path distances) rather than distances on a smooth latent space. As such, it is not apparent what these estimates will converge to (or if they converge at all) when a network is studied as a random object, with the exception of \citet{vanderHoorn2020Ollivier-RicciGraphs}, who study a problem where connections are governed by a small radius on a latent space. The authors study the convergence of a modified Ollivier-Ricci curvature to the Ricci curvature of the underlying space in random geometric graphs under the limit of the connection radius shrinking to $0$. However, these discrete curvatures have also been applied to various settings, such as financial network instability \citep{Sandhu2016RicciRisk, Samal2021NetworkInstability}, network sampling \citep{Barkanass2022GeometricNetworks}, cancer detection in gene regulatory networks \citep{Sandhu2015GraphNetworks}, functional neuroscience \citep{Farooq2019NetworkConnectivity}, and community detection \citep{Sia2019Ollivier-RicciNetworks, Ni2019CommunityFlow}.

\section{Methods} \label{sec:methods}

We begin by formally introducing our environment, including defining the properties of manifolds covered by our method.  Next, we propose an estimator of curvature based on noisy distance measurements from triangle midpoints. We will begin with an ideal estimator when the true midpoint is measured and then follow up with a study of using points that are nearly midpoints, which we call surrogate midpoints, in their place.  These methods are general and apply in any setting where we have measured noisy distances.  We then turn to our setting--social networks--and describe in detail how to construct distance estimators, and subsequent estimates of curvature, for the latent distance model.

\subsection{Geometric Environment} \label{sec: geometric environment}

We begin by defining the geometric environment. We assume points lie on a Riemannian manifold $\M^p$ of dimension $p$, equipped with a corresponding metric tensor $g$. The metric tensor can be used to define the sectional curvature of the manifold at a point $q \in \M^p$. For our purposes, we assume that this manifold is connected and that the curvature is both upper and lower bounded. In our problem, we exclusively work with distances and thus consider the metric space induced by the Riemannian manifold $\Mf = (\mathcal{M}^p, d)$. We include the related definitions of the metric tensor and the distance on the manifold in the appendix in Section~\ref{sec: riemannian geometry definitions}.

We further assume the manifold is a member of the class of simply connected Riemannian manifolds with constant sectional curvature $(\kappa)$.  This assumption is consistent with work on the latent distance model in social networks \citep{Hoff2002latent}. These include the classical Euclidean $\E^p$ $(\kappa = 0)$, spherical $\Sp^p(\kappa)$ $(\kappa > 0)$, and, more recently, hyperbolic space $\Hy^p(\kappa)$ $(\kappa < 0)$. The celebrated Killing-Hopf theorem states that these are the only manifolds of this type \citep{Killing1891UeberRaumformen,hopf1926clifford}. Therefore, we refer to these manifolds spaces as the \textbf{canonical manifolds}, and we introduce common representations of these manifolds later in this subsection.

To identify the curvature, we rely on a simple geometric insight, relating the side lengths of a triangle and the length of the triangle's median; the line segment connecting a vertex to the midpoint of the opposite side.

In general, we require the manifold to satisfy two properties.
\begin{itemize}
    \item (A1) (\textbf{Algebraic Midpoint Property}) $\Mf$ satisfies the algebraic midpoint property. For any $x, y \in \M^p$, there exists a point $z$ such that $d_{\M^p}(z, x) = d_{\M^p}(z, y) = \frac{1}{2}d_{\M^p}(x, y)$. 
    \item (A2) (\textbf{Locally Euclidean}) For all $q \in \M^p$, there exists some $\delta > 0$ and some functions $c_p, C_p$ such that for all $\epsilon \leq \delta$: 
    $$ c_p(\epsilon) \leq \frac{\text{Vol}(B_{\M^p}(\epsilon, q))}{\text{Vol}(B_{\E^p}(\epsilon, 0))} \leq C_p(\epsilon) \text{ and } \lim_{\epsilon \to 0} \frac{\text{Vol}(B_{\M^p}(\epsilon, q))}{\text{Vol}(B_{\E^p}(\epsilon, 0))} = 1.$$ 
\end{itemize}

Here, $B_{\M^p}(\epsilon, q)$ is the $\epsilon$-ball on $\M^p$ centered at a point $q$, which we abbreviate to $B(\epsilon, q)$, and $B_{\E^p}(\epsilon, 0)$ is the $\epsilon$-ball in Euclidean space. These conditions will hold under our mild assumptions on the manifold. See Section~\ref{sec:manifold_assumptions} in the supplementary materials for details. Next, we consider an explicit set of representations of the canonical manifold.

\textbf{Canonical Manifolds.} 

Each of the canonical manifolds can be represented using a set of positions with real-valued vectors and a corresponding distance function, allowing for closed-form computation of the distances. We include definitions for Euclidean, spherical, and hyperbolic spaces for completeness. We emphasize the difference between the intrinsic and extrinsic geometry here. Though each of these canonical manifolds is embedded in $\R^p$, only the Euclidean space uses the standard $2$-norm to construct the distances. The curved canonical manifolds $\Sp^p$ and $\Hy^p(\kappa)$ can be embedded in $\R^{p + 1}$ along with a properly defined metric. In the spherical example, we compute the length according to the path length on the surface of the sphere, rather than the Euclidean distance through the sphere. We next highlight each of these models.

The \textbf{Euclidean manifold}: $\E^p$ can be described using a set of points in $\mathbb{R}^p$ with the standard $2$-norm.
\[
    d_{\E^p}(x, y) = \sqrt{\sum_{k = 1}^p (x_k - y_k)^2}
\]

The \textbf{spherical manifold}: $\Sp^p$ with curvature $\kappa > 0$ is equivalent to the sphere of radius $r = \frac{1}{\kappa^2}$. We express this using a set of coordinates $x \in \R^{p + 1}$, such that $ \sum_{k = 0}^{p}x_k^2 = 1$. The distance on the sphere can be computed using the quadratic form $B_{\Sp^p}$ defined below:
\[
    B_{\Sp^p}(x, y) = \sum_{k = 0}^{p}x_k y_k , \quad  d_{\Sp^p}(x, y) = \frac{1}{\sqrt{\kappa}}\arccos(B_{\Sp^p}(x, y)).
\]

The \textbf{hyperbolic manifold}: $\Hy^p$ with curvature $\kappa < 0$ can be constructed using the hyperboloid model (Minkowski model), which corresponds to a set of points $x \in \R^{p + 1}$, such that $x_{0}^2 - \sum_{k = 1}^{p}x_k^2 = 1, x_0 > 0$. An analogous quadratic form $B_{\Hy^p}$ exists for the hyperbolic embedding and can be used to compute the distances:
\[
    B_{\Hy^p}(x, y) = x_{0}y_{0} - \sum_{k = 1}^{p}x_k y_k , \quad d_{\Hy^p}(x, y) = \frac{1}{\sqrt{-\kappa}}\text{arccosh}(B_{\Hy^p}(x, y)).
\]

\subsection{Identifying Curvature } 
\label{sec: identifying_curvature} 

A number of methods exist to verify whether a particular set of distances can be embedded in a space of constant curvature, typically based on the zeros of eigenvalues of a transformation of the distance matrix. These include methods by \citet{Schoenberg1935RemarksHilbert} and \citet{Begelfor2005TheManifolds}, as well as Cayley-Menger determinants \citep{Blumenthal1943DistributionN-Space}. \citet{lubold2023identifying} previously used such a criterion to identify whether a set of distances could be globally embedded in a particular curvature space. In this work, we take a different approach where we identify curvature based on a minimal set of points. We also estimate a specific curvature, rather than performing a test on the sign of the curvature, as presented by \citet{lubold2023identifying}.

We rely on a simple geometric observation. Consider a set of three points that form a triangle and the length of the median. This length reveals the sectional curvature of the manifold, where a smaller distance corresponds to a more negatively curved space and a larger distance corresponds to a more positively curved space. This is visualized in Figure~\ref{fig:triangle_midpoid_distances} for Euclidean, spherical, and hyperbolic triangles.

\begin{figure}
    \centering
    \includegraphics[width = 0.45\textwidth]{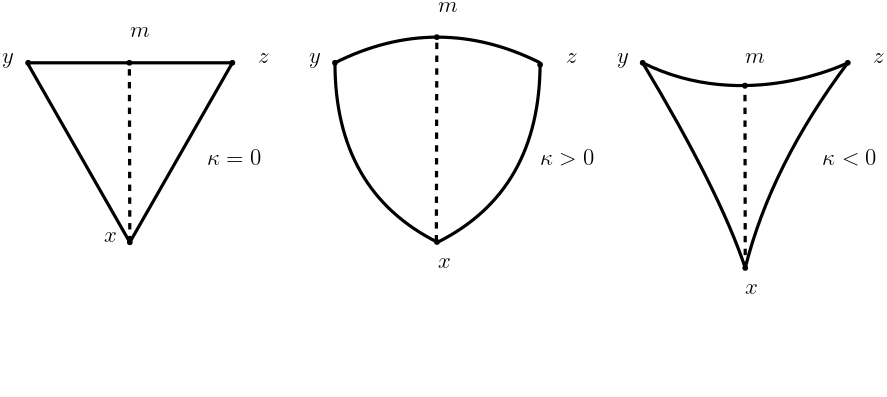}
    \caption{Midpoint distances and curvature of the space with equilateral triangles. The length of the triangle median $d_{xm}$ is an increasing function of the curvature $\kappa$ for fixed other triangle side lengths. }
    \label{fig:triangle_midpoid_distances}
\end{figure}

This use of midpoints helps to identify the curvature uniquely, when it may not be identified in other situations.  For example, consider $4$ points placed equidistant from each other.  This set of distances can be either represented in $\E^3$ as the tetrahedron, or using equidistant points on the sphere $\Sp^2(\kappa)$ for $\kappa > 0$, making it impossible to identify the curvature of the space from this collection of distances alone.

We instead take a more direct approach to eliciting curvature by leveraging distances in a triangle and the triangle median (which, recall, is the distance from one vertex to the midpoint of the other two and distinct from the statistical median). This allows us to identify curvature with only four points in total. We now formalize this intuition. For any three points $x, y, z$ which lie in an unknown canonical manifold of dimension $p \geq 2$ with constant sectional curvature $\kappa$, $(x, y, z)$ can be isometrically embedded in a submanifold of dimension $2$. We illustrate this through the use of submanifolds that contain their \textit{geodesics}, the paths on a manifold that minimize the path length between two points, and thus define a distance.  Simply stated, even if the manifold's dimension $p > 2$, the curvature can be identified through the totally geodesic submanifold containing the triangle.

\begin{definition} 
    A submanifold $\tilde \M \subset \M$ is \textit{totally geodesic} if every geodesic in $\tilde \M$ is also a geodesic in $\M$. 
\end{definition}

Some simple examples include the Euclidean plane, within the three-dimensional Euclidean space ($\mathbb{E}^2$ is a totally geodesic submanifold of $\mathbb{E}^3$). However, this is not always the case; consider the two-dimensional sphere $\mathbb{S}^2(\kappa)$, which also resides within three-dimensional Euclidean space but does not contain all of its geodesics, as these geodesics in $\mathbb{E}^3$ pass through the center of the sphere. This distinction highlights the differences between intrinsic and extrinsic notions of distance, as our object of study is the former. We will use the fact that a totally geodesic submanifold contains all points along the geodesic, including the midpoint.

\begin{lemma} \label{lem:submanifold_of_dim_2}
    If any $x,y,z \in \M^p(\kappa)$ where $p\geq 2$ and $x,y,z$ are not co-linear. Then  $x,y,z, m \in \M^2(\kappa)$ where $\M^2(\kappa)$ is a totally geodesic submanifold of dimension 2 with constant sectional curvature $\kappa$ and $m$ is the midpoint of points $y$ and $z$. 
\end{lemma}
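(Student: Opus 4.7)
The plan is to construct the sought submanifold as the image of a two-dimensional linear subspace of the tangent space at one of the points under the exponential map, and then exploit the constant curvature of $\M^p(\kappa)$ to show this image is totally geodesic with the same constant sectional curvature. Since by the Killing-Hopf theorem $\M^p(\kappa)$ is isometric to one of the canonical manifolds $\E^p$, $\Sp^p(\kappa)$, or $\Hy^p(\kappa)$, I would either give a unified argument via the exponential map or handle the three canonical representations case by case, using the quadratic form descriptions given in the excerpt.

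For the unified approach, I would first pick one of the vertices, say $x$, and consider the initial velocity vectors $v_y, v_z \in T_x \M^p$ of the unit-speed geodesics from $x$ to $y$ and from $x$ to $z$, scaled so that $\exp_x(v_y) = y$ and $\exp_x(v_z) = z$. Because $x, y, z$ are not co-linear, $v_y$ and $v_z$ are linearly independent, so they span a $2$-dimensional subspace $V \subset T_x \M^p$. I would then define $\tilde \M := \exp_x(V)$ and invoke the classical fact that in a space of constant sectional curvature, the exponential image of any linear subspace of the tangent space is a totally geodesic submanifold whose induced metric again has constant sectional curvature $\kappa$. This fact reduces to checking that the second fundamental form of $\tilde \M$ vanishes, which follows because the curvature tensor in constant curvature has the symmetric form $R(X,Y)Z = \kappa(\langle Y,Z\rangle X - \langle X,Z \rangle Y)$, and the orthogonal symmetries of $\M^p(\kappa)$ preserve $\tilde \M$ setwise.

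Once $\tilde \M$ is established as a $2$-dimensional totally geodesic submanifold with constant curvature $\kappa$, the remaining claims are immediate. The points $y$ and $z$ lie in $\tilde \M$ by construction, and since $\tilde \M$ is totally geodesic, the unique minimizing geodesic segment from $y$ to $z$ in $\M^p(\kappa)$ coincides with a geodesic segment in $\tilde \M$. The midpoint $m$, which exists by the algebraic midpoint property (A1), lies on this geodesic, and so $m \in \tilde \M$ as well. Alternatively, for a more concrete proof one can directly verify the claim in each canonical model: in $\E^p$ the affine $2$-plane through $x, y, z$ works; in $\Sp^p(\kappa)$ one takes the intersection of the sphere with the $3$-dimensional linear span of $x, y, z$ in $\R^{p+1}$; and in $\Hy^p(\kappa)$ one uses the intersection of the hyperboloid with the analogous $3$-dimensional linear span, checking via the quadratic forms $B_{\Sp^p}$ and $B_{\Hy^p}$ that the induced distance is preserved.

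The main obstacle, and the only non-trivial step, is justifying that $\exp_x(V)$ really is a totally geodesic submanifold of dimension $2$ with constant curvature $\kappa$. In the case-by-case approach this is straightforward from the explicit quadratic-form models, but in the intrinsic approach it requires the characterization of constant curvature via the curvature tensor and a symmetry argument. I would likely prefer the case-by-case argument here, since the paper has already written out the three canonical models explicitly and the reader can immediately verify that slicing each model by a $3$-dimensional linear subspace (or a $2$-dimensional affine subspace in the Euclidean case) yields an isometric copy of the corresponding canonical manifold one dimension down. A brief remark would then note that co-linearity of $x, y, z$ would force the linear span to be at most $2$-dimensional, in which case the triangle degenerates and the median $m$ lies on the geodesic through $x, y, z$ trivially.
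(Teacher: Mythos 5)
Your proof is correct and takes essentially the same approach as the paper: both identify the totally geodesic $\M^2(\kappa)$ with the slice of the canonical quadric model cut out by the $3$-dimensional (or, in the Euclidean case, affine $2$-dimensional) linear span of $x, y, z$, and then observe that $m$ lies on the $y$--$z$ geodesic and hence inside this submanifold. The paper establishes the totally-geodesic property via an explicit Gram--Schmidt rotation, a coordinate reflection, and Klingenberg's fixed-point theorem, whereas you reach the same conclusion either through the exponential-map characterization or by directly inspecting the three canonical models---a purely presentational difference.
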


The intuition behind this lemma is that, regardless of the ambient dimension of the latent manifold $p$, we can determine the curvature from a two-dimensional submanifold. This submanifold is constructed from the geodesics of a given triangle. As we will see in Theorem~\ref{thm:midpoint_curve_equation}, the side lengths and the length of the triangle's median are sufficient to identify the curvature of the manifold. The proof is straightforward and in Section~\ref{sec:proof_submanifold_of_dim_2}. The main implication here is that the totally geodesic submanifold allows us to look at the distances in a subspace of dimension $2$ which will be useful in the following theorem for identification. The three points $x,y,z$ will fall into one of these sub-manifolds, as well as $m$ which lies on the geodesic between $y$ and $z$.  Since geodesics determine the distance, and geodesics on the submanifold are the same as geodesics on the manifold.

We now use this fact to derive an equation which will relate the curvature $\kappa$ to the set of distances between the points $(x,y,z,m)$, which we denote $\vect{d}^{\triangleline} = (d_{xy},d_{xz},d_{yz}, d_{xm})$. 

\begin{theorem}[Midpoint Curvature Equation] \label{thm:midpoint_curve_equation}
    Suppose that points $x,y,z \in \M^p(\kappa)$ an unknown Riemannian manifold of dimension $p \geq 2$ of constant sectional curvature $\kappa$.  Let $m$ denote the midpoint between $y,z$.  The following equation holds for $\kappa \in \R$. 
    \begin{equation} \label{eq: midpoint_curvature_equation}
    \begin{aligned}
      g(\kappa, \vect{d}^{\triangleline}) &= \text{Re}\bigg[ \frac{2\cos(d_{xm}\sqrt{\kappa})}{\kappa} - \frac{\Sec(\frac{d_{yz}}{2}\sqrt{\kappa})(\cos(d_{xy}\sqrt{\kappa}) + \cos(d_{xz}\sqrt{\kappa}))}{\kappa} \bigg]  = 0
    \end{aligned}
    \end{equation}
    Where $d_{jk}$ denoted the distance between points $j,k$ and $\text{Re}[]$ denotes the real part of the equation. 
\end{theorem}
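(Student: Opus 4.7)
By Lemma~\ref{lem:submanifold_of_dim_2}, I can immediately reduce to the case in which $x,y,z,m$ all lie on a two-dimensional totally geodesic submanifold of constant curvature $\kappa$; because geodesics on this submanifold agree with geodesics on $\M^p(\kappa)$, all four distances appearing in $\vect{d}^{\triangleline}$ are preserved. So it suffices to prove the identity in $\M^2(\kappa)$, and by the Killing--Hopf theorem stated earlier there are only three cases to check: spherical ($\kappa>0$), hyperbolic ($\kappa<0$), and Euclidean ($\kappa=0$).

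The main workhorse identity I would aim to establish is the clean trigonometric relation
\[
2\cos(\sqrt{\kappa}\,d_{xm})\cos\!\left(\tfrac{\sqrt{\kappa}}{2}\,d_{yz}\right) \;=\; \cos(\sqrt{\kappa}\,d_{xy}) + \cos(\sqrt{\kappa}\,d_{xz}),
\]
valid for $\kappa>0$, from which equation~\eqref{eq: midpoint_curvature_equation} follows by dividing through by $\kappa$. To derive it in the spherical case, I would use the extrinsic coordinates on $\Sp^2(\kappa)$ introduced in the excerpt: writing $x,y,z$ as unit vectors in $\R^3$ under the bilinear form $B_{\Sp}$, the geodesic midpoint of $y,z$ is the normalization $m = (y+z)/\|y+z\|_{B_{\Sp}}$. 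The identity $B_{\Sp}(y,z)=\cos(\sqrt{\kappa}\,d_{yz})$ together with the half-angle identity gives $\|y+z\|_{B_{\Sp}} = 2\cos(\sqrt{\kappa}\,d_{yz}/2)$, and then bilinearity of $B_{\Sp}$ in the numerator $B_{\Sp}(x,y+z)=B_{\Sp}(x,y)+B_{\Sp}(x,z)$ yields the displayed relation after taking $\arccos$.

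For the hyperbolic case I would repeat the same computation with the Minkowski form $B_{\Hy}$: the midpoint in the hyperboloid model is again $(y+z)/\|y+z\|_{B_{\Hy}}$, and the same bilinear manipulation gives
\[
2\cosh(\sqrt{-\kappa}\,d_{xm})\cosh\!\left(\tfrac{\sqrt{-\kappa}}{2}\,d_{yz}\right) \;=\; \cosh(\sqrt{-\kappa}\,d_{xy}) + \cosh(\sqrt{-\kappa}\,d_{xz}).
\]
The $\mathrm{Re}[\cdot]$ in the statement is precisely the device that unifies these two cases: since $\cos(ia)=\cosh(a)$ and $\sec(ia)=\mathrm{sech}(a)$, and since dividing real numbers by real $\kappa$ is real, the expression inside $\mathrm{Re}[\cdot]$ is already real in both cases, and substituting $\sqrt{\kappa}\mapsto i\sqrt{-\kappa}$ in the spherical identity reproduces the hyperbolic one. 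For the Euclidean case $\kappa=0$, I would either (i) invoke Apollonius' theorem directly, $d_{xy}^2+d_{xz}^2 = 2d_{xm}^2 + d_{yz}^2/2$, or (ii) Taylor-expand $\cos(t\sqrt{\kappa}) = 1-\tfrac12\kappa t^2 + O(\kappa^2)$ and $\sec(t\sqrt{\kappa}/2) = 1+\tfrac18\kappa t^2+O(\kappa^2)$, divide by $\kappa$, and take $\kappa\to 0$; the $1/\kappa$ poles cancel between the two terms in $g$ and the leading term recovers Apollonius.

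The main obstacle is the spherical midpoint formula: one must verify that the ambient-space normalization $(y+z)/\|y+z\|$ really is the intrinsic geodesic midpoint, and that the argument is well-defined (that is, $y$ and $z$ are not antipodal, which on $\Sp^2(\kappa)$ requires $d_{yz}<\pi/\sqrt{\kappa}$, so that $\sec(\sqrt{\kappa}\,d_{yz}/2)$ does not blow up). This is guaranteed by assumption (A1) together with the injectivity radius being at least $\pi/\sqrt{\kappa}$ on $\Sp^2(\kappa)$, but should be stated explicitly. Once that point is checked, the remainder is bookkeeping in each of the three canonical geometries.
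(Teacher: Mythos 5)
Your proof is correct and lands on exactly the same key trigonometric identity as the paper: $2\cos(\sqrt{\kappa}\,d_{xm})\cos(\sqrt{\kappa}\,d_{yz}/2) = \cos(\sqrt{\kappa}\,d_{xy}) + \cos(\sqrt{\kappa}\,d_{xz})$, with the hyperbolic analogue via $\cosh$, and the Euclidean limit recovering Apollonius. The overall structure also matches: reduce to a totally geodesic $\M^2(\kappa)$ via Lemma~\ref{lem:submanifold_of_dim_2}, then case-split over the three Killing--Hopf geometries and compute in the extrinsic model.

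Where you differ is in how you extract the identity inside the extrinsic model. The paper places the midpoint at the pole $(1,0,0)$, writes $y$ and $z$ symmetrically in the plane $x_2 = 0$, and adds the two constraint equations $B_{\Sp}(x,y) = \cos(\sqrt{\kappa}\,d_{xy})$ and $B_{\Sp}(x,z) = \cos(\sqrt{\kappa}\,d_{xz})$ so that the $x_1$ terms cancel. You instead write the geodesic midpoint intrinsically as $m = (y+z)/\|y+z\|_{B}$, compute $\|y+z\|_B = 2\cos(\sqrt{\kappa}\,d_{yz}/2)$ from the half-angle identity, and use bilinearity $B(x,y+z) = B(x,y) + B(x,z)$. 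This is cleaner and coordinate-free within the ambient vector space; it also transfers to the Minkowski form $B_{\Hy}$ verbatim. The one point you flag as the ``main obstacle'' --- verifying that $(y+z)/\|y+z\|_B$ is the intrinsic geodesic midpoint --- is easy to close: $m$ lies in $\operatorname{span}(y,z)$, so it is on the great circle (resp.\ geodesic) through $y,z$, and $B(m,y) = B(m,z) = \cos(\sqrt{\kappa}\,d_{yz}/2)$ gives $d(m,y) = d(m,z) = d_{yz}/2$, so $d(m,y) + d(m,z) = d_{yz}$ places $m$ on the minimizing arc (the sign ambiguity $\pm(y+z)/\|y+z\|$ is resolved because the negative sign gives $d(m,y) = \pi/\sqrt{\kappa} - d_{yz}/2$). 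The non-antipodality requirement $d_{yz} < \pi/\sqrt{\kappa}$ that you state is also present in the paper's proof. In short: same theorem, same key identity, same reduction to 2D, but a tidier algebraic derivation of the identity that avoids setting up and solving for coordinates of $x$.
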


The proof first leverages the fact that by Lemma~\ref{lem:submanifold_of_dim_2}, we can construct a submanifold of dimension 2 that contains the midpoint of points on a triangle. We then use this to derive an equation that relates the side lengths of the triangle, the triangle median length, and the curvature of the space. For cases when $\kappa < 0$, we take the real part of equation~\eqref{eq: midpoint_curvature_equation}, which is equivalent to replacing the trigonometric functions with their hyperbolic analogues. Though this does require that $p \geq 2$, this covers most manifolds of interest. The proof is found in the supplementary materials in Section~\ref{sec:proof_midpoint_curve_equation}. It will also be convenient to express the length of the triangle median $d_{xm}(\kappa; \vect{d}^{\triangle}(x, y, z))$ as a function of the curvature $\kappa$ and a set of triangle side lengths $\vect{d}^{\triangle}(x, y, z)$.

We remark on the a similar criterion used by \cite{gu2018learning} for identifying the number of components of positive and negative curvature in a product-space embedding. ~\cite{gu2018learning} derive the parameter $\theta_T$ from Toponogov's theorem (which can be found in the appendix Theorem~\ref{thm:toponogov}) and its sign can be used to identify the sign of the curvature of the manifold. In fact, it is straightforward to show $\lim_{\kappa \to 0} g(\kappa, \vect{d}^{\triangleline}) = \theta_T(\vect{d}^{\triangleline})$.
If the sectional curvature of $\M^p$ is positive, then $\theta_T(\vect{d}^{\triangleline}) > 0$, and if the curvature is negative, $\theta_T(\vect{d}^{\triangleline}) < 0$. In Euclidean space, $\theta_T(\vect{d}^{\triangleline}) = 0$, and this reduces to the parallelogram law. Our method is distinct as it directly identifies the curvature using this same set of distances, rather than the sign of the curvature. 

If the true manifold that generates the distances is non-constant, then the value $\kappa$ is the curvature of the 2-dimensional canonical manifold which can isometrically embed a set of triangle distances $\vect{d}^{\triangleline}$.

Our method differs from those of \citet{Schoenberg1935RemarksHilbert} and the Cayley-Menger determinants \citep{Blumenthal1943DistributionN-Space} because it requires midpoint information, which theirs do not. Their methods do not uniquely identify curvature but only determine if a space of constant curvature can embed an given distance matrix. In the appendix, we illustrate the smoothness of this equation, which is also a desirable property for plug-in estimators (Section~\ref{sec: Smoothness of the Estimating Equation}).

\subsection{Estimating Curvature} \label{sec: Estimating Curvature}

In this subsection, we describe how to estimate the curvature from a noisy estimate of a set of distances using Theorem~\ref{thm:midpoint_curve_equation}. We begin by introducing an estimator based on triangle distances and its median length. In practice, we are often given an estimate of a distance matrix between an arbitrary set of $K$ points $\vect{D} \in \R^{K \times K}$. In this setting, it is not guaranteed that there is a midpoint of two other points among the observed points in the distance matrix. We  illustrate how one can bound the curvature in this setting. Lastly, we introduce a result that characterizes the formation of points arbitrarily close to the midpoint of other points.

We first consider an idealized scenario. In this setting, we suppose that we are given an estimate of the triangle distances $\vect{\hat d}^{\triangleline}$. We return to the problem of estimating distances in our model in Section~\ref{sec: Latent Distance Estimation}. We can use such a set of noisy or estimated distances to estimate the curvature ($\hat \kappa$) by solving for the value of $\kappa$ which is the solution to equation~\ref{eq: curvature estimate}.
\begin{equation} \label{eq: curvature estimate}
    g(\hat \kappa, \vect{\hat d}^{\triangleline}) = 0
\end{equation}

The advantage of this method is that the smoothness of $g$ allows for the derivation of explicit asymptotics for $\hat \kappa$, which is not possible for the method used by \cite{lubold2023identifying}. We present this result below in Theorem~\ref{thm:asymptotic_normality}.

\begin{theorem} \label{thm:asymptotic_normality}
    
    Suppose there exist points $x,y,z \in \M^{p}$. Let $m$ denote the midpoint between $y,z$ where these points are fixed.  Let $\vect{\hat d}^{\triangleline} = (\hat d_{xy},\hat d_{xz},\hat d_{yz}, \hat d_{xm})$ be the estimated distances and $\vect{d}^{\triangleline} = (d_{xy},d_{xz},d_{yz}, d_{xm})$ be their true, unknown counterparts.
    
    Assume we have a distance estimator $\vect{\hat d}$ such that
    \begin{itemize}
        \item (B1) $r(n)\Big(\vect{\hat d}^{\triangleline} - \vect{d}^{\triangleline}\Big) \to N(0, \Sigma)$
        \item (B2) $\kappa  < \left(\frac{\pi}{\max\{d_{xy},d_{xz},d_{yz}, d_{xm}\}}\right)^2$
        \item (B3) $\frac{d}{d\kappa'} d_{xm}(\kappa'; \vect{d}^{\triangleline}(x,y,z)) \bigg|_{\kappa' = \kappa} > 0$
    \end{itemize}
   
    Where $r(n)$ is the rate of convergence. 
    Let $\hat \kappa$ be the solution to $g(\kappa, \hat d) = 0$. Then 
    \begin{equation}
        r(n)(\hat \kappa - \kappa) \to_{d} N \left(0, \left(\frac{\partial g(\kappa,d)}{\partial \kappa}\right)^{-2}\left(\nabla_d g(\kappa,d)^\top \Sigma \nabla_d g(\kappa,d)\right)\right)
    \end{equation}
    where $\to_d$ refers to convergence in distribution. 
    If $(B1)$ is replaced by a consistency, i.e. $\norm{\vect{\hat d}^{\triangleline} - \vect{d}^{\triangleline}} = o_P(r(n))$, then $\hat \kappa - \kappa = o_P(r(n))$. 
\end{theorem}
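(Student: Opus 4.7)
The plan is to treat $\hat\kappa$ as an implicit $Z$-estimator and derive its asymptotics via the implicit function theorem together with a delta-method argument on the smooth estimating equation $g(\kappa, \vect{d}^{\triangleline}) = 0$ from Theorem~\ref{thm:midpoint_curve_equation}. First I would show that $g$ is smooth as a function of $(\kappa, \vect{d}^{\triangleline})$ in an open neighborhood of the true value. Assumption (B2) keeps all arguments of the $\cos$ and $\Sec$ terms strictly inside $(-\pi, \pi)$ and $(-\pi/2, \pi/2)$ respectively, so $g$ is real-analytic at the true point; this is what lets the higher-order Taylor remainder be controlled later.

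Second, I would verify the non-degeneracy condition $\partial_\kappa g(\kappa, \vect{d}^{\triangleline}) \neq 0$. Viewing $d_{xm}$ as the implicit function of $\kappa$ defined by fixing the three side lengths $\vect{d}^{\triangle}(x,y,z)$ and setting $g = 0$, implicit differentiation yields
\begin{equation*}
\frac{\partial d_{xm}(\kappa; \vect{d}^{\triangle})}{\partial \kappa} \;=\; -\,\frac{\partial_\kappa g(\kappa, \vect{d}^{\triangleline})}{\partial_{d_{xm}} g(\kappa, \vect{d}^{\triangleline})}.
\end{equation*}
Since $\partial_{d_{xm}} g$ is easily computed to be nonzero under (B2), assumption (B3) forces $\partial_\kappa g(\kappa, \vect{d}^{\triangleline}) \neq 0$. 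By the implicit function theorem, there is a neighborhood $U$ of $\vect{d}^{\triangleline}$ and a $C^1$ map $\kappa(\cdot) : U \to \R$ with $g(\kappa(d'), d') = 0$ and $\kappa(\vect{d}^{\triangleline}) = \kappa$.

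Third, I would establish consistency of $\hat\kappa$. Condition (B1) implies $\vect{\hat d}^{\triangleline} \to \vect{d}^{\triangleline}$ in probability, so with probability tending to $1$ the estimate lies in $U$ and $\hat\kappa = \kappa(\vect{\hat d}^{\triangleline})$ is uniquely defined. Continuity of $\kappa(\cdot)$ and the continuous mapping theorem give $\hat\kappa \to_P \kappa$. For the asymptotic distribution, I would then expand
\begin{equation*}
0 \;=\; g(\hat\kappa, \vect{\hat d}^{\triangleline}) \;=\; g(\kappa, \vect{d}^{\triangleline}) + \partial_\kappa g(\kappa, \vect{d}^{\triangleline})(\hat\kappa - \kappa) + \nabla_d g(\kappa, \vect{d}^{\triangleline})^\top(\vect{\hat d}^{\triangleline} - \vect{d}^{\triangleline}) + R_n,
\end{equation*}
where $R_n$ is the second-order Taylor remainder, bounded by $C(\|\hat\kappa - \kappa\|^2 + \|\vect{\hat d}^{\triangleline} - \vect{d}^{\triangleline}\|^2) = o_P(r(n)^{-1})$ by consistency and (B1). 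Solving gives
\begin{equation*}
r(n)(\hat\kappa - \kappa) \;=\; -\bigl(\partial_\kappa g(\kappa, \vect{d}^{\triangleline})\bigr)^{-1} \nabla_d g(\kappa, \vect{d}^{\triangleline})^\top \, r(n)(\vect{\hat d}^{\triangleline} - \vect{d}^{\triangleline}) + o_P(1),
\end{equation*}
and applying (B1) together with Slutsky yields the stated normal limit with variance $(\partial_\kappa g)^{-2}\,(\nabla_d g)^\top \Sigma (\nabla_d g)$. The consistency statement under the weaker (B1) hypothesis follows by the same Taylor expansion without invoking the CLT.

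The main obstacle is the non-degeneracy step: verifying $\partial_\kappa g \neq 0$ and linking it cleanly to the geometrically interpretable assumption (B3). Everything downstream is a standard delta-method exercise, but (B3) has to be translated into the algebraic invertibility required by the implicit function theorem, and one must also check that the remainder $R_n$ is genuinely $o_P(r(n)^{-1})$ rather than just $o_P(1)$, which relies on the smoothness established via (B2).
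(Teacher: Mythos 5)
Your proposal is correct and follows essentially the same route as the paper: both establish non-degeneracy of $\partial_\kappa g$ by linking it to $d_{xm}'(\kappa)$ via (B3) (you do this through implicit differentiation, the paper through an explicit rewrite of $g$ yielding $\partial_\kappa g = 2 d_{xm}'(\kappa)\sin(d_{xm}\sqrt{\kappa})/\sqrt{\kappa}$), then invoke the implicit function theorem to obtain a $C^1$ map $d \mapsto \kappa(d)$ and apply the delta method with (B1). The only cosmetic difference is that you make the first-order Taylor expansion of $g$ and remainder control explicit, whereas the paper applies the delta method directly to the implicit function $\tilde f$; these are equivalent.
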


The proof is found in the supplementary materials in Section~\ref{sec:asymptotic_normality} and is an application of the implicit function theorem together with the delta method. Assumption (B1) is mild as it only requires asymptotic normality of the distance estimator. Assumption (B2) is trivial since it simply requires that the true distances are not greater than the maximum allowable distances on the sphere of curvature $\kappa$. Assumption (B3) tends to hold unless the three points $x, y, z$ are collinear. For a more in-depth discussion of the non-decreasing property of the midpoint, see Lemma~\ref{lem:non_decreasing_midpoint_length} in the appendix. In Section~\ref{sec: Latent Distance Estimation}, we illustrate the asymptotic normality of a distance estimator based on cliques.  We next discuss the bias associated when a set of pairwise distances are observed, where no point is necessarily a midpoint of another pair of points.

In general, rather than distances from a triangle median, we may only observe distances in the form of a distance matrix $\vect{D} \in \mathbb{R}^{K \times K}$. There might not be a midpoint between two other points within this set. In this case, given a triangle, we can use a point nearby to the midpoint as a \textit{surrogate midpoint} $m'$ taking the role of a midpoint between $y, z$.

We let $\kappa'$ be the solution to equation~\eqref{eq: curvature estimate} where $d_{xm'}$ takes the place of $d_{xm}$. In this case, we can approximate the bias of the curvature estimate using a Taylor series expansion.
\al{
    0 &= g(\kappa', \vect{d}^{\triangleline '}) - g(\kappa, \vect{d}^{\triangleline}) \\
    &= \nabla_{\kappa}g(\kappa, \vect{d}^{\triangleline})(\kappa - \kappa') + \nabla_{d_{xm}}g(\kappa, \vect{d}^{\triangleline})(d_{xm} - d_{xm'}) + o(|\kappa - \kappa'| + |d_{xm} - d_{xm'}|) \\ 
    \implies |\kappa - \kappa'| &\approx (\nabla_{\kappa}g(\kappa, \vect{d}^{\triangleline}))^{-1}\nabla_{d_{xm}}g(\kappa, \vect{d}^{\triangleline})|d_{xm} - d_{xm'}| \\
    &\leq (\nabla_{\kappa}g(\kappa, \vect{d}^{\triangleline}))^{-1}\nabla_{d_{xm}}g(\kappa, \vect{d}^{\triangleline})d_{mm'}.
}
Therefore, the bias will scale approximately linearly as a function of $d_{mm'}$ for small values of $d_{mm'}$.

If four points are within a manifold of constant curvature $x, y, z, m' \in \M^p(\kappa)$, then using the curvature value $\kappa$, one can compute the distance from the midpoint $m$ to $m'$ by letting $m'$ take the place in equation~\eqref{eq: midpoint_curvature_equation} and solving for $d_{mm'}$. As such, we denote this $d_{mm'}(\kappa) := d_{m' m}(\kappa; \vect{d}^{\triangle}(y, z, m'))$ as a function of the curvature. Using the triangle inequality, we can then upper and lower bound the curvature by replacing $d_{xm}$ with upper and lower bounds in equation~\eqref{eq: midpoint_curvature_equation} and solving the corresponding equations. We illustrate this in Theorem~\ref{thm:curve_bounds}.

For a given $\kappa$, let $\vect{d^{\triangleline, +}}(\kappa) = (d_{xy}, d_{xz}, d_{yz}, d_{xm'} + d_{mm'}(\kappa))$ and $\vect{d^{\triangleline, -}}(\kappa) = (d_{xy}, d_{xz}, d_{yz}, d_{xm'} - d_{mm'}(\kappa))$.

\begin{theorem}[Curvature Bounds] \label{thm:curve_bounds}
    Let $x, y, z, m' \in \M^p(\kappa)$. Then let $d_{jk}$ denote the distance between points $j, k \in \{x, y, z, m'\}$. Let $\kappa_u$ and $\kappa_l$ denote the solutions 
    \[
        g(\kappa_u, \vect{d^{\triangleline, +}}(\kappa_u)) = 0 , \quad g(\kappa_l, \vect{d^{\triangleline, -}}(\kappa_l)) = 0
    \]
    then $\kappa_{l} \leq \kappa \leq \kappa_{u}$.
\end{theorem}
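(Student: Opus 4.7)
The plan is to combine the triangle inequality on the Riemannian manifold with the monotonicity of the implicit midpoint length in the curvature. First, I invoke the triangle inequality in $\M^p$ applied to $x$, the true midpoint $m$, and the surrogate $m'$:
\begin{equation*}
d_{xm'} - d_{mm'} \leq d_{xm} \leq d_{xm'} + d_{mm'},
\end{equation*}
where $d_{xm}$ and $d_{mm'}$ are the (unobserved) true geometric distances. Applying Theorem~\ref{thm:midpoint_curve_equation} to the triangle $(m', y, z)$ with $m$ as the midpoint of side $yz$ identifies the true $d_{mm'}$ with the value of the implicit function $d_{mm'}(\kappa)$ evaluated at the true curvature.

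Next, by Lemma~\ref{lem:non_decreasing_midpoint_length}, for fixed side lengths the map $\tilde\kappa \mapsto d_{xm}(\tilde\kappa; \vect{d}^{\triangle}(x, y, z))$ is non-decreasing. Equivalently, the inverse map $\tilde d \mapsto \Lambda(\tilde d)$ sending a candidate median length $\tilde d$ to the curvature that satisfies $g(\tilde\kappa, (d_{xy}, d_{xz}, d_{yz}, \tilde d)) = 0$ is non-decreasing in $\tilde d$. For the upper bound, $\kappa_u$ is a fixed point of $\Psi_+(\tilde\kappa) := \Lambda(d_{xm'} + d_{mm'}(\tilde\kappa))$. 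Evaluating at the true curvature, the triangle bound $d_{xm'} + d_{mm'}(\kappa) \geq d_{xm}$ together with monotonicity of $\Lambda$ gives $\Psi_+(\kappa) \geq \Lambda(d_{xm}) = \kappa$. Because $\Psi_+$ is a composition of non-decreasing maps, iterating $\Psi_+$ from $\kappa$ produces a monotone sequence bounded above by the admissible curvature ceiling from $(B2)$, which converges to a fixed point $\kappa_u \geq \kappa$.

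For the lower bound, define $\Psi_-(\tilde\kappa) := \Lambda(d_{xm'} - d_{mm'}(\tilde\kappa))$, which is non-increasing since $-d_{mm'}(\cdot)$ is non-increasing and $\Lambda(\cdot)$ is non-decreasing. The triangle inequality $d_{xm} \geq d_{xm'} - d_{mm'}(\kappa)$ yields $\Psi_-(\kappa) \leq \kappa$. The auxiliary function $h(\tilde\kappa) := \Psi_-(\tilde\kappa) - \tilde\kappa$ is continuous and strictly decreasing (non-increasing minus the identity), with $h(\kappa) \leq 0$, so any root $\kappa_l$ of $h$ must satisfy $\kappa_l \leq \kappa$, closing the lower bound.

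The main technical obstacle is ensuring that the fixed-point equations defining $\kappa_u$ and $\kappa_l$ admit well-posed solutions on the admissible curvature range (the non-monotonic behavior of $\Psi_-$ makes this more delicate than the upper-bound case). I would handle this by invoking the strict monotonicity analogue of assumption $(B3)$ from Theorem~\ref{thm:asymptotic_normality} (non-collinearity of the four points $x, y, z, m'$), which makes both $d_{xm}(\tilde\kappa; \vect{d}^{\triangle}(x,y,z))$ and $d_{mm'}(\tilde\kappa)$ strictly increasing, hence $\Psi_\pm$ strictly monotone and their fixed points unique on the relevant interval bounded by $(B2)$.
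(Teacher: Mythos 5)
Your overall architecture (triangle inequality on $x,m,m'$, the implicit median functions from Theorem~\ref{thm:midpoint_curve_equation}, and the Toponogov monotonicity of Lemma~\ref{lem:non_decreasing_midpoint_length}) matches the paper's proof, and your lower-bound argument is in fact cleaner than the paper's: $\Psi_-$ is non-increasing, $h(\tilde\kappa)=\Psi_-(\tilde\kappa)-\tilde\kappa$ is strictly decreasing, and $h(\kappa)\le 0$ forces any root below $\kappa$. That part is correct.

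There are two genuine problems with the upper bound and with the closing paragraph. First, your fixed-point iteration from $\kappa$ only produces \emph{some} fixed point $\kappa^*\ge\kappa$; it does not show that every solution of $g(\kappa_u,\vect{d^{\triangleline,+}}(\kappa_u))=0$ satisfies $\kappa_u\ge\kappa$. Since $\Psi_+$ is non-decreasing, $\Psi_+(\tilde\kappa)-\tilde\kappa$ is not monotone (non-decreasing minus strictly increasing), so the sign of $\Psi_+(\kappa)-\kappa\ge 0$ does not pin down the location of all roots. Equivalently, $G_+(\tilde\kappa)=d_{xm'}+d_{m'm}(\tilde\kappa)-d_{xm}(\tilde\kappa)$ is a constant plus a difference of two non-decreasing functions, and nothing in Lemma~\ref{lem:non_decreasing_midpoint_length} alone forces it to be non-increasing; that would require the additional comparison $d'_{m'm}(\tilde\kappa)\le d'_{xm}(\tilde\kappa)$, which you neither state nor justify. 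Second, you have the easy and hard cases swapped: you attribute the ``non-monotonic behavior'' to $\Psi_-$, but $\Psi_-$ is monotone (non-increasing) and that is exactly why the lower-bound argument closes cleanly; it is $\Psi_+$ whose fixed-point equation is delicate. Relatedly, ``strict monotonicity $\Rightarrow$ unique fixed point'' is false for increasing maps ($\Psi(x)=x$ is strictly increasing with a continuum of fixed points), so invoking a strict-monotonicity analogue of (B3) does not deliver the uniqueness you want for $\Psi_+$. To be fair, the paper's own proof is also terse at exactly this point — it appeals to monotonicity of the $d_{xm}$ functions and asserts that the solution count is ``typically'' $\{0,1\}$ without establishing the needed comparison for the upper bound — but your rewrite makes the gap sharper and then misdiagnoses which side it lives on.
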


When the surrogate midpoint and the true midpoint are the same, then $d_{mm'} = 0$ and the upper and lower bounds converge. Similar to the curvature estimate $\hat \kappa$, given a noisy estimate of the distances, we can estimate the upper and lower bounds of the curvature. In Section~\ref{sec:application_constant_curvature}, we leverage these bounds to develop a test of constant curvature. We next address the formation of surrogate midpoints arbitrarily close to the midpoints of other pairs of points.

We next provide an outline involving how fast we can expect surrogate midpoints to form. In order to do so, we first introduce a useful definition. A subset $A \subset \M^p$ is \emph{geodesically convex} if the geodesic between any two points in $A$ is contained within $A$ itself. Here, convexity on a manifold will refer to geodesic convexity. In Theorem~\ref{thm:midpoint_convergence}, we let $h_{m,3}(m_1,m_2)$  denote the joint density function of a pair of midpoints with a shared endpoint and $h_{m,4}(m_1,m_2)$  denote the joint density of two midpoints without a shared endpoint. These two densities will be functions of the unknown manifold $\M$ and the distribution of the positions $Z'$, $G_{Z'}$.

\begin{figure}[htb!]
    \centering
    \includegraphics[width = 0.45\textwidth]{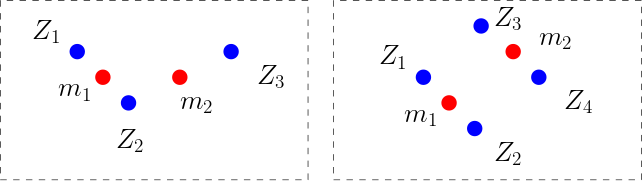}
    \caption{Left side illustrates two midpoints with a shared endpoint with joint density $h_{m,3}$, while the right side illustrates a joint density with no shared endpoints $h_{m,4}$. Endpoints, i.e., sampled positions of $Z_i$ are shown in blue while midpoints of the pairs are shown in red.}
    \label{fig:3 vs 4 point midpoints}
\end{figure}

\begin{theorem}\label{thm:midpoint_convergence}
    Suppose that $Z'_i \stackrel{iid}{\sim} H_{Z'}$ are $K$ points sampled iid from a distribution on a simply connected manifold $\M^p$. Denote this set of points $\{Z'_i\}_{i = 1}^K = \D_K$. Suppose there exists a convex region $A$ for which 
    \begin{itemize}
        \item (C1) $h(z) \geq \alpha > 0, \text{ for all } z \in A$
        \item (C2) $\dim(A) = p \geq 2$
        \item (C3) $h_{m,3}(m_1, m_2), h_{m,4}(m_1, m_2) \leq \alpha_m < \infty$
    \end{itemize}
    for all $z \in A$ where $f$ is the density function corresponding to $G_Z$. Let $m(y,z)$ denote the midpoint between two points $y$ and $z$. Define the statistic 
    \[
    \Phi(\D_{K}) := \min_{x,y,z \in \D: x \not= y, x \not= z, y \not= z} d\Bigl( m(y,z), x \Bigr)
    \]
    which is the minimum distance from an observed point to the midpoint of another pair of points. 
    Then 
    \begin{equation}
        \Phi(\D_{K}) = \OO_P(K^{-3/p}). 
    \end{equation}
\end{theorem}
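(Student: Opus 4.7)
The plan is to use a second-moment (Paley--Zygmund) argument on the number of ``close'' triples. For a threshold $\tau > 0$, I define
\[
N_\tau \;=\; \sum_{i=1}^K \sum_{\{j,k\}: j \neq k,\ j,k \neq i} \mathbf{1}\{d(Z'_i, m(Z'_j, Z'_k)) < \tau\},
\]
and observe that $\Phi(\D_K) \geq \tau$ if and only if $N_\tau = 0$. Chebyshev's inequality gives $\Prb(N_\tau = 0) \leq \V(N_\tau)/\E[N_\tau]^2$, so the whole argument reduces to showing that for $\tau = C K^{-3/p}$ this ratio can be made arbitrarily small by choosing $C$ sufficiently large, uniformly in $K$.

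For the first moment, I restrict to a sub-region $A' \subset A$ whose $\tau$-neighborhood in $\M^p$ lies inside $A$; such an $A'$ exists for small $\tau$, and geodesic convexity of $A$ ensures that $Z'_j, Z'_k \in A'$ implies $m(Z'_j, Z'_k) \in A$. Using (C1) and the lower bound in (A2),
\[
\Prb(Z'_i \in B(\tau, m(Z'_j, Z'_k)) \mid Z'_j, Z'_k \in A') \;\gtrsim\; \tau^p,
\]
and combining this with a uniform lower bound on $\Prb(Z'_j, Z'_k \in A')$ yields $\E[N_\tau] \gtrsim K^3 \tau^p$.

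For the variance, I partition ordered pairs of triples $(T, T')$ by the size of the overlap $s = |\{i,j,k\} \cap \{i',j',k'\}|$. The case $s = 0$ contributes zero covariance by independence of the underlying $Z'$'s; the case $s = 3$ contributes a single copy of $\E[N_\tau] \lesssim K^3 \tau^p$. For $s = 1$ and $s = 2$ I enumerate sub-configurations according to the roles of the shared indices (distinguished vertex versus midpoint endpoint). In each sub-configuration I rewrite the joint probability as a multiple integral over the two midpoints and the relevant distinguished vertices, and bound each ``close-to-midpoint'' factor by $\tau^p$ via (A2) and the boundedness of $h$. When the same $Z'_i$ appears as the distinguished vertex in both events, so that one factor of $\tau^p$ is apparently lost, I recover it by noting that the two midpoints are forced within distance $2\tau$ of each other and integrating this constraint against the joint midpoint density $h_{m,3}$ or $h_{m,4}$ from (C3). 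This gives $O(\tau^{2p})$ in every sub-configuration, hence
\[
\V(N_\tau) \;\lesssim\; K^3 \tau^p + K^4 \tau^{2p} + K^5 \tau^{2p}.
\]
Setting $\tau = C K^{-3/p}$ so that $K^3 \tau^p = C^p$, one obtains $\V(N_\tau)/\E[N_\tau]^2 \lesssim C^{-p} + K^{-2} + K^{-1}$, which tends to zero as $K \to \infty$ and then $C \to \infty$; the definition of $\OO_P$ then gives $\Phi(\D_K) = \OO_P(K^{-3/p})$.

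The main technical obstacle is the variance case analysis, specifically the sub-configurations where the two indicator events share their distinguished vertex $Z'_i = Z'_{i'}$. A naive bound treats ``$Z'_i$ close to $m_1$ and close to $m_2$'' as a single codimension-$p$ condition on $Z'_i$ and loses a factor of $\tau^p$. The remedy uses (C3) exactly as stated: the event forces the two midpoints to lie within $2\tau$ of each other, which is a codimension-$p$ event under the bounded joint density $h_{m,3}$ (when the midpoints share an endpoint) or $h_{m,4}$ (when they do not), restoring the missing factor. Once each sub-configuration is reduced to $\tau^{2p}$, the combinatorics together with the locally Euclidean estimate (A2) complete the proof.
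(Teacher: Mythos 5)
Your proof is correct in outline and takes a genuinely different route from the paper's. The paper proceeds by recursive conditioning: it introduces a ``good event'' $G$ (that the medians of the pairwise inter-midpoint distances stay bounded below), bounds the conditional probability that each newly sampled point avoids all existing midpoint neighborhoods, multiplies these bounds down the recursion to get the $\exp(-c K^3 t^p)$ term, and then controls $\Prb(G^c)$ with a separate lemma (the copula result, Theorem~\ref{thm:copula_theorem}) on medians of arbitrarily correlated but identically distributed random variables. You instead run a second-moment (Chebyshev/Paley--Zygmund) argument on the counting variable $N_\tau$, organizing the covariance sum by overlap size $s$. This is more elementary: it bypasses the copula lemma entirely, and the cubic growth $\E[N_\tau]\asymp K^3\tau^p$ makes the $K^{-3/p}$ rate transparent. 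Your use of (C3) is structurally parallel to the paper's --- both invoke the bounded joint midpoint densities $h_{m,3},h_{m,4}$ to control the event that two midpoints are simultaneously close --- but you deploy it in the shared-distinguished-vertex covariance terms rather than in $\Prb(G^c)$.

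Two points in the variance analysis should be tightened before this counts as a complete proof. First, your sketch treats the shared-distinguished-vertex sub-configuration as the only one that does not decompose into two independent $\tau^p$ factors, but the ``role-swap'' configuration, $T=(i,\{j,k\})$ and $T'=(j,\{i,k'\})$ with $k\neq k'$, also couples the two indicator constraints: integrating $z_i$ over $B(\tau,m_1)$ and then $z_j$ over $B(\tau,m_2)$ fails naively because $m_2$ depends on $z_i$. Your ``rewrite over midpoints and distinguished vertices'' does still give $\tau^{2p}$ here, but only after the change of variables $(z_k,z_{k'})\mapsto(m_1,m_2)$ for fixed $(z_i,z_j)$, using that the midpoint map is a local diffeomorphism with Jacobian bounded above and below; on a curved manifold this needs (A2), the injectivity radius, and compactness of the relevant region, and it should be written out, since it is precisely the sub-case your enumeration does not explicitly address. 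Second, the bound $\Prb(Z'_i\in B(\tau,m))\lesssim\tau^p$ that you use in every easy sub-configuration and in the $s=3$ diagonal implicitly requires $h$ to be bounded \emph{above} near $A$; conditions (C1)--(C3) only give a lower bound on $h$ and upper bounds on the midpoint densities, and $h_{m,3},h_{m,4}$ bounded does not imply $h$ bounded. The paper's argument uses only the lower bound in (C1) for its first-moment-type step, so strictly speaking your route asks for marginally more regularity than the stated hypotheses. Both gaps are easy to fill, but as written the proposal does not yet justify the uniform $\OO(\tau^{2p})$ covariance bound that the variance computation rests on.
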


Our approach to demonstrating the above result draws on the work of \citet{Cai2013DistributionsSpheres} and \citet{Brauchart2015RandomSeparation}, who discuss the convergence of the minimum distance between any two points sampled uniformly on a hypersphere. These authors show that \(\min_{i,j} d(Z_i, Z_j) = \OO_P(K^{-2/p})\), and they also derive an exact distribution for \(\min_{i,j} d(Z_i, Z_j)\) under the assumption of uniformity on the sphere. They use a technique that recursively computes the probability that a point is at least a radius \(\epsilon\) away from the previous \(K\) points. In our approach, at each placement of a new point, there are \(\binom{K}{2}\) midpoints instead of \(K\) current points, leading to the faster rate we observe.

Assumption (C1) ensures the existence of geodesically convex regions for which midpoints can form. Assumption (C2) ensures that this region has the same dimension as the ambient space. Lastly, Assumption (C3) is relatively mild as long as we have a smooth manifold and a continuous density. The proof is detailed in Section~\ref{sec:proof_midpoint_convergence} of the supplementary materials and relies on a result regarding medians of arbitrarily correlated random variables, which may be of independent interest (Theorem~\ref{thm:copula_theorem}).

\subsection{Reducing Bias and Variance in Curvature Estimation} \label{sec:practical_reference}
When given an estimate of a distance matrix and to later apply our method, as in many statistical problems, we are concerned with the bias and variance of our estimator. The variance, in general, will be driven by the shape of the triangle used to estimate the curvature, while the bias will be driven by the closeness of the surrogate midpoint to the true midpoint of a triangle. We first visualize this phenomenon and then follow up by providing some practical strategies for constructing good estimators.

We can visualize the theoretical variance by using our asymptotic result in Theorem~\ref{thm:asymptotic_normality}. In Figure~\ref{fig:theoretical_variance}, we plot the variance of the curvature estimate for a distance estimator with identity variance (i.e., $\vect{\hat d}^{\triangleline} \sim N(\vect{d}^{\triangleline 0}, I_{4 \times 4})$) for a variety of choices of $\kappa$. In this plot, we change the position of the vertex $x$ of the triangle (which we may also refer to as the reference point). The smallest variance reference points across all of these curvatures are the ones that form nearly equilateral triangles with the points $y, z$. Additionally, the variance of the estimator tends to be larger in a given reference location as the curvature $\kappa$ decreases. We plot all reference points $x$ within a ball of radius 2.  Secondly, we illustrate, for an equilateral triangle, the bias of the estimate of the curvature when moving the location of the surrogate midpoint $m'$. The spherical and hyperbolic spaces are shown using a projection where the distance to the center and relative angle are mapped onto Euclidean space. 

\begin{figure}[htb!]
    \centering
    \includegraphics[width=0.45\textwidth]{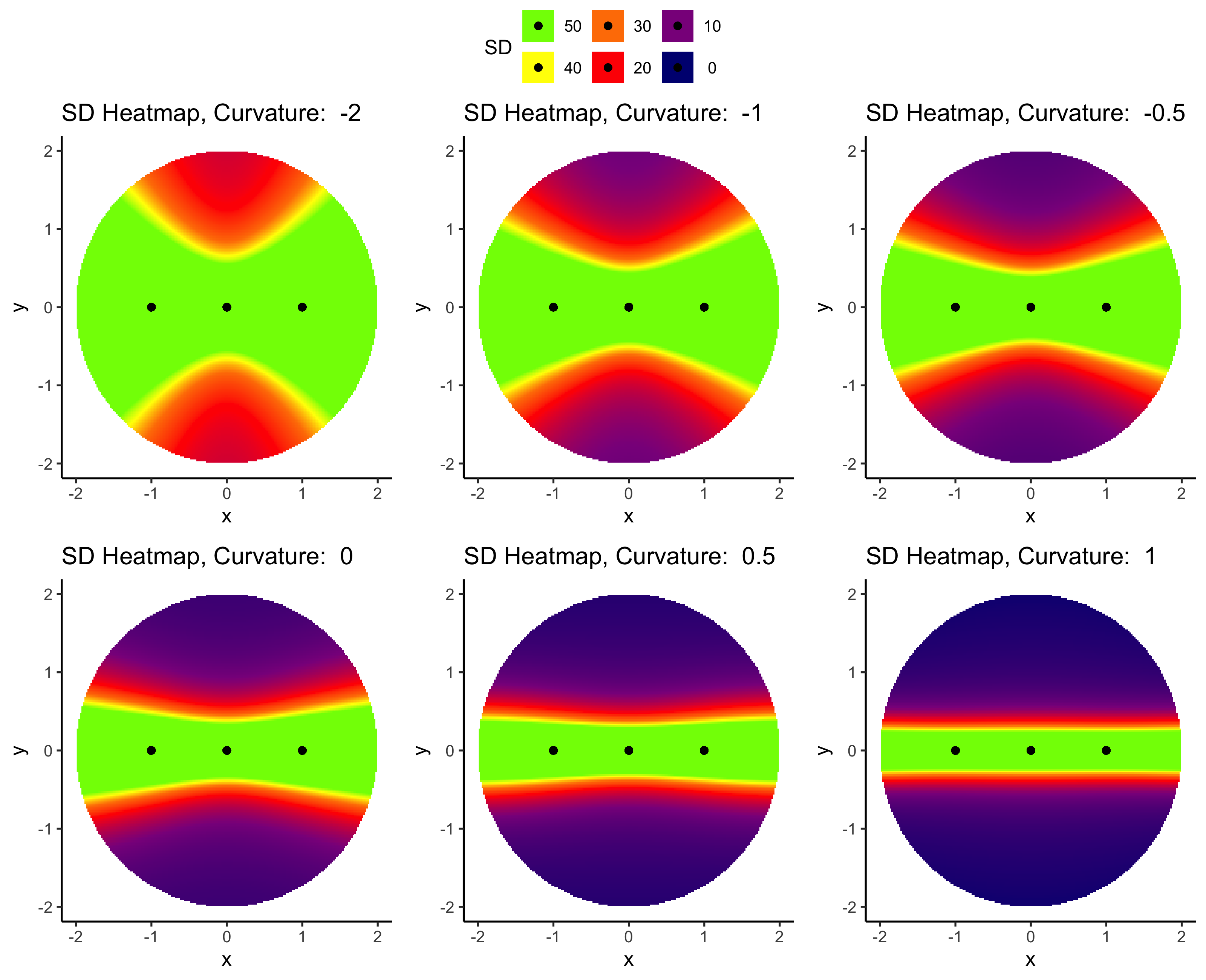}
    \caption{Variance as a function of $x$ position. True points $(y, m, z)$ in black.}
    \label{fig:theoretical_variance}
\end{figure}
\begin{figure}[htb!]
    \centering
    \includegraphics[width=0.45\textwidth]{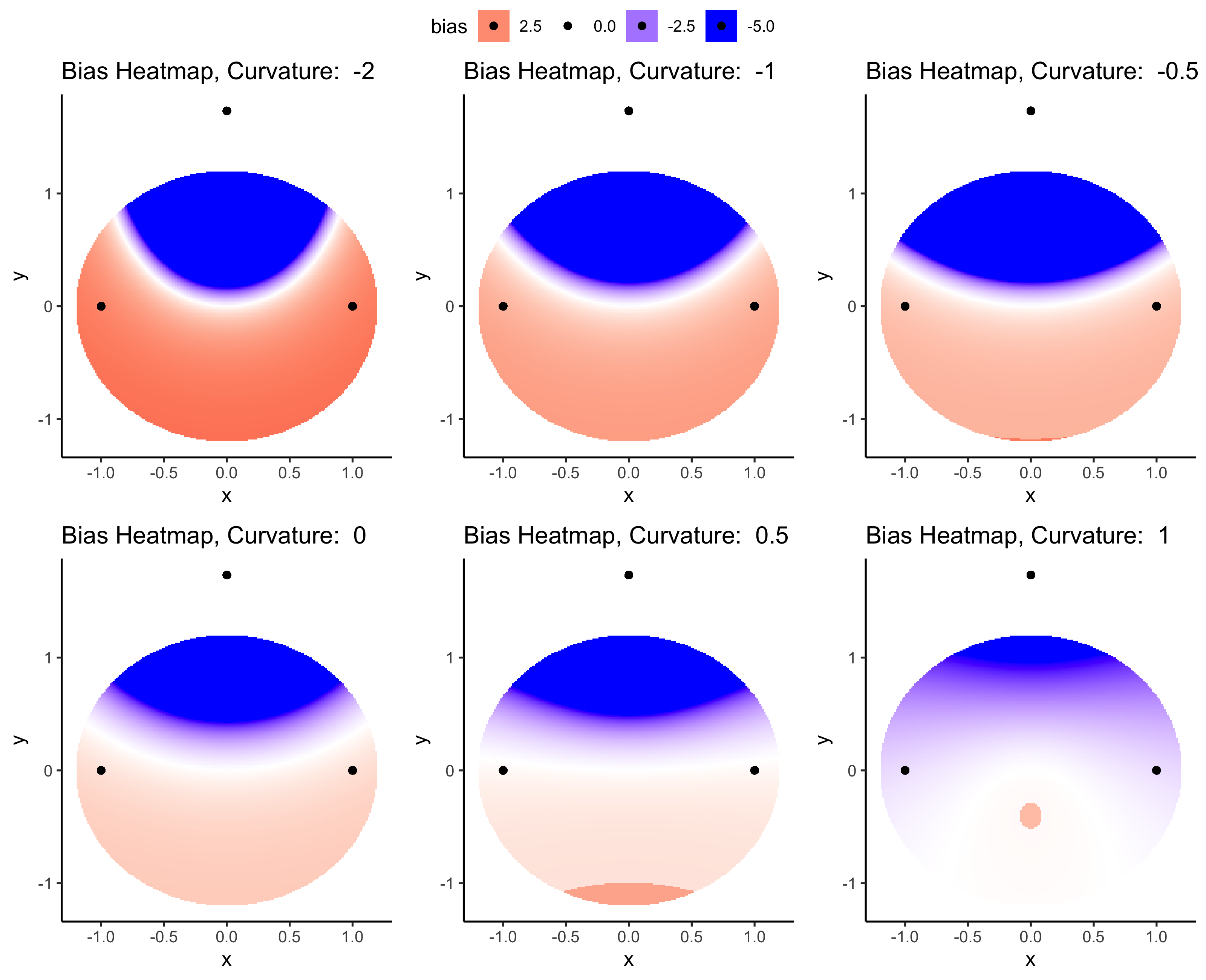}
    \caption{Bias as a function of surrogate midpoint $m'$ position. True points $(x, y, z)$ in black. For visualization purposes, the bias is capped at a magnitude of 5.}
    \label{fig:theoretical_bias}
\end{figure}

We next illustrate some practical choices to minimize the bias and variance of a curvature estimate from a distance matrix $\vect{\hat D}$.

\textbf{Choosing the Best Midpoint.} In practice, we would like to search over the space to find the best midpoint between two other points in the distance matrix. The midpoint between any two points $y, z$ is also known as the Fr\'echet mean, which is defined as follows:

\[
    m^* = \argmin_{m \in \M^p} d^2_{ym} + d^2_{zm}.
\]

 In practice we search over the space of candidate entries of a distance matrix $D$ to find the best midpoint available. We also want to ensure that the points $y, z$ are not too close to each other since this tends to lead to a high variance estimator. A reasonable option is to normalize this quantity by $d_{yz}$. Given a true midpoint $m^*$, then $d_{ym^*} = d_{zm^*} = \frac{1}{2}d_{yz}$, then 
\begin{equation} \label{eq:midpoint_obj}
    \frac{d^2_{ym^*} + d^2_{zm^*}}{d^2_{yz}} = \frac{1}{2}.
\end{equation}
We can also add a term $\frac{|d_{ym} - d_{zm}|}{d_{yz}}$ which aids in balancing the lengths of the distance to each point. 

Therefore, to compute the best surrogate midpoint set, we solve the following problem:
\begin{equation} \label{eq:surrogate_midpoint_problem}
    \hat y, \hat z, \hat m = \argmin_{y, z, m} \left( \frac{d^2_{ym} + d^2_{zm}}{d^2_{yz}} + \frac{|d_{ym} - d_{zm}|}{d_{yz}} \right).
\end{equation}

In cases where we are interested in measuring the curvature across multiple surrogate midpoint sets, we remove this set of points and solve equation~\eqref{eq:surrogate_midpoint_problem} for the remaining indices to construct a collection of surrogate midpoint set $\{(y^{(j)}, z^{(j)}, m^{(j)})\}_{j = 1}^J$.

\textbf{Selecting the Best Triangles.} Given a surrogate midpoint set $(y, m', z)$, we seek to find choices for the reference point $x$ which provide the lowest variance. A good rule of thumb is to search for triangles $x, y, z$ that are nearly equilateral.

We exploit this by considering a scaled version of the triangle inequality. Let $C_{\triangle} \in [1, 2]$ be a constant that determines the flatness of the allowed triangles $x, y, z$. Then we select only the $x$ such that
\begin{equation} \label{eq:triangle_tuning}
     d_{ij} + d_{jk} \geq C_{\triangle} d_{ik} \quad \forall (i, j, k) \in (x, y, z).
\end{equation}

If one believes that the curvature is constant across a surrogate midpoint set, we can estimate a single curvature by taking the median across the values of $x$. This tuning parameter allows us to pick the triangles closest to equilateral, which tend to give the best estimates of the curvature. Letting $C_{\triangle} = 1$ allows for all triangles, no matter how flat they are, and $C_{\triangle} = 2$ will only permit exact equilateral triangles. Setting $C_{\triangle}$ too large results in no triangles being found, and setting $C_{\triangle}$ too small will result in using triangles that are very flat, often suffering from high variance in the corresponding $\hat \kappa$ estimates. In practice, we find that a value of $C_{\triangle} \in [1.05, 1.7]$ is effective, and a good default choice is $1.3$.

From an estimated distance matrix $\vect{\hat D}$, we let  $\vect{X}_{\vect{\widehat D}, C_{\triangle}} = \{x : \text{ eq}~\eqref{eq:triangle_tuning}\}$ denote the set of reference points used for curvature estimation. We let $\hat \kappa_{x} = \kappa(\vect{\hat D}, \vect{\hat d}^{\triangleline}(x, y, z, m'))$, where $\vect{\hat d}^{\triangleline}(x, y, z, m')$ denotes the estimated distances of the triangle $(x, y, z)$ with surrogate midpoint $m'$. Let $\hat \kappa_{med} = \text{median}\{\hat \kappa_{x}\}$. Given $\vect{X}$, we can construct the median estimator $\hat \kappa_{med, \vect{X}}$ of the curvature in equation~\eqref{eq:median_kappa}.

\begin{equation} \label{eq:median_kappa}
    \hat \kappa_{med, \vect{X}} = \text{median}_{x \in \vect{X}} \kappa(\vect{\hat D}, y, z, m, x)
\end{equation}

We now turn to estimating the distance matrix $\vect{\hat D}$ in a latent distance model.


\subsection{Distance Estimation in Latent Distance Models for Social Networks} \label{sec: Latent Distance Estimation}
We consider the random undirected network corresponding to a graph $G = (V, E)$ where $|V| = n$, with adjacency matrix $A \in \{0,1\}^{n \times n}$ such that $A_{ij} = 1$ iff $(i, j) \in E$. We motivate our method through the latent distance model of network formation \cite{Hoff2002latent}. This is one specific choice of distance model that we will focus on, but it is not the only option.  In this model, locations $Z_i, Z_j$ are most generally equipped to some metric $d(\cdot, \cdot)$, forming a metric space, $\mathfrak{M}$. As before we consider the metric space generated from a manifold $\mathcal{M}^p$ of dimension $p$, where the probability of forming an edge is inversely proportional to the distance on the latent metric, $d_{\M}(Z_i, Z_j)$,

\begin{equation*}
  \Lambda\Biggl(P\biggl(A_{ij} = 1\biggr)\Biggr) \propto - d_{\M}(Z_i, Z_j)
\end{equation*}

where $\Lambda(\cdot)$ is a link function. Where convenient, we condense $d_{\M}(Z_i, Z_j)$ to $d_{ij}$ for brevity to denote the distance between two indices $i$ and $j$. A specific form we consider, is to include random effects (representing node level-gregariousness, or a degree correction), is as follows:

\begin{equation} \label{eq:ld_model}
    P(A_{ij} = 1|\nu, Z) = \exp\biggl(\nu_i + \nu_j - d_{\mathfrak{M}}(Z_i, Z_j)\biggr)
\end{equation}

where 
\[
    Z_i \sim_{iid} F_Z \quad \nu_i \sim_{iid} F_{\nu}
\]
with a corresponding measures $F_\nu$ having support on the non-positive real line $\text{supp}(F_\nu) \subset (\infty, 0]$. The latent position measure $F_Z$ has support on some the latent manifold $\M^p$. Here $iid$ refers to sampling identically and independently from the latent distribution. 

This model relates the latent distances of the generative model to the probability of a connection between two points. Although not discussed in this paper, simple generalizations can be derived for directed networks. This paper aims to estimate the curvature of $\M^p$. The $\exp(\cdot)$ link function is used as an example for ease of explanation in deriving the localization of positions in the latent space (this will be further discussed later in this section). However, in the supplementary materials (Sections~\ref{sec: Other Link Functions}, \ref{sec: Alternative Estimators of the Distance Matrix}), we discuss the extension to the $\expit(\cdot)$ model, which was the original link function proposed by \cite{Hoff2002latent}, as well as additional link functions. As we illustrate in Section~\ref{sec: Latent Distance Estimation}, this link function will be convenient for describing the formation and localization of latent positions within cliques, which are useful for estimating latent distances.

In this section, we illustrate a procedure for estimating a distance matrix based on latent positions in a latent distance model. We improve upon the results of \citet{lubold2023identifying} to construct an asymptotically normal distance estimator using \emph{cliques}, fully connected subgraphs of the graph $G$. This approach has two advantages. First, since cliques represent multiple nodes, we have multiple opportunities for connections between two cliques, meaning we can use the fraction of realized ties between two cliques as an estimate for the probability of connection between the two cliques. Second, cliques correspond to points near each other in space. As the size of the clique grows, the maximal distance between latent positions must be small (in fact, the maximum distance between nodes in a clique will converge exponentially fast with respect to the size of the clique). This means that we can consider cliques as "points" on the manifold that can be used to identify a distance matrix. We visualize this in Figure~\ref{fig:example_latent_position_localization}.

\begin{figure}[htb!]
    \centering
    \includegraphics[width=0.28\textwidth]{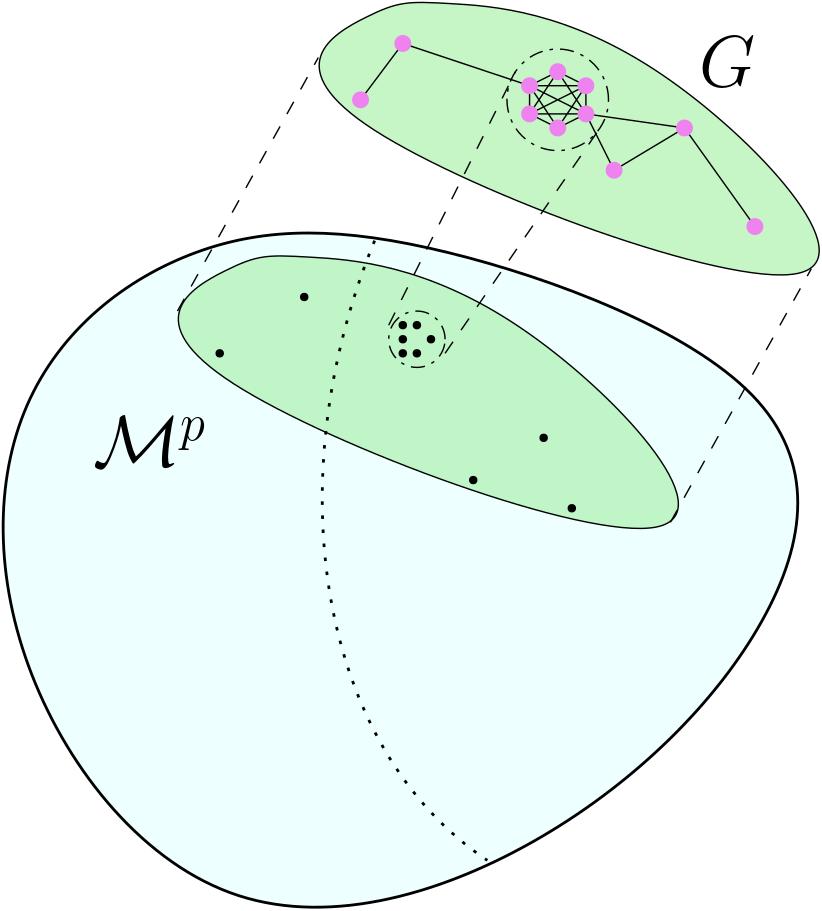}
    \caption{Illustration of the localization of the latent positions within a clique. Nodes are shown in magenta, while the position on the latent space is shown in black.}
    \label{fig:example_latent_position_localization}
\end{figure}

We discuss asymptotics for a fixed number of cliques ($K$) of size $\ell$ where the clique size grows, $\ell \to \infty$. In the appendix (Section~\ref{sec: additional discussion on the latent distance model}), we discuss the rate of clique formation as well as other alternatives to the clique-based estimators, which can be tailored to sparse graphs. Under this model, as $n \to \infty$, the expected number of cliques of size $\ell = \OO(n^{1/(p + 2) - \epsilon})$ for $\epsilon > 0$ grows to infinity as well. Since the formation of cliques requires nodes to be extremely close to one another in the latent space, the primary driver is the intrinsic dimension of the latent space $p$, rather than the curvature. A formal statement of this is found in the appendix (Section~\ref{sec: rates of clique formation}).

An important aspect we consider is the maximum radius of a set of positions conditional on a clique. Unless latent positions are nearly (or exactly) in the same location, large cliques are rare in latent distance models. Lemma~\ref{lem:clique_localization_rate} illustrates a rate of convergence of these latent locations relative to the size of a clique.  Additionally, the nodes which form cliques will also have random effects approaching $0$, Lemma~\ref{lem:randeff_consistency}. 

Under this model, the nodes within a clique have nearly the same latent position. For illustration purposes, we first assume that this holds and later illustrate the rate at which the diameter of the set of latent positions within a clique converges. Let $X, Y \in \{1, 2, \dots, n\}$ denote sets of nodes representing non-overlapping cliques. Then the average probability of connection can be used to identify the latent distance if we can also estimate the average of random effects ($\nu$). Let $|W|$ denote the size of $W \in \{X, Y\}$.

\begin{lemma}\label{lem:clique_localization_rate}
    
    Assume the latent distance model as in equation~\eqref{eq:ld_model} and let $C_\ell$ denote the event that a collection of nodes indexed by $i \in  \{1, \dots, \ell\}$ form a clique of size $\ell$. 

    Let $\mu_d := \E[d(Z_i,Z_j)]$ denote the average distance between any two sampled latent positions $Z_i$ and $Z_j$ which are sampled independently from $F_Z$.  If this latent density satisfies the following condition
    \begin{itemize}
        \item (D1) $F_Z$ admits a continuous density ($f_Z$) 
    \end{itemize}
    then for any $0 < \tilde \mu_d < \mu_d$,  
    \al{
        \max_{i,j} d(Z_i,Z_j) |C_\ell = o_P(\exp(-\tilde \mu_d\ell/p)).
    }
    
\end{lemma}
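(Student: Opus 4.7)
The plan is to use Bayes' rule to obtain a clean posterior on the latent positions given the clique event, then control the tail of the diameter by a ratio of expectations. First I integrate out the random effects. Under~\eqref{eq:ld_model}, each $\nu_i$ appears in exactly $\ell-1$ terms of the clique likelihood, so $P(C_\ell\mid Z,\nu) = \exp\bigl((\ell-1)\sum_i\nu_i - \sum_{i<j}d(Z_i,Z_j)\bigr)$. Since the $\nu_i$ are i.i.d.\ from $F_\nu$ and independent of $Z$, this integrates to $P(C_\ell\mid Z) = [M_\nu(\ell-1)]^\ell \exp\bigl(-\sum_{i<j}d(Z_i,Z_j)\bigr)$, where $M_\nu(t)=\E[e^{t\nu}]$. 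The $M_\nu$-factor does not depend on $Z$ and cancels in any normalization, so Bayes' rule gives the clean posterior
\[
p(Z_1,\ldots,Z_\ell\mid C_\ell)\ \propto\ \exp\Bigl(-\sum_{i<j}d(Z_i,Z_j)\Bigr)\prod_{i=1}^\ell f_Z(Z_i).
\]

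Next, for any $\delta>0$, I write
\[
P\bigl(\max_{i,j}d(Z_i,Z_j)>\delta\,\bigm|\,C_\ell\bigr) \;=\; \frac{\E_Z\bigl[e^{-\sum d_{ij}}\,\mathbb{1}\{\max d_{ij}>\delta\}\bigr]}{\E_Z\bigl[e^{-\sum d_{ij}}\bigr]},
\]
and the task reduces to lower-bounding the denominator and upper-bounding the numerator. For the denominator, I pick a point $z_0$ where the continuous density $f_Z$ is strictly positive (possible under (D1)); combining continuity of $f_Z$ with the locally Euclidean property (A2) yields $P(Z\in B(z_0,r))\ge c_0 r^p$ for small $r$ and a constant $c_0>0$. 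Restricting to configurations with all $Z_i\in B(z_0,r)$ bounds $\sum d_{ij}\le\binom{\ell}{2}\cdot 2r$, giving $\E[e^{-\sum d_{ij}}]\ge (c_0 r^p)^\ell\exp(-\ell^2 r)$, which I then optimize in $r$. For the numerator, the triangle inequality yields $d(Z_a,Z_k)+d(Z_b,Z_k)\ge d(Z_a,Z_b)>\delta$ for any witness pair $(a,b)$ certifying $\max d_{ij}>\delta$ and any other index $k$, so $\sum_{i<j} d_{ij}\ge(\ell-1)\delta$; a union bound over the $\binom{\ell}{2}$ possible witness pairs produces $\E[e^{-\sum d_{ij}}\mathbb{1}\{\max d>\delta\}]\le\binom{\ell}{2}\exp(-(\ell-1)\delta)$.

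The main obstacle is sharpening these estimates enough to extract the exponential rate $\delta=\exp(-\tilde\mu_d\ell/p)$, since a single-shot triangle inequality on $\sum d_{ij}$ is too coarse when $\delta$ is itself already exponentially small. The cleaner route is an inductive per-node argument: conditional on $\ell-1$ of the nodes being localized near a common reference $z_0$, the conditional density of the $\ell$th position satisfies $p(Z_\ell\mid\cdot)\propto\exp\bigl(-(\ell-1)\,d(z_0,Z_\ell)\bigr)f_Z(Z_\ell)$, a Laplace-type integral whose mass on a ball of radius $\epsilon$ is governed by the balance between the exponential weight of rate $\mu_d$ (the marginal average pair distance under $F_Z$) and the local volume growth $\mathrm{Vol}(B(z_0,\epsilon))\sim\epsilon^p$ from (A2). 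Balancing these two yields a per-node localization at rate $\exp(-\tilde\mu_d/p)$ for any $\tilde\mu_d<\mu_d$. Iterating across all $\ell$ nodes compounds the per-node rate to give diameter shrinkage at the stated rate $\exp(-\tilde\mu_d\ell/p)$, and a final union bound over pairs upgrades the one-sided distance estimate to the maximum over all pairs. The random-effects prefactor $[M_\nu(\ell-1)]^\ell$ has already been cancelled by construction, but control of the conditional $\nu_i$ values via Lemma~\ref{lem:randeff_consistency} is needed to ensure the inductive localization constants do not degrade with $\ell$.
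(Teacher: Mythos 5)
Your setup mirrors the paper's: the random effects factor out (the paper wraps them into a term $\Psi(P_\nu)$ that cancels in the ratio, you absorb them into an MGF prefactor), and you reduce to the ratio $P(\max d>\delta\mid C_\ell) = \E[e^{-\sum d_{ij}}\mathbb{1}\{\max d>\delta\}]/\E[e^{-\sum d_{ij}}]$, which is equivalent to the paper's working with $P(\epsilon_\delta\mid C_\ell)/P(\epsilon_\delta^c\mid C_\ell)$. Your lower bound on the normalizer via a small-ball argument is also essentially the paper's $P(\epsilon_\delta)\geq(c_2(\delta/2)^p)^\ell$ combined with $P(C_\ell\mid\epsilon_\delta)\geq e^{-\binom{\ell}{2}\delta}$. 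The genuine gap is your upper bound on the numerator. The triangle-inequality bound $\sum_{i<j}d_{ij}\geq(\ell-1)\delta$ gives $\E[e^{-\sum d_{ij}}\mathbb{1}\{\max d>\delta\}]\leq e^{-(\ell-1)\delta}$, which tends to $1$ precisely when $\delta$ is exponentially small — so the ratio against the small-ball lower bound (which decays like $(c'/\ell^p)^\ell$ after your optimization in $r$) diverges to $+\infty$ and says nothing. You acknowledge this, but the replacement you sketch doesn't close the gap either: in the per-node Laplace integral $p(Z_\ell\mid\cdot)\propto\exp(-(\ell-1)d(z_0,Z_\ell))f_Z(Z_\ell)$, the exponential rate is $\ell-1$, not $\mu_d$, and balancing $e^{-(\ell-1)r}$ against $\mathrm{Vol}(B(z_0,r))\sim r^p$ localizes $Z_\ell$ at scale $r\sim 1/\ell$, not $e^{-c\ell}$; the population mean $\mu_d$ does not naturally enter a conditional-on-the-others computation, so the claimed ``per-node rate $\exp(-\tilde\mu_d/p)$'' is not justified by the sketch.

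What the paper actually uses in place of your triangle inequality is the bound $P(C_\ell\mid\epsilon_\delta^c)\leq\exp\bigl(-\binom{\ell}{2}\mu_d\bigr)$, imported from Lemma A.1 of \citet{lubold2023identifying}. This is the step that injects $\mu_d$ at the $\ell^2$-scale: the numerator then decays like $e^{-\binom{\ell}{2}\mu_d}$, and comparing against the denominator's $p\ell\log(\delta)$ term with $\delta=2\exp(-\tilde\mu_d(\ell-1)/p)$ (also of order $\ell^2$) makes the ratio tend to the right limit. Without a bound of this form — either by citing that lemma or by reproducing its H\"older-style argument, as the paper does explicitly in its parallel proof of Lemma~\ref{lem:rand_eff_clique} — the exponential rate does not emerge. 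I would redirect your effort away from the inductive per-node sketch and toward establishing the $\exp(-\binom{\ell}{2}\mu_d)$ bound on the conditional clique probability, after which your ratio decomposition and small-ball lower bound carry the rest of the proof.
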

See the supplementary materials Section~\ref{sec:proof_clique_localization_rate} for the proof of this lemma. The main implication is that it is reasonable to treat the latent positions as a single point when nodes are within a clique. The assumption of $F_Z$ admitting a smooth continuous density ($f_z$) is extremely mild. In fact, one could derive even faster rates if the latent density contained point masses.

An similar result can be derived for the node-level random effects. 
\begin{lemma} \label{lem:rand_eff_clique}
    Assume the latent distance model as in equation~\eqref{eq:ld_model} and let $C_\ell$ denote the event that a collection of nodes indexed by $i \in \{1, \dots, \ell\}$ form a clique of size $\ell$. Suppose that
    \begin{enumerate}
        \item (E1) $F_{\nu}$ admits a continuous density  $(f_\nu)$ on $(-\infty, 0]$ and this density function is positive at $0$, $f_{\nu}(0) > 0$
    \end{enumerate}
    Let $\mu_{\nu} := E[\nu]$. Then for any $\mu_{\nu} < \tilde \mu_{\nu} < 0$
    \begin{equation}
        \min_{i} \nu_i|C_\ell = o_P\left(\exp\left(- |\tilde \mu_{\nu}|\ell\right)\right)
    \end{equation}
\end{lemma}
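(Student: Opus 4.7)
The plan is to mirror the proof of Lemma~\ref{lem:clique_localization_rate} for the node-level random effects instead of the latent positions. I would first apply Bayes' rule with the latent distance model~\eqref{eq:ld_model} to derive the joint posterior of $(\nu_1,\dots,\nu_\ell,Z_1,\dots,Z_\ell) \mid C_\ell$. Because
\begin{equation*}
P(C_\ell \mid \nu, Z) \;=\; \exp\!\Bigl((\ell-1)\textstyle\sum_{i} \nu_i \;-\; \sum_{i<j} d(Z_i, Z_j)\Bigr),
\end{equation*}
the posterior factorizes cleanly into a $\nu$-piece proportional to $\prod_i f_\nu(\nu_i)\,e^{(\ell-1)\nu_i}$ and a $Z$-piece proportional to $\prod_i f_Z(Z_i)\,\exp(-\sum_{i<j}d(Z_i,Z_j))$. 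This immediately gives three useful facts: $(\nu_i)$ and $(Z_i)$ are conditionally independent given $C_\ell$; the $\nu_i$'s are themselves conditionally iid; and each $\nu_i$ has the tilted marginal density $\tilde f_\nu^{(\ell)}(\nu) \propto f_\nu(\nu)\,e^{(\ell-1)\nu}$ on $(-\infty,0]$. The exponential tilt pulls mass toward the boundary $\nu = 0$, which is the mechanism driving the concentration.

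The second step is a Laplace-type tail analysis of $\tilde f_\nu^{(\ell)}$. By (E1), $f_\nu$ is continuous with $f_\nu(0) > 0$, so on a small interval $[-\delta,0]$ there is a lower bound $f_\nu(\nu) \ge c_1 > 0$, yielding a lower bound on the normalizing constant $Z_\ell := \int_{-\infty}^0 f_\nu(\nu) e^{(\ell-1)\nu} d\nu$ of order $1/\ell$. Pairing this with an upper bound on $\int_{-\infty}^{-t} f_\nu(\nu) e^{(\ell-1)\nu} d\nu$ from a Chernoff-tilt argument produces a single-coordinate tail bound of the form $P(|\nu_1| > t \mid C_\ell) \le C(\ell)\, e^{-(\ell-1)t}$. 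A union bound over the $\ell$ conditionally iid $\nu_i$'s then controls $P(\min_i \nu_i < -t \mid C_\ell) \le \ell\,C(\ell)\, e^{-(\ell-1)t}$, and setting $t$ at the target rate finishes the argument.

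The main obstacle is extracting the stated exponential-in-$\ell$ rate $o_P(e^{-|\tilde\mu_\nu|\ell})$ from this construction. A naive union bound on $\ell$ coordinates each with exponential tail of rate $\ell-1$ only suggests a polynomial rate of order $\log\ell/\ell$ for $\max_i|\nu_i|$. Closing the gap to $e^{-|\tilde\mu_\nu|\ell}$ requires fully exploiting the strict inequality $|\tilde\mu_\nu| < |\mu_\nu|$: the idea is to sharpen the bound on $Z_\ell$ using a refined Laplace expansion and then Chernoff-tilt using the full moment generating function $M_\nu(s) = E[e^{s\nu}]$ rather than only the local value $f_\nu(0)$, so that the slack $|\mu_\nu| - |\tilde\mu_\nu|$ absorbs both the $\log\ell$ from the union bound and the subexponential discrepancy between the natural per-coordinate scale $1/\ell$ and the target scale $e^{-|\tilde\mu_\nu|\ell}$. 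This tightening is the technical crux; the remainder of the argument is essentially bookkeeping analogous to the proof of Lemma~\ref{lem:clique_localization_rate}.
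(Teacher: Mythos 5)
Your factorization is correct and is in fact cleaner than the paper's argument: conditional on $C_\ell$, the posterior $P(\vect{\nu},\vect{Z}\mid C_\ell)$ splits into a $\nu$-piece times a $Z$-piece, and within the $\nu$-piece the coordinates are iid with density proportional to $f_\nu(\nu)\,e^{(\ell-1)\nu}$ on $(-\infty,0]$. The trouble is that the gap you identify at the end of your write-up cannot be closed by sharpening the Laplace expansion or the Chernoff tilt, because the stated exponential rate is not attainable under (E1). With $f_\nu(0)>0$, the normalizer is $Z_\ell=\int_{-\infty}^0 f_\nu(\nu)e^{(\ell-1)\nu}\,d\nu \sim f_\nu(0)/(\ell-1)$, so $P(\nu_1 \ge -t \mid C_\ell) \sim 1 - e^{-(\ell-1)t}$ for small $t$. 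Taking $t=\exp(-|\tilde\mu_\nu|\ell)$ gives $P(\nu_1 \ge -t\mid C_\ell) \approx (\ell-1)e^{-|\tilde\mu_\nu|\ell} \to 0$: even a single coordinate $\nu_1$ is, with probability tending to one, larger in magnitude than $e^{-|\tilde\mu_\nu|\ell}$, so $\min_i\nu_i$ is certainly not $o_P$ of it. Your factorization in fact shows $|\min_i \nu_i\mid C_\ell| = \Theta_P(\log\ell/\ell)$ and $|\max_i\nu_i\mid C_\ell| = \Theta_P(\ell^{-2})$; both are polynomial, not exponential.

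The source of the discrepancy is a step in the paper's own proof of this lemma. After factoring out the $Z$-dependence and applying the generalized H\"older inequality, the argument writes $\nu_i = \mu_\nu + \eta_i$ and asserts $\prod_{i<j}\bigl(\E[\exp(\binom{\ell}{2}(\eta_i+\eta_j))]\bigr)^{1/\binom{\ell}{2}} \le 1$. This cannot hold: $\eta_i$ is mean-zero and non-degenerate under (E1), so Jensen's inequality forces $\E[\exp(t\eta_i)] > 1$ for all $t\neq 0$; and since $\eta_i \le |\mu_\nu|$ with $P(\eta_i>|\mu_\nu|-\epsilon)>0$, the per-pair factor $\bigl(\E[\exp(\binom{\ell}{2}\eta_i)]\bigr)^{1/\binom{\ell}{2}}\to e^{|\mu_\nu|}$, which precisely cancels the pulled-out prefactor $e^{2\binom{\ell}{2}\mu_\nu}$ rather than leaving it intact. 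You can sanity-check this directly from your conditional-iid view: $\E\bigl[\prod_{i<j}e^{\nu_i+\nu_j}\bigr] = \bigl(\E[e^{(\ell-1)\nu_1}]\bigr)^\ell = Z_\ell^\ell \sim (f_\nu(0)/\ell)^\ell$, which is of order $e^{-\ell\log\ell}$, much larger than the claimed $e^{-2\binom{\ell}{2}|\mu_\nu|}=e^{-\Theta(\ell^2)}$, so the upper bound asserted for $P(C_\ell\mid\varepsilon_\delta^c)$ cannot be correct. Note, however, that the downstream uses of this lemma (the $\hat\gamma_{\mathpzc{W}}$ bias in Lemma~\ref{lem:randeff_consistency} and Theorem~\ref{thm:asymptotic_distance}) only require $|\max_i\nu_i| = o_P(1/\ell)$, and the polynomial rate $\Theta_P(\ell^{-2})$ that your approach delivers does satisfy that.
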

This theorem states that the nodes we find in a clique tend to have near-zero random effects. This will be useful as the random effects will converge to zero, allowing one to use between-clique connections to estimate distances. We next leverage each of these results to construct an estimator for a set of distances on the underlying manifold.

Due to the localization of the latent positions and random effects within a clique, one can estimate a set of distances using the average connection probability across cliques. Let $\mathpzc{X}, \mathpzc{Y} \subseteq V$ be subsets of vertices that denote nodes that form cliques respectively. We define the average probability of connection across cliques $\mathpzc{X}, \mathpzc{Y}$ by $p_{\mathpzc{X}\mathpzc{Y}}$. Based on the results of Lemma~\ref{lem:clique_localization_rate}, we note that the maximum distance between a set of two points within the same clique is $o_P(\exp(-\tilde \mu_d \ell))$, therefore we let $d_{\mathpzc{X}\mathpzc{Y}}$ denote the average distance in the latent space between latent positions of the cliques. For any $x,y \in \mathpzc{X}, \mathpzc{Y}$, $d_{xy} = d_{\mathpzc{X}\mathpzc{Y}} + o_P(\exp(-\tilde \mu_d \ell)$ and hence the pairwise distances between nodes in a clique are nearly identical. We can therefore relate the average connection probability to the average distance across cliques:
\al{
    p_{\mathpzc{X}\mathpzc{Y}} &:= \frac{1}{|\mathpzc{X}||\mathpzc{Y}|}\sum_{x \in \mathpzc{X}}\sum_{y \in \mathpzc{Y}}p_{xy} \\
    &= \frac{1}{|\mathpzc{X}||\mathpzc{Y}|}\sum_{x \in \mathpzc{X}}\sum_{y \in \mathpzc{Y}}\exp(\nu_x + \nu_y - d_{xy}) \\
    &= \exp(- d_{\mathpzc{X}\mathpzc{Y}})\frac{1}{|\mathpzc{X}||\mathpzc{Y}|}\sum_{x \in \mathpzc{X}}\sum_{y \in \mathpzc{Y}}\exp(\nu_x + \nu_y)\left(1 +  o_P(\exp(-\tilde \mu_d \ell))\right) \\
    d_{\mathpzc{X}\mathpzc{Y}}  &= -\log(p_{\mathpzc{X}\mathpzc{Y}}) + \gamma_\mathpzc{X} + \gamma_\mathpzc{Y} + o_P(\exp(-\tilde \mu_d \ell)) \\
    \text{Where }\gamma_\mathpzc{W} &:= \log\left(\frac{1}{|\mathpzc{W}|}\sum_{w \in \mathpzc{W}}\exp(\nu_w)\right) \quad \text{ for } \mathpzc{W} \in \{\mathpzc{X}, \mathpzc{Y}\}.
}

Since there are $\ell^2$ possible connections between a pair of cliques of size $\ell$, we can construct an asymptotically normal estimator of $p_{\mathpzc{X}\mathpzc{Y}}$ using the sample mean of the connections across cliques. By Lemma~\ref{lem:clique_localization_rate}, the distance between latent positions within a clique is negligible. Our goal is to develop an asymptotically normal estimate of $d_{\mathpzc{X}\mathpzc{Y}}$. We can achieve this by estimating $\gamma_\mathpzc{X}$ and $\gamma_\mathpzc{Y}$ at sufficiently fast rates (i.e., at least $o_P(\frac{1}{\ell}))$ so that these are negligible compared to the estimation of $p_{\mathpzc{X}\mathpzc{Y}}$, for which we can leverage a central limit theorem.

For $\mathpzc{W} \in \{\mathpzc{X},\mathpzc{Y}\}$, in order to estimate $\gamma_{\mathpzc{W}}$, we consider the average connection probability to any node in the network, conditional on a node being in a clique of size $\ell$: 
\al{
    P(A_{ik} = 1|C_{\ell}) &= \exp(\nu_i) \int \exp(\nu_k + d(z_i,z_k)) \, dF_{Z}(z_k|C_{\ell}) \, dF_{\nu}(\nu_k|C_{\ell})  \\
    \frac{P(A_{ik} = 1|C_{\ell})}{P(A_{jk} = 1|C_{\ell})} &= \exp(\nu_i - \nu_j) + o_P(\exp(-\tilde{\mu}_d \ell)).
}
By Lemma~\ref{lem:clique_localization_rate}  $d(Z_i, Z_j) = o_P(\exp(-\tilde{\mu}_d \ell))$. 
Estimation is straightforward by considering the density of connections, namely: $\hat{P}(A_{ik} = 1|C_{\ell}) = \frac{d_i}{n}$.  Since we only use the ratio to compute the estimates of $\exp(\nu_i - \nu_j)$, the total number of nodes $n$ is not necessary.  Given $\ell \ll n$, where $n$ is the number of nodes in the network, this ratio can be estimated easily using the degree ratio. We define $\Delta \nu_i := \nu_i - \nu_\mathpzc{W} $, allowing us to identify $\Delta \nu_i$, where $\nu_\mathpzc{W} = \max_{i \in \mathpzc{W}} \nu_{i}$
\al{
    \gamma_\mathpzc{W} &= \nu_\mathpzc{W} + \log\left(\frac{1}{|\mathpzc{W}|} \sum_{i \in \mathpzc{W}} \exp\left( \Delta \nu_i \right)\right).
}
The remaining question is to estimate $\nu_\mathpzc{W}$. Fortunately, as we have previously discussed in Lemma~\ref{lem:rand_eff_clique}, within a clique, the random effects approaches $0$ at an exponentially fast rate. Hence, an estimator for the random effect can be constructed by setting the largest degree node's random effect to $0$:
\begin{equation}
    \hat \gamma_\mathpzc{W} = \log\left( \frac{1}{| \mathpzc{W}|}\sum_{i \in \mathpzc{W}} \frac{d_i}{\max_{j \in \mathpzc{W}} d_{j}}\right). \label{eq: gamma estimation}
\end{equation}
This in turn can be used to estimate the distances: 
\begin{equation}
    \hat d_{\mathpzc{X}\mathpzc{Y}} = -\log(\hat p_{\mathpzc{X}\mathpzc{Y}}) + \hat \gamma_\mathpzc{X} + \hat \gamma_\mathpzc{Y} \label{eq: distance estimate}
\end{equation}
where $\hat p_{\mathpzc{X}\mathpzc{Y}} = \frac{1}{|\mathpzc{X}||\mathpzc{Y}|} \sum_{x \in \mathpzc{X}} \sum_{y \in \mathpzc{Y}} A_{xy}$. The asymptotic distribution is next described in Theorem~\ref{thm:asymptotic_distance}.

\begin{theorem}\label{thm:asymptotic_distance}
    Let $\hat d_{\mathpzc{X}\mathpzc{Y}}$ the distance estimator as per equation~\eqref{eq: distance estimate}. Suppose that $\mathpzc{X}, \mathpzc{Y}$ are cliques of size $\ell$ and $X \cap Y = \emptyset$. If the following conditions hold:
    \begin{itemize}
        \item (F1) $\ell =  o(\sqrt{n})$
        \item (F2) 
        $\lim_{\ell \to \infty} \frac{\sum_{x \in X} \sum_{y \in Y} (A_{xy} - p_{xy})^2I(|A_{xy} - p_{xy}| > \epsilon \ell^2 \sigma_{\ell})}{\sigma_{\ell}} = 0$ \\
        for all $\epsilon > 0 $
    \end{itemize}
    and we denote 
    \begin{align}
         p_{\mathpzc{X}\mathpzc{Y}} &= \frac{1}{\ell^2} \sum_{x \in X} \sum_{y \in Y} p_{xy}\\
         \sigma_{\ell} &= \frac{1}{\ell^2} \sum_{x \in X} \sum_{y \in Y} p_{xy}(1 - p_{xy}).
    \end{align}
    Then 
    \begin{equation}
        \sqrt{\ell^2 \frac{\sigma_{\ell}}{p_{\mathpzc{X}\mathpzc{Y}}}}\biggl(\hat d_{\mathpzc{X}\mathpzc{Y}} - d_{\mathpzc{X}\mathpzc{Y}}\biggr) \to_d N(0,1)
    \end{equation}
\end{theorem}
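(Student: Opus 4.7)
The plan is to decompose the estimation error additively, isolating a CLT-driven leading term and a pair of nuisance terms. Specifically, I would write
$$\hat d_{\mathpzc{X}\mathpzc{Y}} - d_{\mathpzc{X}\mathpzc{Y}} = -\bigl(\log\hat p_{\mathpzc{X}\mathpzc{Y}} - \log p_{\mathpzc{X}\mathpzc{Y}}\bigr) + (\hat\gamma_\mathpzc{X} - \gamma_\mathpzc{X}) + (\hat\gamma_\mathpzc{Y} - \gamma_\mathpzc{Y}) + R_\ell,$$
where the intra-clique remainder $R_\ell = o_P(\exp(-\tilde\mu_d\ell))$ follows from Lemma~\ref{lem:clique_localization_rate} and the derivation of $d_{\mathpzc{X}\mathpzc{Y}}$ preceding equation~\eqref{eq: gamma estimation}. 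The goal is to show that the log-probability term satisfies a CLT at the stated rate, and that the two $\hat\gamma$ errors vanish strictly faster, so that Slutsky's theorem closes the argument.

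For the leading term, I would observe that, conditional on the latent positions and random effects (which are fixed given the event of forming the two cliques), the $\ell^2$ indicators $\{A_{xy}\}_{x\in\mathpzc{X},y\in\mathpzc{Y}}$ are independent Bernoulli variables with means $p_{xy}$ (this uses $\mathpzc{X}\cap\mathpzc{Y}=\emptyset$). Applying Lindeberg's CLT to this triangular array — assumption (F2) is precisely the Lindeberg condition in this setup — yields
$$\frac{\ell^2(\hat p_{\mathpzc{X}\mathpzc{Y}} - p_{\mathpzc{X}\mathpzc{Y}})}{\sqrt{\ell^2 \sigma_\ell}} \to_d N(0,1).$$
Since $p_{\mathpzc{X}\mathpzc{Y}}>0$ and $g(p)=-\log p$ is smooth with $g'(p) = -1/p$, the delta method transfers this to the corresponding asymptotic normality of $-\log\hat p_{\mathpzc{X}\mathpzc{Y}}$ at the rate claimed in the theorem, after the routine reshuffling of $p_{\mathpzc{X}\mathpzc{Y}}$ and $\sigma_\ell$ factors.

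For the nuisance terms, I would analyze $\hat\gamma_\mathpzc{W}$ as a smooth function of the ratios $d_i / \max_{j\in\mathpzc{W}} d_j$, $i\in\mathpzc{W}$. Each $d_i$ is a sum of $n-1$ conditionally independent Bernoullis, so $d_i/n$ concentrates around its mean at rate $O_P(1/\sqrt n)$; a union bound over the $|\mathpzc{W}|=\ell$ nodes, together with Lemma~\ref{lem:rand_eff_clique} (which pins the largest-degree node's random effect to $0$ exponentially fast) and Lemma~\ref{lem:clique_localization_rate} (which absorbs within-clique positional variation into the exponential remainder), gives
$$\hat\gamma_\mathpzc{W} - \gamma_\mathpzc{W} = O_P(1/\sqrt n) + o_P(\exp(-\tilde\mu_d \ell)).$$
Assumption (F1), $\ell = o(\sqrt n)$, then forces $\hat\gamma_\mathpzc{W} - \gamma_\mathpzc{W} = o_P(1/\ell)$, which is dominated by the $\Theta_P(1/\ell)$ fluctuations of the leading term, since $\sigma_\ell$ and $p_{\mathpzc{X}\mathpzc{Y}}$ are bounded away from $0$ on the relevant event.

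The main obstacle will be pushing the nuisance-term control through the nonlinearities in $\hat\gamma_\mathpzc{W}$ with the correct rate. In particular, the estimator normalizes by the \emph{maximum} degree in the clique, so a supremum-of-binomials concentration bound combined with Lemma~\ref{lem:rand_eff_clique} is needed to guarantee that $\max_{j\in\mathpzc{W}} d_j / n$ concentrates at $O_P(1/\sqrt n)$ around the correct target, and then the remaining algebra must propagate through $\log(\cdot)$ without losing this rate (which requires a probability lower bound on the argument of the log, uniformly over $\mathpzc{W}$). Once these rates are in hand, Lindeberg, the delta method, and Slutsky combine routinely to deliver the claimed limit.
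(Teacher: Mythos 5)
Your proposal follows essentially the same route as the paper's proof: the same additive decomposition into a $-\log\hat p_{\mathpzc{X}\mathpzc{Y}}$ leading term and $\hat\gamma$ nuisance terms, the Lindeberg--Feller CLT for $\hat p_{\mathpzc{X}\mathpzc{Y}}$ with (F2) as the Lindeberg condition, the delta method for the logarithm, and Lemmas~\ref{lem:clique_localization_rate} and~\ref{lem:rand_eff_clique} together with (F1) to kill the nuisance terms. The only cosmetic difference is that you inline the degree-ratio concentration argument that the paper packages as Lemma~\ref{lem:randeff_consistency}; the logical structure and conclusion are the same.
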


The condition (F1) simply states that the size of the cliques should grow at a rate slower than $\sqrt{n}$, the total size of the network, which is a very mild assumption. This is to ensure that the asymptotics associated with the estimation of $\gamma_\mathpzc{W}$ are negligible. Condition (F2) is the standard Lindeberg CLT condition. 
If the node-level probabilities of connection $p_{xy}$ is bounded away from $0$ and $1$, then this will hold (this may not hold if $d_{\mathpzc{X}\mathpzc{Y}} \to -\infty$).  The full proof is found in Section~\ref{sec:proof_asymptotic_distance}.  The main implication is that the asymptotic distribution is driven by the estimator $\hat p_{\mathpzc{X}\mathpzc{Y}}$. Since the random effects will approach $0$, for large cliques $p_{\mathpzc{X}\mathpzc{Y}} \approx \exp(-d_{\mathpzc{X}\mathpzc{Y}})$, thus allowing for simplified expressions for $\sigma_{\ell}$ and $p_{\mathpzc{X}\mathpzc{Y}}$.  

\textbf{Constrained Estimation.} Pairwise estimation of the distances according to equation~\eqref{eq: distance estimate} is one method of estimating a distance matrix, however, this does not restrict the final estimate to be a metric. When cliques are not connected to one-another, this may result in distance estimates which are $-\infty$.

For example, we construct an enumeration of the cliques of size $7$ or greater from the General Relativity co-authorship network of \citet{Leskovec2007GraphEvolution} in Figure~\ref{fig:GR_clique_subgraph}. Any pair of cliques which does not share an edge will inherently be estimated to have infinite distance, which prevents the estimation of curvature. 

\begin{figure}[htb!]
    \centering
    \includegraphics[width = 0.45\textwidth]{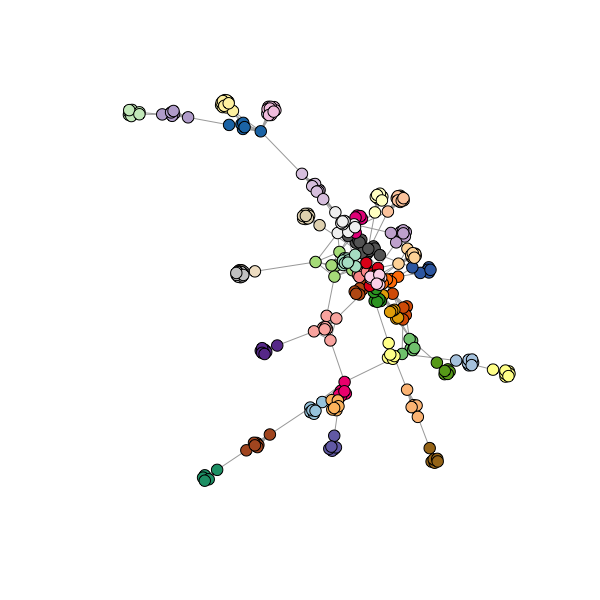}
    \caption{Clique subgraph of co-authorship network in ArXiv General Relativity. Cliques of size 7 and greater are shown by color. }
    \label{fig:GR_clique_subgraph}
\end{figure}

To address this challenge, we posit a similar estimation problem, while respecting the triangle inequality for each triplet. The following estimation problem is posed below.  Let $\mathcal{C}$ denote the set of cliques in an observed graph. Let $\mathpzc{X}\subset \mathcal{C}$ be a set of indices corresponding to a clique. We estimate the random effects within a clique $i \in \mathpzc{X}$ according to $\hat{\nu_i} = \log(\frac{d_i}{\max_{j \in \mathpzc{X}} d_{j}})$. Given a set of fixed effects, we can propose a maximum likelihood optimization problem for the distance matrix $D \in \R^{K \times K}$. As by Lemma~\ref{lem:clique_localization_rate} cliques have approximately a common latent position. From our latent distance model, we define the following likelihood function for the connections within our set of cliques $\mathcal{C}$,  $\mathcal{L}_{\mathcal{C}}$:
\al{
    \mathcal{L}_{\mathcal{C}}(D; \vect{\nu}) := \sum_{\mathpzc{X}, \mathpzc{Y} \in \mathcal{C}, i \in \mathpzc{X}, j \in Y} A_{ij}\left( \nu_i + \nu_j - d_{\mathpzc{X}\mathpzc{Y}}\right) + \sum_{\mathpzc{X}, \mathpzc{Y} \in \mathcal{C}, i \in \mathpzc{X}, j \in Y} (1 - A_{ij})\log\left(1 - \exp\left( \nu_i + \nu_j - d_{\mathpzc{X}\mathpzc{Y}}\right)\right). 
}   
We can now define the maximum likelihood optimization problem, after estimating a set of random effects $\vect{\hat 
\nu}$
\al{
    \vect{\hat D} &= \sup_{D \in \R^{K \times K}} \mathcal{L}_{\mathcal{C}}(D; \vect{\hat \nu}) \\
    \text{s.t. } d_{\mathpzc{X}\mathpzc{Y}} &\geq 0, \quad \text{Diag}(D) = 0, \quad \text{tr}(E_{s}^\top D) \geq 0 \quad  \forall s \in \mathcal{S}
}
where $\mathcal{C}$ is a list of clique indices.  $E_s$ are matrices which define the triangle inequalities and contain mostly $0$'s other than $3$ indices, $i,j,k$ for which $E[i,j] = E[i,k] = 1 = -E[j,k]$. We define $\mathcal{S}$ to be an enumeration of all such matrices $E_s$.  There are $3\binom{K}{3}$ such restrictions in total. The set of feasible distance matrices that satisfy these constraints form a polyhedron of interior dimension $\binom{K}{2}$, therefore therefore the constraints do not reduce the dimensionality of the space of distance matrices, but reduce the volume of the space to a smaller polyhedron. This optimization procedure is a \textit{convex} and in practice, we may use \texttt{CVXR} to solve this system \citep{Fu2020CVXR:Optimization}.  For a greater gain in computational speed, we use the \texttt{MOSEK} solver for the constrained optimization \citep{Andersen2000TheAlgorithm}.  

Though in its current form, the problem is numerically challenging to solve due to the sheer number of constraints. A natural option is to use the Newton method and approximate the likelihood using a second order Taylor expansion and solve this problem successively. Since the Hessian is diagonal, $\frac{\partial^2}{\partial d_{ij} \partial d_{kl}} \mathcal{L}_{\mathcal{C}}(D; \vect{\nu}) = 0 $ if $d_{ij} \not = d_{kl}$, this can be made into a more efficient quadratic program which can be solved faster in \texttt{CVXR}. Let $\tilde g(D;D_0;\vect{\nu})$ be the second order Taylor series approximation to $\mathcal{L}_{\mathcal{C}}$ about a matrix $D_0$. We can successively solve for $\vect{\hat D}_{t +1}$ using the following constrained optimization problem
\begin{align}
     \vect{\hat D}_{t + 1} &= \sup_{D \in \R^{K \times K}}\tilde g(D,\vect{\hat D}_{t}; \vect{\hat \nu}) \label{eq:newtonstep} \\
    \text{s.t.} d_{\mathpzc{X}\mathpzc{Y}} &\geq 0, \quad \text{Diag}(D) = 0, \quad \text{tr}(E_{s}^\top D) \geq 0 \quad  \forall s \in \mathcal{S} \nonumber
\end{align}
   
which can be iteratively computed until $\mathcal{L}_{\mathcal{C}}$ increases less than some threshold.  Each iteration is a linear constrained quadratic program, a common convex problem for which standard software has implemented faster solutions. We implement this in \texttt{CVXR}. For further details, see the supplementary materials in Section~\ref{sec:distance_estimation}.

In practice, running the optimization step for many iterations can be computationally costly. This is particularly problematic in a later application we discuss which involves a subsampling approach to testing for constant curvature (Section~\ref{sec:application_constant_curvature}).  It is well known in maximum likelihood estimation, one only needs to perform one Newton step for asymptotic efficiency \citep{le1956asymptotic}.  As a result, one can start with any consistent estimator of the distance matrix $D$ and apply a single Newton step from equation~\eqref{eq:newtonstep} and obtain the same asymptotic distribution, and therefore in practice, only one step is needed. In the appendix~\ref{sec: Alternative Estimators of the Distance Matrix} we include an alternative approach for estimating distance matrices under an alternative framework, without the need for cliques.  In practice, we can use the method in equation~\eqref{eq: distance estimate} to form an initial estimate, then modify the entries so that it is a metric. One example of an algorithm which can be used for this purpose is the Floyd-Warshall Algorithm. This algorithm modifies as few entries as possible, solving a problem also known as sparse metric repair \citep{gilbert2017if}. For full details, see the appendix Section~\ref{sec:distance_estimation}. We now present the entire procedure for estimating latent curvature from a network in Algorithm~\ref{alg:kappa_estimation}.

\begin{algorithm}
\begin{algorithmic}[1]
\Require $G$, Steps $T$, Triangle constant $C_{\triangle}$, e.g. $C_{\triangle} = 1.4$. Minimum clique size $\ell$. 
\State Find a set of $\mathpzc{C}$ where $K = |\mathpzc{C}|$ and $C \in \mathpzc{C}$ where $C \subset G$ such that $|C| \geq \ell$. 
\State Estimate the initial distance matrix estimate as per equation~\eqref{eq: distance estimate} and denote this $\vect{\hat D}^{(-1)}$
\State Apply the sparse metric repair algorithm (i.e. the Floyd-Warshall algorithm to modify $\vect{\hat D}^{(-1)}$ so that it is a metric and denote this $\vect{\hat D}^{(0)}$. 
\State Estimate the distance matrix $\vect{\hat D}$ by iterating equation~\eqref{eq:newtonstep} $T$ times.
\State Rank the surrogate midpoint sets and identify the best midpoint set $\hat y, \hat m, \hat z$ by applying equation~\eqref{eq:surrogate_midpoint_problem}. 
\State Select values of $x, \mathbf{X}$ such that equation~\eqref{eq:triangle_tuning} is satisfied. 
\State Estimate $\hat \kappa$ by taking the median over the set $\vect{X}$ as per equation~\eqref{eq:median_kappa}. \\
\Return $\hat \kappa $
\end{algorithmic}
\caption{Algorithm for estimation $\hat \kappa$.}\label{alg:kappa_estimation}
\end{algorithm}

In the appendix (Section~\ref{sec: Alternative Estimators of the Distance Matrix}), we also illustrate an approach to estimating distance matrices in a similar setting in the absence of cliques. We now continue with an empirical study of our estimators. 

\section{Simulation Study}\label{sec:simulations_consistency}

We construct a number of simulations to illustrate the performance of our curvature estimator under the full model complexity. This involves first sampling the parameters of a latent position cluster model to draw locations and variances independently, then sampling positions and random effects from the random latent position cluster model. This is to illustrate how midpoints may naturally align from random draws of the centers of the cluster model. We simulate from the following model:
\al{
    \mu_k &\sim F_{\mu}^{\M^p(\kappa)}(\vect{O}, R) \quad &k \in \{1, \dots, \lceil\sqrt{\rho} 50 \rceil \} \\
    \sigma^2_k &\sim F_{\sigma^2} \quad &\vect{\pi} \sim Dirichlet(\vect{2}) \\
    \rho &\in \{\frac{1}{\sqrt{2}}, 1, 2\} \quad &\kappa \in \{-2, -1, -0.5, 0, 0.5, 1\}
}
where $F_{\mu}^{\M^p(\kappa)}(\vect{O}, R)$ denotes the prior distribution on the latent positions, which is a uniform ball with radius $R$ for $\kappa > 0$, and two concentric balls, one with radius $R$ and the other with $R/2$, with equal probability for $\kappa \leq 0$. This setup facilitates forming midpoints since the volume of a ball grows exponentially as $\kappa$ decreases. In all simulations, we set $R = 2.5$. This process determines a random latent position cluster model (LPCM) similar to \cite{Handcock2007Model-basedNetworks}. The locations of $\mu_k$ as well as the relative sizes of $\sigma^2_k$ determine where cliques are most likely to form in the latent space. The parameter $\rho$ refers to the scale of the network simulations, allowing for the number of centers to grow with $\rho$, $\vect{\pi}$ refers to the vector of cluster probabilities in the mixture model, $\vect{\mu}$ and $\vect{\sigma^2}$ refer to the cluster mean scale parameters, and $\kappa$ the true curvature of the space. The parameters $\vect{\mu}, \vect{\sigma^2}, \vect{\pi}$ determine a latent position cluster model, where the positions, $Z_i \sim F_Z := F_Z(\vect{\mu}, \vect{\sigma^2}, \vect{\pi}; \kappa)$. The mixture components correspond to the heat kernels in spherical space (Von-Mises Fischer distribution), Gaussian distribution in Euclidean space, and the wrapped normal distribution in hyperbolic space \citep{Nagano2019ALearning}. In all cases, we let $p = 3$. The randomness in the latent position cluster model incorporates the fact that we are not assigning good midpoints exactly but finding them in the data each time we simulate a matrix. We let $F_{\sigma^2}$ denote the gamma distribution with shape parameter $1/16$.

We then sample the networks according to the draw of the latent position cluster model:
\al{
    n &= 5000\rho, \quad Z_i \sim F_{Z}, \quad \nu_i \sim F_{\nu, \rho} \\
    p_{ij} &= \exp(\nu_i + \nu_j - d(Z_i, Z_j)) \\
     A_{ij} &\sim \text{Bern}(p_{ij})
}
where $F_Z = F_Z(\vect{\pi}, \vect{\sigma^2}, \vect{\mu})$ is a particular draw of the random latent position cluster model. The random effects distribution $F_{\nu, \rho}$ is a trimmed normal distribution with mean $-3\rho$ and standard deviation $3\rho$ trimmed at $0$ and $\log(2/\sqrt{n})$ so that $\nu_i$ remain non-negative and to prevent isolated nodes. We let the minimum clique size used in our estimator be $\ell = (8 + 4\log(\rho))$, which tends to generate $30-40$ cliques. We repeat this 500 times for each scale and curvature setting. In practice, we find the cliques using the \texttt{maximal\_cliques} function in the \texttt{igraph} R package \citep{csardi2006igraph}. See Section~\ref{sec:graph_statistics} for additional graph statistics summaries over the simulations. We provide the values of the tuning parameter $C_{\triangle}$ in the appendix in Section~\ref{sec: values of C_delta}.

\subsection{Consistency of Simulation Curvature Estimates} \label{sec: Consistency of Simulation Curvature Estimates}
We now explore the consistency of our curvature estimate as a function of the curvature of the latent space. In each of the simulations, we limit the number of cliques used in the estimator to 35 (50 for $\kappa \leq -1$) for computational convenience for numerous simulations; though in real data applications, this number can be larger.

We plot the results in Figure~\ref{fig:Estimator_Latent_GMM_consistency}. We see that, as the clique size increases, there is a reduction in bias and variance. We note observe that as the curvature more negative, the estimator has greater variability. This is for two reasons. Firstly, as we saw in Figure~\ref{fig:theoretical_variance}, the variance of the estimate is simply larger when $\kappa$ is negative in nearly all regions of the space, a small variation in the length of a triangle median corresponds to a large variation in the corresponding curvature. Secondly, due to the vastness of the negatively curved spaces, we tend to have poorer quality midpoints form as well as fewer reference points $x$ which form nearly equilateral triangles.

\begin{figure}[htb!]
    \centering
    \includegraphics[width=0.45\textwidth]{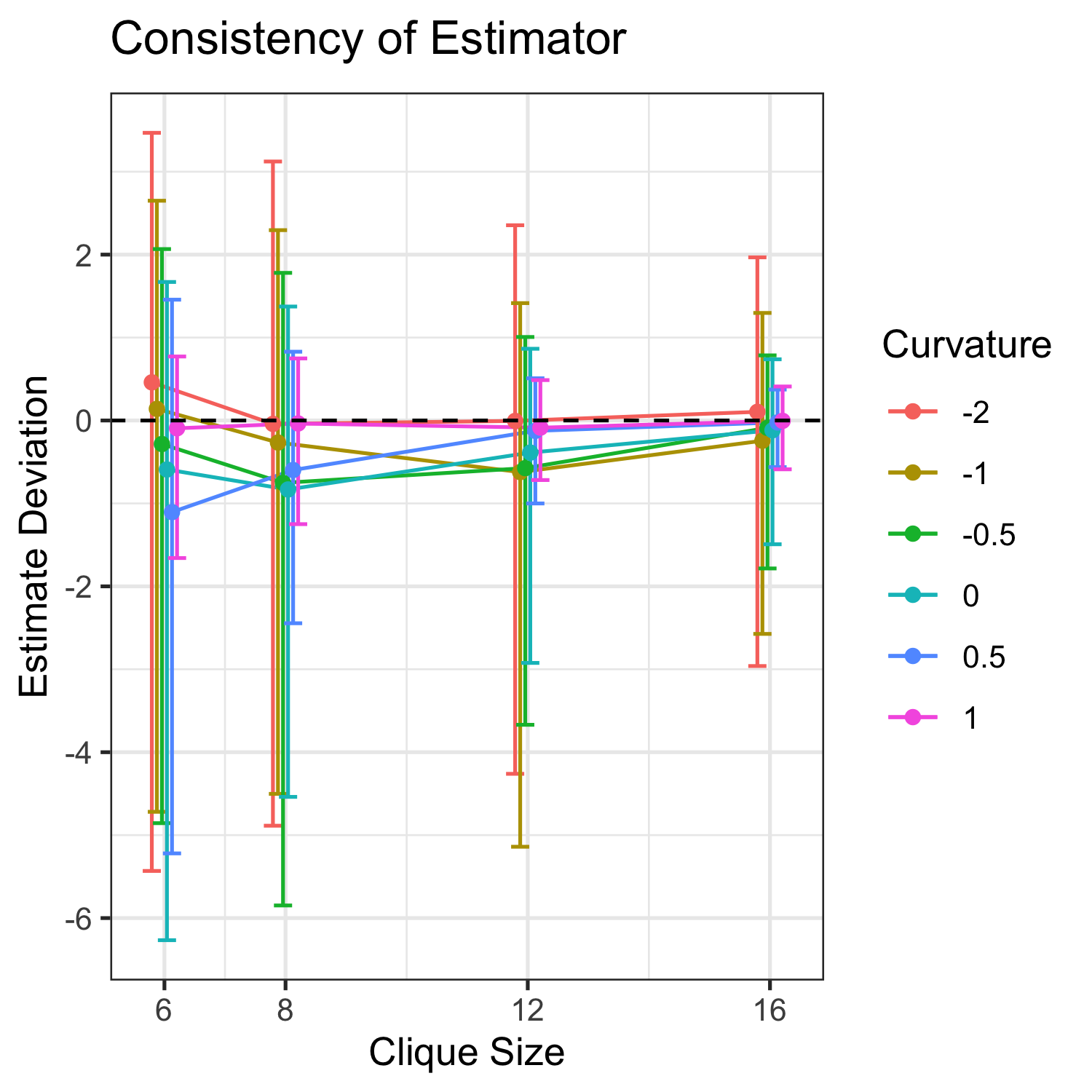}
    \caption{Consistency of Curvature Estimator. Upper error bars indicate the 0.95 quantile of simulations and lower indicate the 0.05 quantile. Central dots indicate the mean after trimming outliers larger than $\pm 100$ (0.107\% of the observations).}
    \label{fig:Estimator_Latent_GMM_consistency}
\end{figure}

\section{Applications: Testing and Detecting Differences in Curvature} 
\label{sec:application_constant_curvature}

We now return to the problem of testing whether the latent space is one of constant curvature, i.e. one of our canonical manifolds.  Formulated as a statistical test, we write this null hypothesis as follows,
$$ H_0: \mathcal{M} = \mathcal{M}(\kappa). $$
This test could provide a model diagnostic (e.g. testing whether a latent variable model that assumes a single constant manifold is appropriate) or provide meaningful insights into heterogeneity in the structure of the graph.

Our test of constant curvature will rely on the upper and lower bounds of the curvature, as developed in Section~\ref{sec: Estimating Curvature}. We do this by leveraging a set of midpoint complexes (i.e. $(y^{(j)}, m^{(j)}, z^{(j)})$), and computing a set of corresponding upper and lower bounds on the curvature at each set. 

Our test will require the sampling distribution of the distance matrix estimator $\vect{\hat D}$. We  consider a method of subsampling from the distribution of the cliques in order to approximate the sampling distribution of the distance matrix estimator. This is based on the subsampling approach of \citet{Politis1994LargeAssumptions}, which can be used to approximate the distribution of a random variable using subsampling under conditions weaker than the bootstrap, similar to the strategy in \citet{lubold2023identifying}.  This is highlighted in Algorithm~\ref{alg:subsample}.  For simplicity, we illustrate the algorithm where we use subsamples of size $\ell - 1$ of each of the cliques. 

Let $\mathcal{I} = \{\textbf{I}_{k}\}_{k = 1}^K$ denote a set of indices corresponding to the cliques, where $\textbf{I}_{k} \cap \textbf{I}_{k'} = \emptyset$ when $k \not = k'$.  Let $\hat \nu_{\textbf{I}_k}$ denote the set of estimated random effects corresponding to the clique $\textbf{I}_k$. 
\begin{algorithm}[htb!]

\caption{Sub-sampling Procedure to Approximate the Sampling Distribution of $ \vect{\hat D}$}\label{alg:subsample}
\begin{algorithmic}[1]
\Require $G$, $B \geq 1$ and $\{\textbf{I}_{k}\}_{k = 1}^K$
\For{$b \in \{1,2,\dots, B\}$}
    \For{$k \in \{1,2,\dots, K\}$} 
        Sample $|\vect{I}_k| - 1$ nodes without replacement from $\vect{I}_k$ and denote this set $\vect{I}^{(b)}_k$
    \EndFor
    \State Denote the set $\mathcal{I}^{(b)} = \{\textbf{I}^{(b)}_{k}\}_{k = 1}^K$
    \State Let $\mathbb{\nu}^{(b)} = \mathbb{\nu}_{\mathcal{I}^{(b)}}$ denote the corresponding random effects estimates. 
    \State Let $\vect{\hat D}^{(b)}_{0} = f_0(A, \mathcal{I}^{(b)}, \mathbb{\nu}^{(b)})$  Initial Estimate \label{alg_step: initial_estimate}
    \State Let $\vect{\hat D}^{(b)}_{1} = f_1(\vect{\hat D}^{(b)}_{0};A, \mathcal{I}^{(b)}, \mathbb{\nu}^{(b)})$ One-step Estimate \label{alg_step: one_step_estimate}
\EndFor \\
\Return $\{\vect{\hat D}^{(b)}\}_{b = 1}^B = \{\vect{\hat D}^{(b)}_{1}\}_{b = 1}^B $
\end{algorithmic}
\end{algorithm}
In Algorithm~\ref{alg:subsample} $f_0$ denotes the distance estimation by each of the distance separately as per equation~\ref{eq: distance estimate} and subsequent sparse modification of the distances using the Floyd-Warshall Algorithm so that $\vect{\hat D}_0$ is a metric. The subsequent step $f_1$ refers to applying a one-step estimation procedure of equation~\eqref{eq:newtonstep}.

We now operationalize our test of constant curvature. Consider a set of midpoint sets $y'^{(j)}, j \in \{1,2,\dots, J\}$ with corresponding reference points $\vect{X}^{(j)}$.  We test for whether the curvature is constant across the latent space across each of these midpoint sets
$$ H_0: \kappa_j = \kappa_{j'} \text{ for all } j, j' \in \{1,2,\dots, J\}$$
using Algorithm~\ref{alg:cc_test}. In order to choose the corresponding locations we utilize the selection of midpoint sets for $J$ collections of midpoints as illustrated in \ref{sec:practical_reference}.

\begin{algorithm}
\footnotesize
\caption{Constant Curvature Test}\label{alg:cc_test}
\begin{algorithmic}[1]
\Require $\{\vect{\hat D}^{(b)}\}_{b = 1}^B$ and $\{y^{(j)},z^{(j)},m'^{(j)},\vect{X}^{(j)}\}_{j = 1}^J$
\For{$b \in \{1,2,\dots, B\}$}
    \For{$j \in \{1,2,\dots, J\}$}
        \For{$x \in \{\vect{X}^{(j)}\}$}
            \State $\hat \kappa_{u,x}^{(b,j)} = \kappa_u(\vect{\hat D}^{(b)}; y^{(j)},z^{(j)},m'^{(j)}, x)$
            \State $\hat \kappa_{l,x}^{(b,j)} = \kappa_l(\vect{\hat D}^{(b)}; y^{(j)},z^{(j)},m'^{(j)}, x)$
        \EndFor 
        \State Compute $\hat \kappa_u^{(b,j)} = \text{median}_{x \in \vect{X}^{(j)}} \hat \kappa_{u,x}^{(b,j)}$ 
        \State Compute $\hat \kappa_l^{(b,j)} = \text{median}_{x \in \vect{X}^{(j)}} \hat \kappa_{u,x}^{(b,j)}$
    \EndFor 
    \State Compute $\hat \kappa_u^{(b)} = \text{min}_{j \in \{1,2,\dots, J\}} \hat \kappa_u^{(b,j)}$ 
    \State Compute $\hat \kappa_l^{(b)} = \text{max}_{j \in \{1,2,\dots, J\}} \hat \kappa_l^{(b,j)}$
\EndFor 
\State Let $\hat \kappa_{u,(m)}$ be the $m^{th}$ order statistic of $\{\hat \kappa_u^{(b)}\}_{b = 1}^B$
\State Let $\hat \kappa_{l,(m)}$ be the $m^{th}$ order statistic of $\{\hat \kappa_l^{(b)}\}_{b = 1}^B$
\State Let $m = \min\{m : \hat \kappa_{u,(m)} \geq \hat \kappa_{l,(B - m)}\}$ \\
\Return $p$-value: $\min\{2m/B, 1\}$
\end{algorithmic}
\end{algorithm}

The constant curvature test involves analyzing a sampling distribution of distance matrices ${\vect{\hat D}^{(b)}}_{b = 1}^B$ derived from Algorithm~\ref{alg:subsample}. It also uses collections of surrogate midpoint sets $\{y^{(j)}, z^{(j)}, m'^{(j)}\}_{j = 1}^J$ and the corresponding favorable triangle reference points $\{\vect{X}^{(j)}\}_{j = 1}^J$. These sets help estimate the upper and lower bounds of curvature. Practically, each set $\vect{X}^{(j)}$ is chosen using the method described in equation~\eqref{eq:triangle_tuning}. To reduce the variance of these upper and lower bound estimates across surrogate midpoint sets, we use the median across the reference points $\vect{X}^{(j)}$. This then gives us a collection of upper and lower bound estimates of the curvature across regions $\{\hat \kappa^{(j)}_l, \hat \kappa^{(j)}_u\}$. If the minimum of the upper bounds is less than the maximum of the lower bounds then this corresponds to a distance matrix that cannot be represented by a single curvature.  This process is repeated across the set $b \in {1,2,\dots, B}$ and the proportion of times these quantities cross can then be interpreted as a $p$-value for the constant curvature test.

In this test there exists an inherent trade-off: achieving tightly aligned midpoint sets provides better upper and lower bounds on curvature estimates, yet it is also essential to sample from regions of the latent space sufficiently distant to potentially reveal differences in curvature. The analysis primarily focuses on optimizing the choice of midpoint sets on the latent space, as the sampling adequacy in regions with varying curvature typically lies outside of the analyst's direct control.

\subsection{Simulations: Type 1 Error Control} \label{sec:simulations_test_type1_error}

Under the same simulation setup as in Section~\ref{sec:simulations_consistency} we can illustrate the coverage of the constant curvature test as a function of clique size.  For computational convenience, we restrict these to a maximum of size $12$.  We see in  Figure~\ref{fig:Estimator_Type_I_error} that this test tends to be overly conservative in small sample sizes, but returns to nearly nominal coverage when cliques are larger. This is due to the fact that a poorly aligning midpoints lead to more conservative bounds of $\kappa_l, \kappa_u$, however, better aligning midpoints are found in larger networks with more cliques.
\begin{figure}[htb!] 
    \centering
    \includegraphics[width=0.45\textwidth]{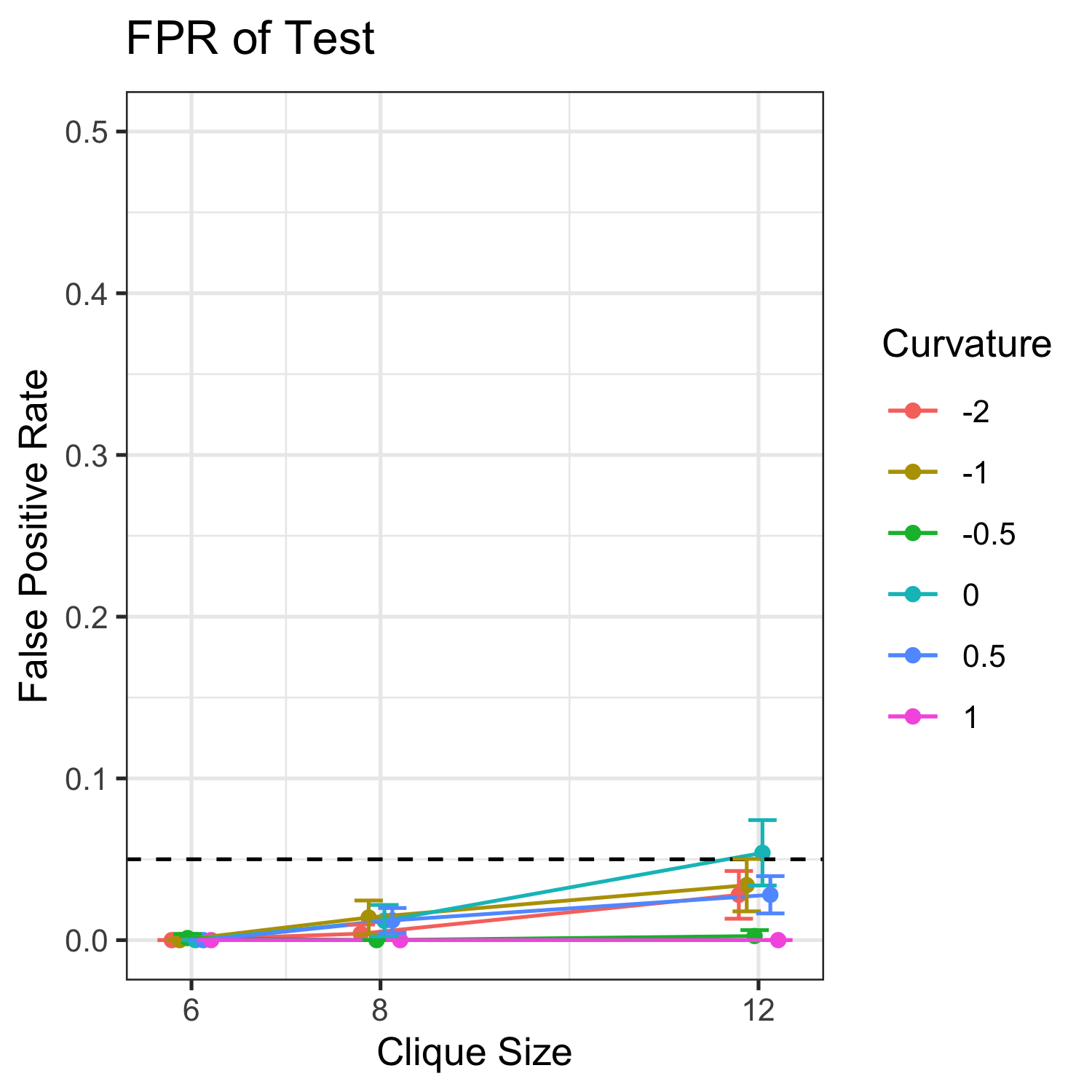}
    \caption{False positive rate for constant curvature test.} 
    \label{fig:Estimator_Type_I_error}
\end{figure}


\subsection{Multiplex Networks} \label{sec: multiplex network constant curvature tests}

In this next simulation, we consider a model of multiplex (or multiview) networks. Several methods exist for modelling multiplex networks via extensions of the latent distance model, however, most assume the same geometry latent space among views  \citep{Salter-Townshend2017LatentData, Macdonald2022latent}.  For example, \citet{Salter-Townshend2017LatentData} model the multiple relationships between individuals in the \citet{Banerjee2013TheMicrofinance} diffusion of microfinance dataset using Euclidean spaces.  We illustrate a simulated example where this is not the case, and how our method can be used to detect this. 

We first construct a latent position model for which multiple views are drawn from a common set of latent positions, however, these positions are common coordinates of spheres of curvature $(\kappa_1,\kappa_2) = (0.5,1.5)$ respectively. Additional details for the simulation are identical to the consistency simulations in Section~\ref{sec:simulations_consistency}. 

In this example, we simulate a multiplex network for which latent spherical positions are the same for this network set, however, they are embedded in two spheres of different radii and thus different curvatures. We simulate $200$ draws of these networks and test the curvature difference in curvature using Algorithm~\ref{alg:cc_test}. We plot the power of the test in Figure~\ref{fig:power_multiview}. In each view we compute the optimal surrogate midpoint set $y^{(j)}, z^{(j)}, m^{(j)}$ from each view's distance matrix $\vect{\hat D}^{(1)}$ or $\vect{\hat D}^{(2)}$ using equation~\ref{eq:surrogate_midpoint_problem} and subsample each distance matrix accordingly. In Figure~\ref{fig:power_multiview} we observe that the power of the test grows with the clique size. 

\begin{figure}[htb!]
    \centering
    \includegraphics[width = 0.45\textwidth]{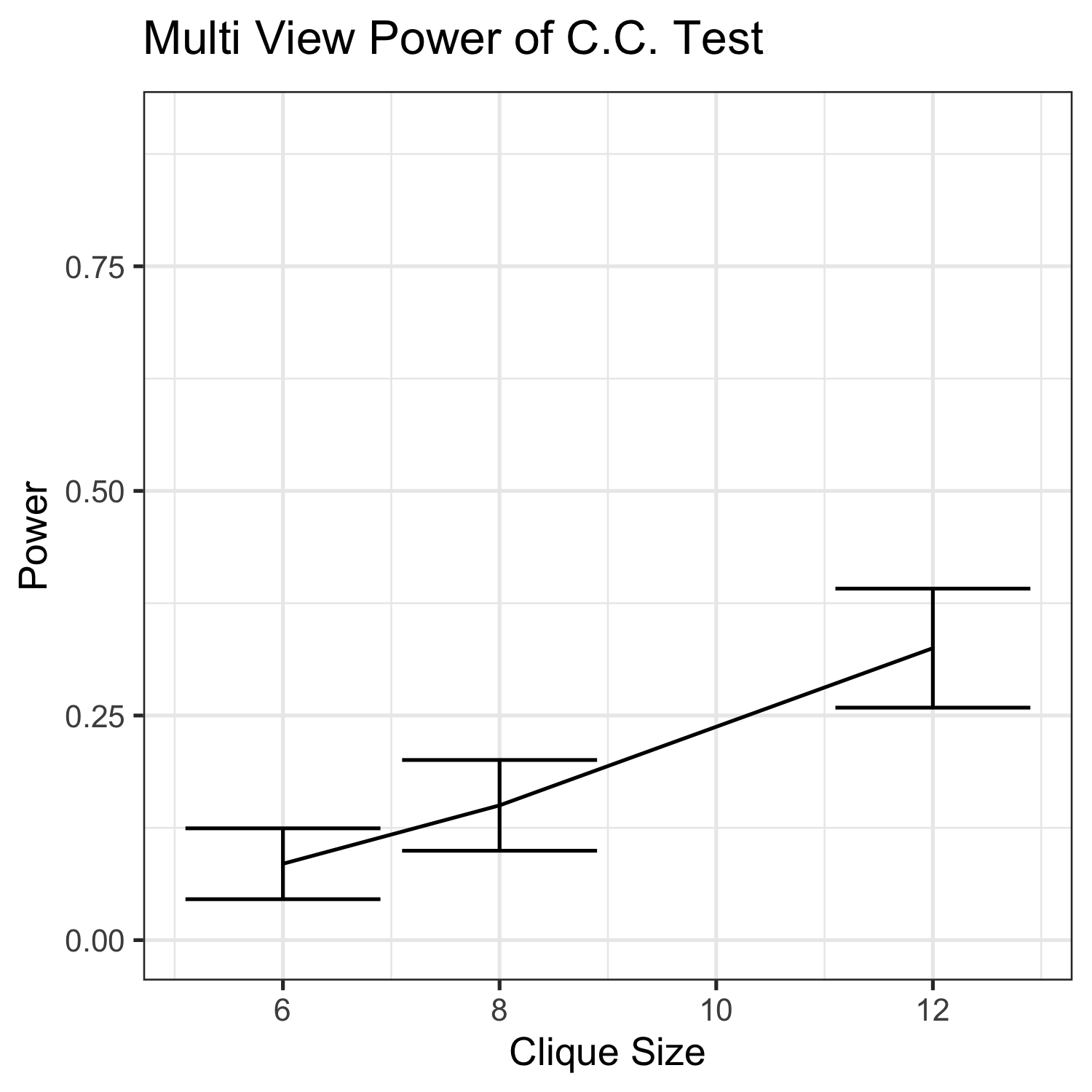}
    \caption{Constant curvature test power applied to the multiview network example.}
    \label{fig:power_multiview}
\end{figure}

\subsection{Noncanonical Manifolds} \label{sec: Noncanonical Manifolds}
We now demonstrate how our method can detect non-constant curvature in the latent space. Given that many prevalent latent distance models inherently assume constant curvature, it becomes crucial to confirm if this assumption aligns with the actual data observed.

We construct a latent space consisting of two adjacent spheres. For any two points in the same sphere the distance is straightforward to compute. For any two points $(x,y)$ in opposite spheres, the distance can be computed using the distance to the origin $(1,0,0)$ in each of the spheres. Since any geodesic must pass through the connecting point, i.e. the origin we can compute these distances as follows
\al{
 d_{\M}(x,y) &= d_1(x,o) + d_2(o,y).
}
This manifold was chosen as distances were straightforward to compute but come from a space without constant curvature. These spheres have curvature $\kappa_1 = 1, \kappa_2 = 1.5$ respectively. 

The latent cluster locations are sampled according to uniform distributions centered on opposite poles. We again simulate $200$ draws from the latent position cluster model and test for constant curvature by finding the best three non-overlapping sets minimizing equation~\eqref{eq:surrogate_midpoint_problem}. We plot the corresponding power in Figure~\ref{fig:power_ad_sphere} which once again,  increases along with the clique size.  

\begin{figure}[htb!]
    \centering
    \includegraphics[width = 0.45\textwidth]{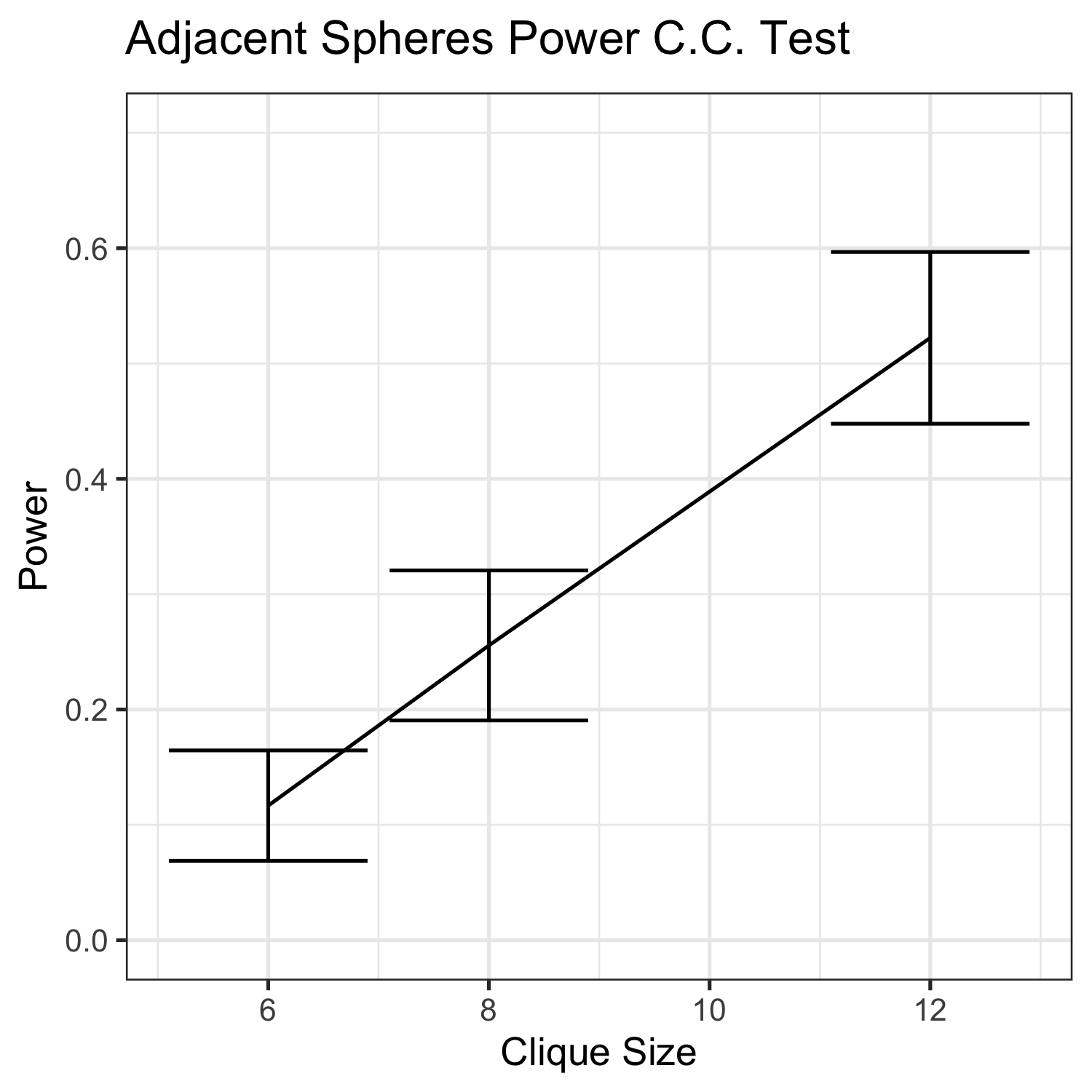}
    \caption{Power of the constant curvature test applied to the adjacent spheres simulation, an example of a non-canonical manifold, which does not have constant curvature.} 
    \label{fig:power_ad_sphere}
\end{figure}

\subsection{Data Analysis: Geometry of Coauthorship} \label{sec: data analysis geometry of coauthorship}
We apply our method to a collection of co-authorship networks in physics introduced in \citet{Leskovec2007GraphEvolution} and available on the \texttt{Stanford Network Analysis Project (SNAP)} repository. These consist of citation networks from High Energy Particle Physics, General Relativity, Astrophysics, and Condensed Matter Physics, with sizes of each of the networks seen in Table~\ref{tab:physics_co-authorship_sizes}. These networks consist of authors as nodes where an edge exists whether any pair of authors has co-authored a paper on ArXiv in any of the specified subject areas between 1993 and 2003. When these data were collected, these were the top 5 most common subject areas in Physics. 

In previous machine learning applications, hyperbolic network embeddings have been successful in tasks such as link-prediction or node in citation networks \citep{Nickel2017PoincareRepresentations, Chami2019HyperbolicNetworks, Chamberlain2017NeuralSpaceb}. This is due to the fact that the hierarchical structure (tree-like) structure generally can be more easily embedded in a negatively curved space.  We wish to answer the question ``under a latent distance model, could the data be generated from a space of constant curvature?"

For each of these networks, we construct an estimate of the distance matrix with random effects, followed by an estimate of the curvature.  For each of these, we use a minimum clique size seen in Table~\ref{tab:physics_application_estimator_params}. We lastly apply our test to see if the difference in curvature is present across the networks.  We estimate the curvature, at the best midpoint set for each of the networks, along with the following p-values for tests of whether a network has constant curvature. 

\begin{table}[htb!] \label{tab:physics_network_summaries}
\centering
 
\begin{tabular}{|l|l|l|}
\hline
Physics Sub-field                      & Number of Nodes (n) & Number of Edges ($|E|$) \\ \hline
Astrophysics                          & 18771               & 396160              \\ \hline
Condensed Matter Physics              & 23133               & 186936              \\ \hline
General Relativity                    & 5241                & 28980               \\ \hline
High Energy Particle Physics          & 12006               & 237010              \\ \hline
High Energy Particle Physics (Theory) & 9875                & 51971               \\ \hline
\end{tabular}
\caption{Physics Co-authorship network sizes. }
\label{tab:physics_co-authorship_sizes}
\end{table}

\begin{table}[htb!] 
\centering
\begin{tabular}{|l|l|l|l|}
\hline
Physics Sub-field                      & \begin{tabular}[c]{@{}l@{}}Min Clique \\ Size ($\ell$)\end{tabular} & \begin{tabular}[c]{@{}l@{}}Number of Cliques \\ of Size $\geq \ell$ (K)\end{tabular} & \begin{tabular}[c]{@{}l@{}}Largest \\ Clique Size\end{tabular} \\ \hline
Astrophysics                          & 19                                                               & 57                                                                           & 57                                                             \\ \hline
Condensed Matter Physics              & 12                                                               & 52                                                                           & 26                                                             \\ \hline
General Relativity                    & 7                                                                & 44                                                                           & 44                                                             \\ \hline
High Energy Particle Physics          & 14                                                               & 42                                                                           & 239                                                            \\ \hline
High Energy Particle Physics (Theory) & 7                                                                & 42                                                                           & 32                                                             \\ \hline
\end{tabular}
\caption{Clique size and number of cliques used to estimate distance matrix. }
\label{tab:physics_application_estimator_params}
\end{table}

\begin{table}[htb!] 
\centering
\begin{tabular}{|l|l|l|}
\hline
Physics Sub-field                      & Curvature Estimate & \begin{tabular}[c]{@{}l@{}}Constant Curve\\ Test p-value\end{tabular} \\ \hline
Astrophysics                          & -0.01378               & 0.030                                                         \\ \hline
Condensed Matter Physics              & 0.107              & 1.000                                                                \\ \hline
General Relativity                    & 0.0989            & 1.000                                                                 \\ \hline
High Energy Particle Physics          & -0.986             & 1.000                                                                 \\ \hline
High Energy Particle Physics (Theory) & 0.1674              & 0.240                                                                 \\ \hline
\end{tabular}
\caption{Curvature estimates from best midpoints and p-values for constant curvature test $J = 3$. }
\label{tab:physics_curve_pvals}
\end{table}

\begin{figure}[htb!] 
    \centering
    \begin{subfigure}[b]{0.25\textwidth}
        \centering
        \includegraphics[width=\textwidth]{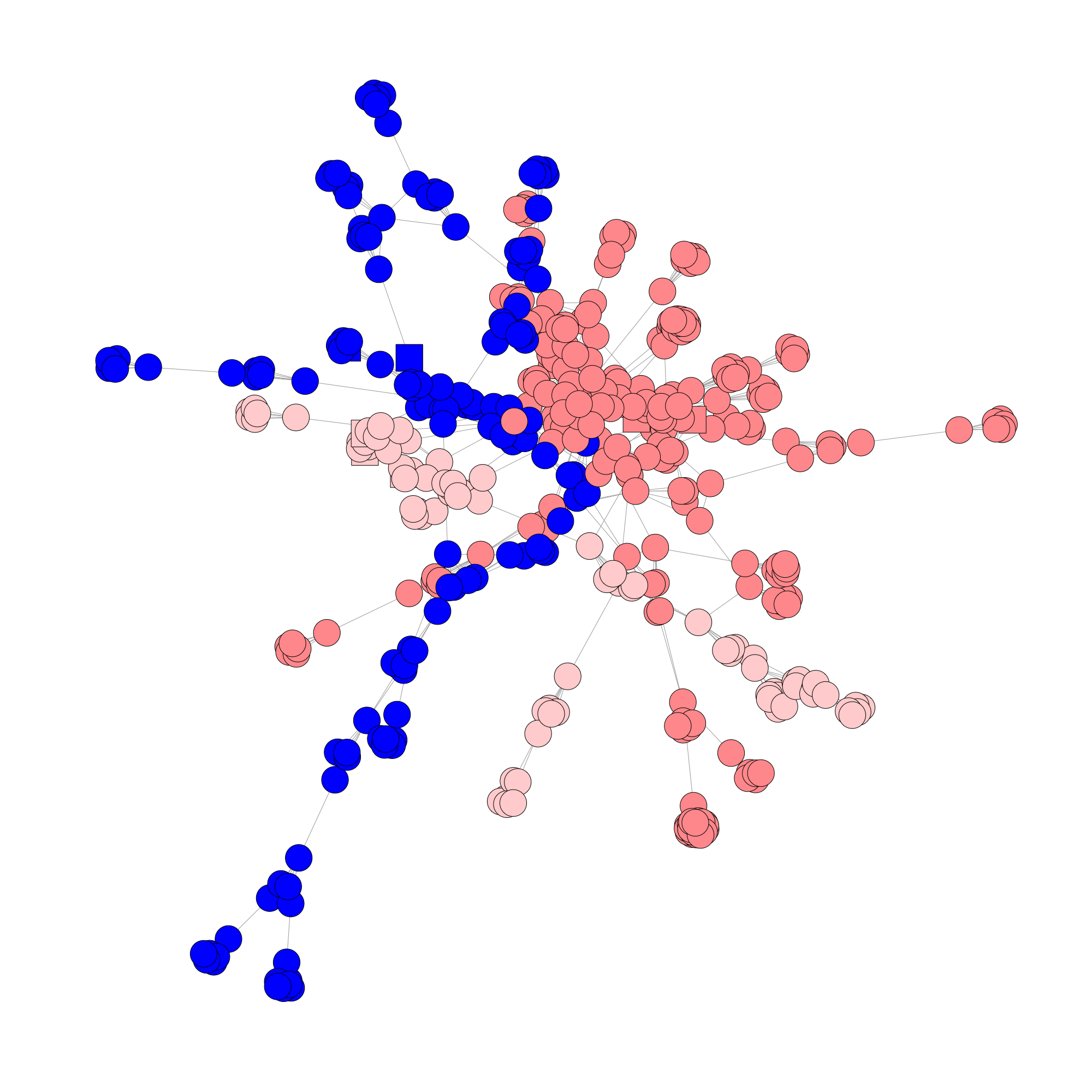}
        \caption{High Energy Particle Physics (Theory) clique curvature labels. }
    \end{subfigure}%
    ~ 
    \begin{subfigure}[b]{0.25\textwidth}
        \centering
        \includegraphics[width=\textwidth]{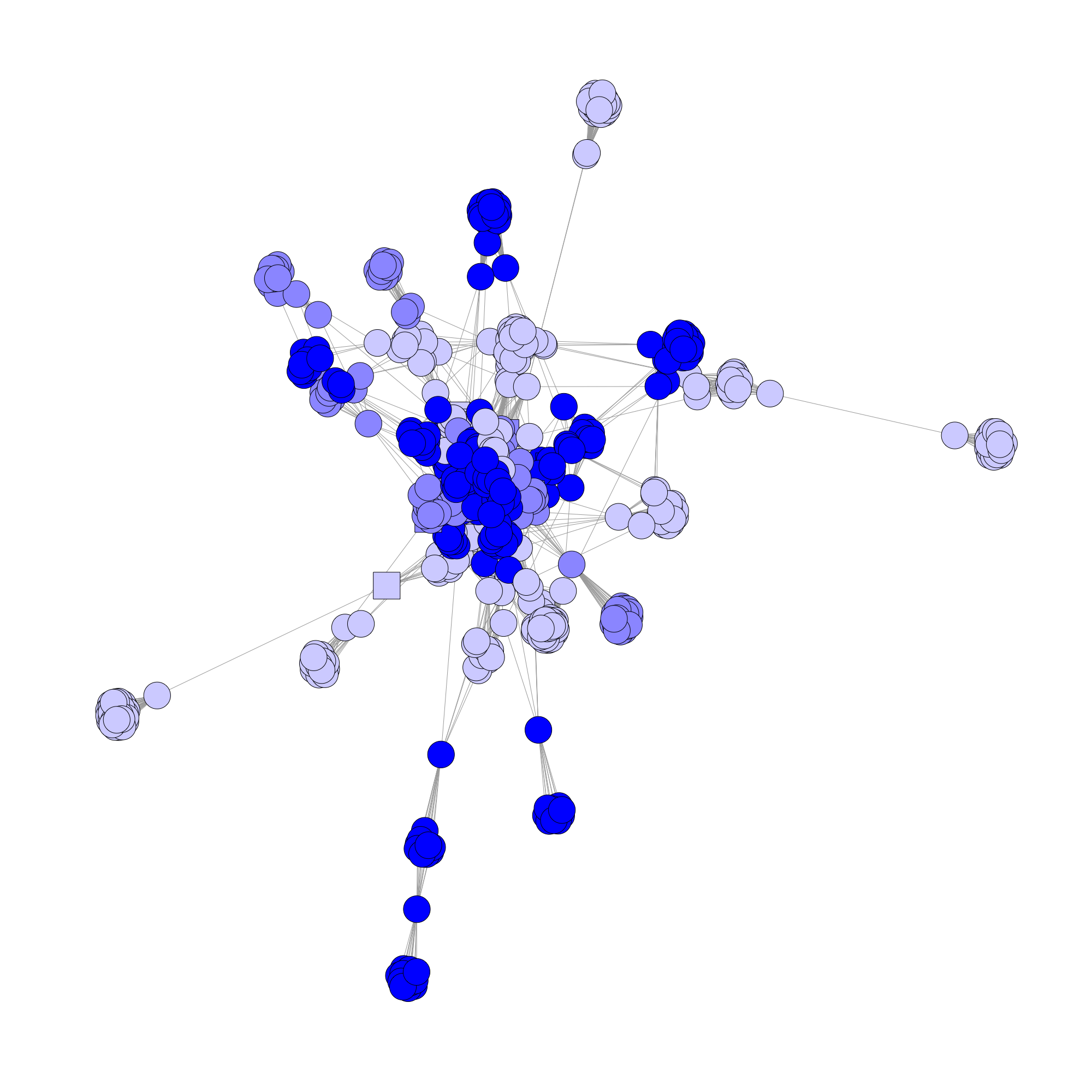}
        \caption{High Energy Particle Physics clique curvature labels. }
        \label{fig:curvature_partition:subfig:HEP_applied}
    \end{subfigure}%
    ~ 
    \begin{subfigure}[b]{0.25\textwidth}
        \centering
        \includegraphics[width=\textwidth]{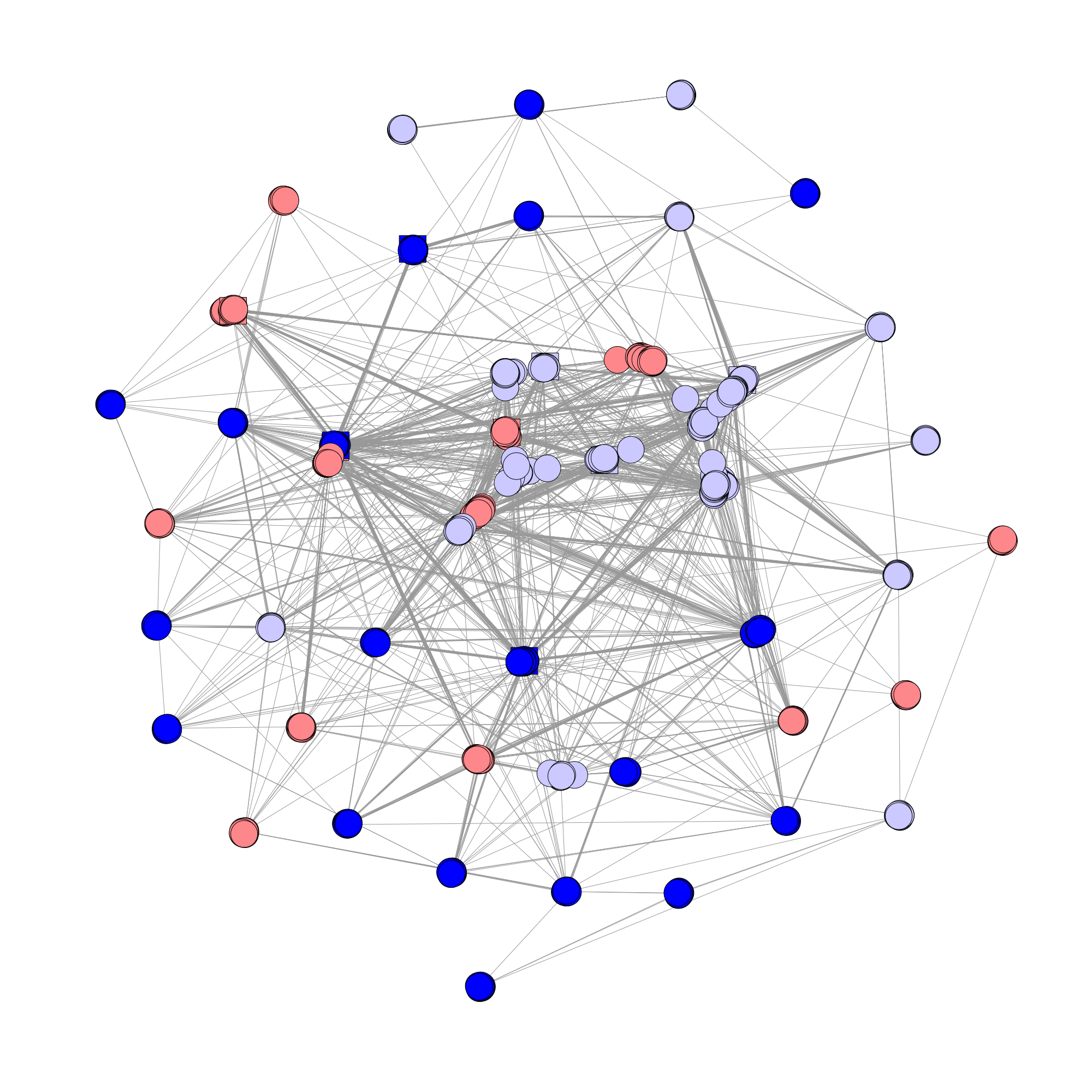}
        \caption{Astrophysics clique curvature labels. }
    \end{subfigure}
    \caption{Cliques colored to nearest labeled curvature value.  Blue, negative and red positive, with $p$ values of constant curvature decreasing from left to right. }
    \label{fig:curvature_partition}
\end{figure}

One concern a reader may have is whether good surrogate midpoints form in real data.  In practice, we find good midpoints can be found in our real datasets. For example, the Fr\'echet mean objective function~\eqref{eq:midpoint_obj} has a value at the true midpoint of $0.5$. In the 5 physics co-authorship networks this was found to be $0.5000003$, $0.5039$, $0.50049$, $0.50012$, $0.50043$, suggesting good surrogate midpoints in practice. 

We see that Astrophysics and General Relativity are estimated to have a large negative curvature, which however, may not be constant in the case of the Astrophysics citation network. In the Astrophysics network, at the best 3 surrogate midpoint sets, we estimate curvature to be $(-0.01378, -\infty, 0.306)$.  The estimate of $-\infty$ comes from the fact there is a minimum distance of $d_{xm}$ given the other 3, $d_{xy},d_{yz},d_{xz}$, and since the midpoints are not exact, sometimes this length can be too short to embed in any of the hyperbolic spaces.  If the estimated distance is below this value we call the estimate $-\infty$, and similarly $\infty$ if it is too large.  This highlights the fact that this network appears to have negatively curved, positively curved and flat regions, and therefore models which reflect only a single curvature, may not capture the individual level behavior of the network very well. In contrast, condensed matter physics and HEP physics seem to have a slight curvature, though both are nearly flat networks. We remark on the large $p$-values for 3 of the networks in Table~\ref{tab:physics_curve_pvals}. In these settings, the second and 3rd midpoint sets proved to have poorer alignment and therefore were quite conservative when constructing the constant curvature test, leading to larger $p$-values.

\section{Application: Multiple Change Point Detection} \label{sec:application_changepoint}
Another natural question one may seek to answer is whether a change in latent curvature has occurred.  This is distinct from a problem of whether the latent positions may change.  Standard multiple change point algorithms (for example that of \citet{harchaoui2010multiple}) are not immediately well-suited to this problem due to the possibility of large outliers, which may occur in our setting, in particular for largely negative values. We apply the method of \citet{Fearnhead2016ChangepointOutliers} which proposes a multiple changepoint detection algorithm under the presence of outliers.  For a particular network, we may measure a collection of estimates of the curvature $\{\hat \kappa_{t}\}_{t = 1}^{T}$.  Under this model, we assume that a network has a constant curvature within a single time point $t$.  We construct an objective function for the changepoint problem
\al{
    L(\theta) &= \sum_{t = 1}^T \tilde \ell( \hat \kappa_{t},\theta_t) 
}

In our application, we let $\tilde \ell$ be the bi-weight loss function, however, more general loss functions such as absolute deviation or Huber loss are also available (see \citet{Fearnhead2016ChangepointOutliers} for a more in depth discussion of robust change point detection). In multiple change point detection algorithms a penalty of $\beta$ for the number of segments included is also applied.  Let $J(\theta)$ denote a function of the number of breaks in the sequence $\theta$.  Then the full loss function is 
\begin{equation}
    L(\theta; \beta) = L(\theta) + \beta J(\theta). 
\end{equation}

Since we often care about understanding where the changes of curvature occur, we can apply a monotone transformation to the estimates of curvature. The function $c\tanh(\cdot/c)$ function which smoothly truncates the extreme values under a monotone transformation.  In all simulations and applications where this is applied, we set $c = 10$.

We next apply this to a simulation setting where we construct a sequence of networks with latent positions evolving according to the following process
\al{
    Z^{(t)}_{i} &= \mathcal{F}(Z^{(t - 1)}_i, \epsilon^{(t)}_{i})
}
where $Z_{(t - 1)i}$ is the location's previous position, $\epsilon^{(t)}_{i}$ is a noise random variable sampled from the true cluster's density, and $\mathcal{F}$ stands for the mean on the sphere (the Fr\'echet mean).  We set three different curvatures $\kappa_1 = 1.0 \quad \kappa_2 = 0.15 \quad \kappa_3 = 1.3 $ to occur changepoints at $t_1 = 18, t_2 = 35$ and with the final time $T = 50$. 

We use the implementation of the changepoint method in the \texttt{robseg} package introduced in \citet{Fearnhead2016ChangepointOutliers} using the default regularization parameters.  We see that as clique size $\ell$ increases, we are able to consistently estimate the true curvature function. We plot the mean of the absolute deviation of the curvature estimate over $200$ simulations and plot the results in Figure~\ref{fig:rmse_changepoint} showing consistency of the true curvature with respect to the mean absolute deviation. 

\begin{figure}[htb!]
    \centering
    \includegraphics[width = 0.45\textwidth]{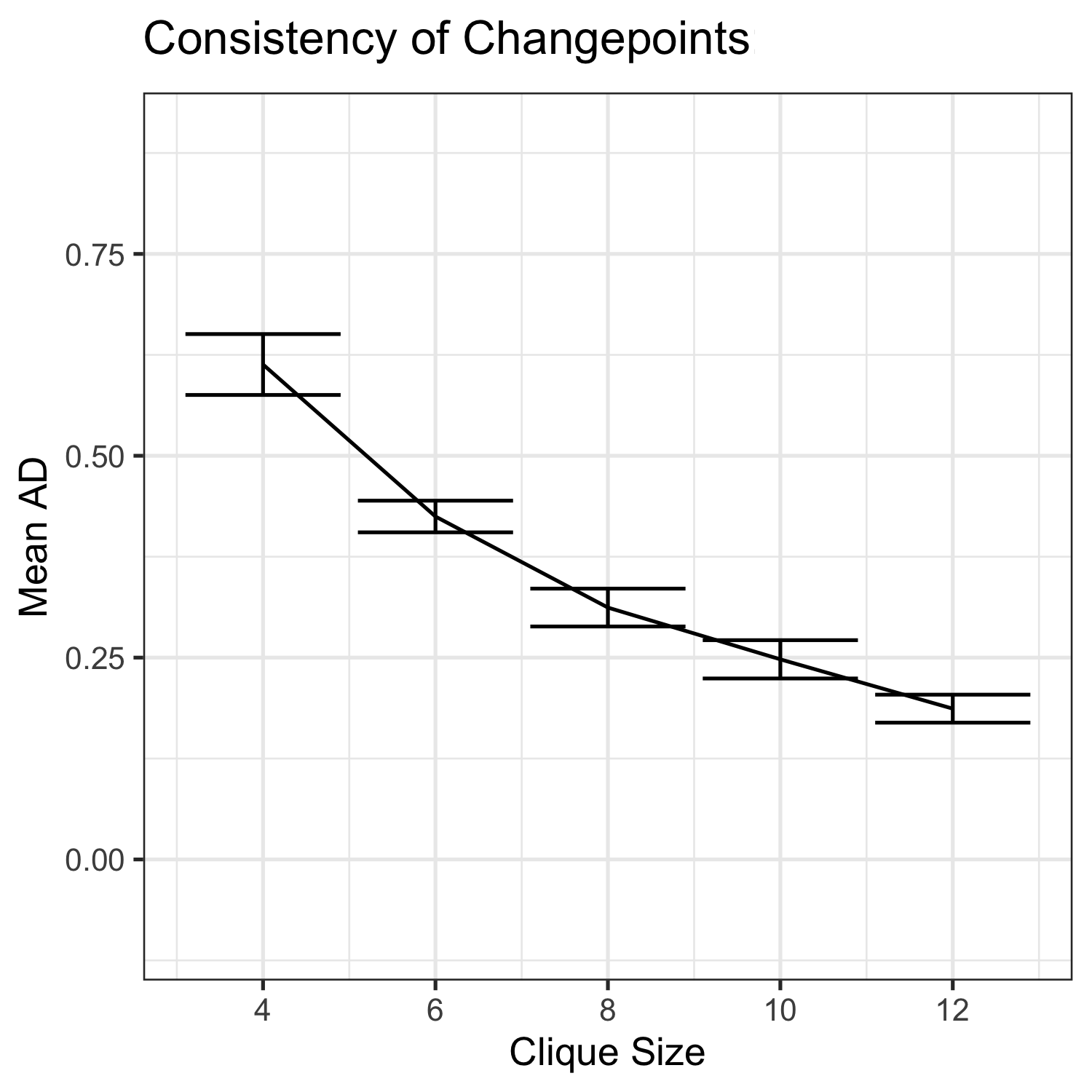}
    \caption{Mean absolute deviation of curvature estimates over time window.}
    \label{fig:rmse_changepoint}
\end{figure}

\subsection{Application: Cybersecurity} \label{sec:application_cybersecurity}

For our curvature changepoint application, we utilize our network curvature estimates on the Los Alamos Unified Network and Host Dataset to demonstrate that variations in curvature can help identify a red team attack; a controlled exercise where a cybersecurity team simulates an infiltration on a device network to test its security.

This dataset encompasses $89$ days of directed communication events among $27436$ devices at the Los Alamos National Laboratory. It records 56 normal operational days followed by a red team attack that spans from day $57$ to day $89$.

Anomaly detection holds significant importance in cybersecurity, and recent studies, such as \citep{Lee2019AnomalyModels}, have highlighted latent distance models as a promising method for detecting changes in node properties. In contrast, our research adopts changepoint methods applied to sequential curvature estimates within this dataset, underscoring the utility of curvature as a comprehensive indicator of network behavior.

Edges are defined in this dataset as messages passed between nodes during a particular time period. In order to maintain enough connections to find cliques, we consider a connection to be formed if any message was passed in the previous $4$ days.  We then compute the curvature values at each time point and take the median over the time-point. We scale each estimated value by $c\tanh(\cdot/c)$ for $c = 10$ in order to limit the influence of extreme negative outliers. We then apply the off the shelf change point algorithm of \citet{Fearnhead2016ChangepointOutliers}. 
We estimate the curvature at each time step.  We consider a minimum clique size of $\ell = 5$. Due to the small number of available cliques, and relative sparsity of the dataset, we restrict the random effects to be $0$ and compute the corresponding distance matrix. 

\begin{figure}[htb!] 
    \centering
    \begin{subfigure}[b]{0.35\textwidth}
        \centering
        \includegraphics[width=\textwidth]{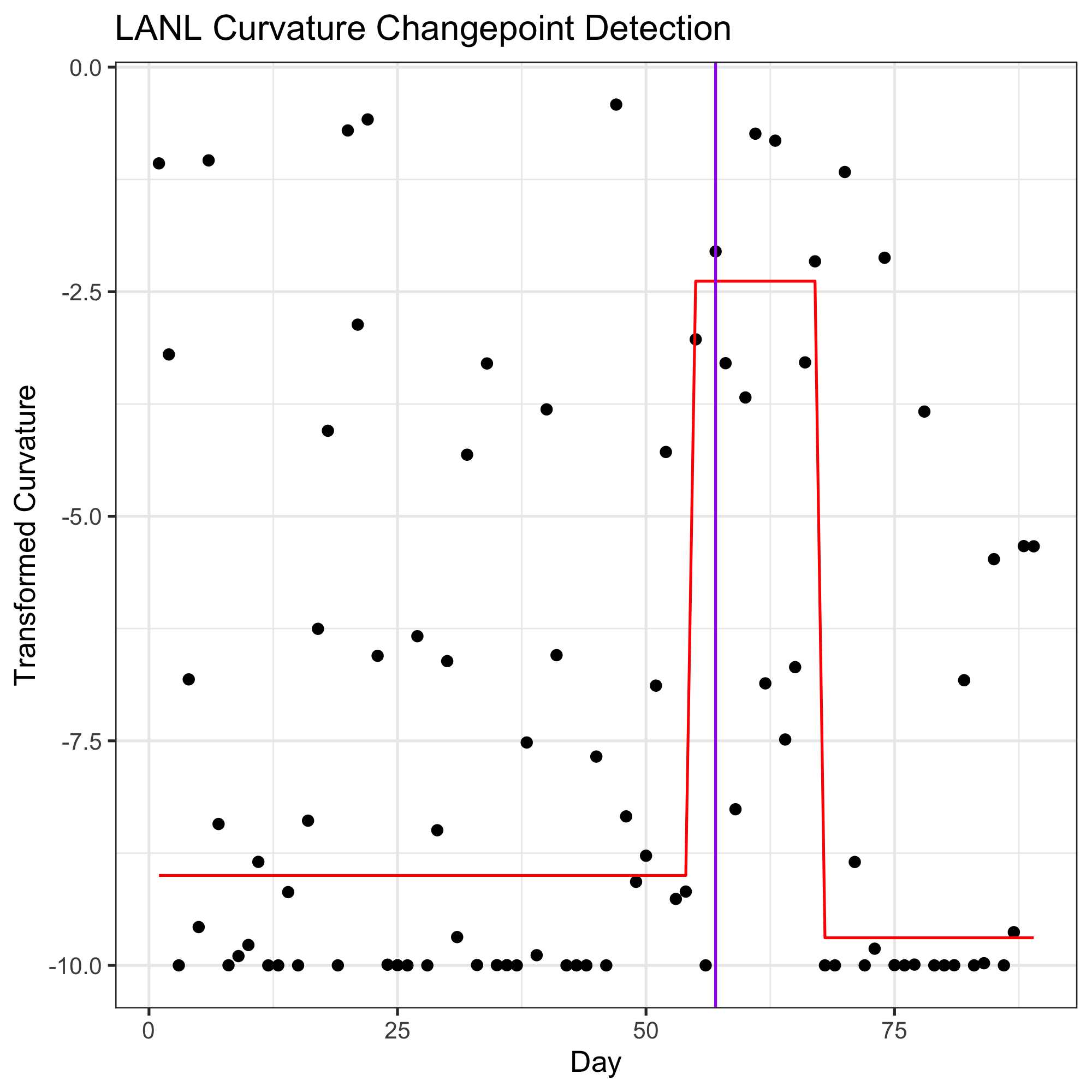}
        \caption{High regularization changes in curvature estimates.}
    \end{subfigure}%
    ~ 
    \begin{subfigure}[b]{0.35\textwidth}
        \centering
        \includegraphics[width=\textwidth]{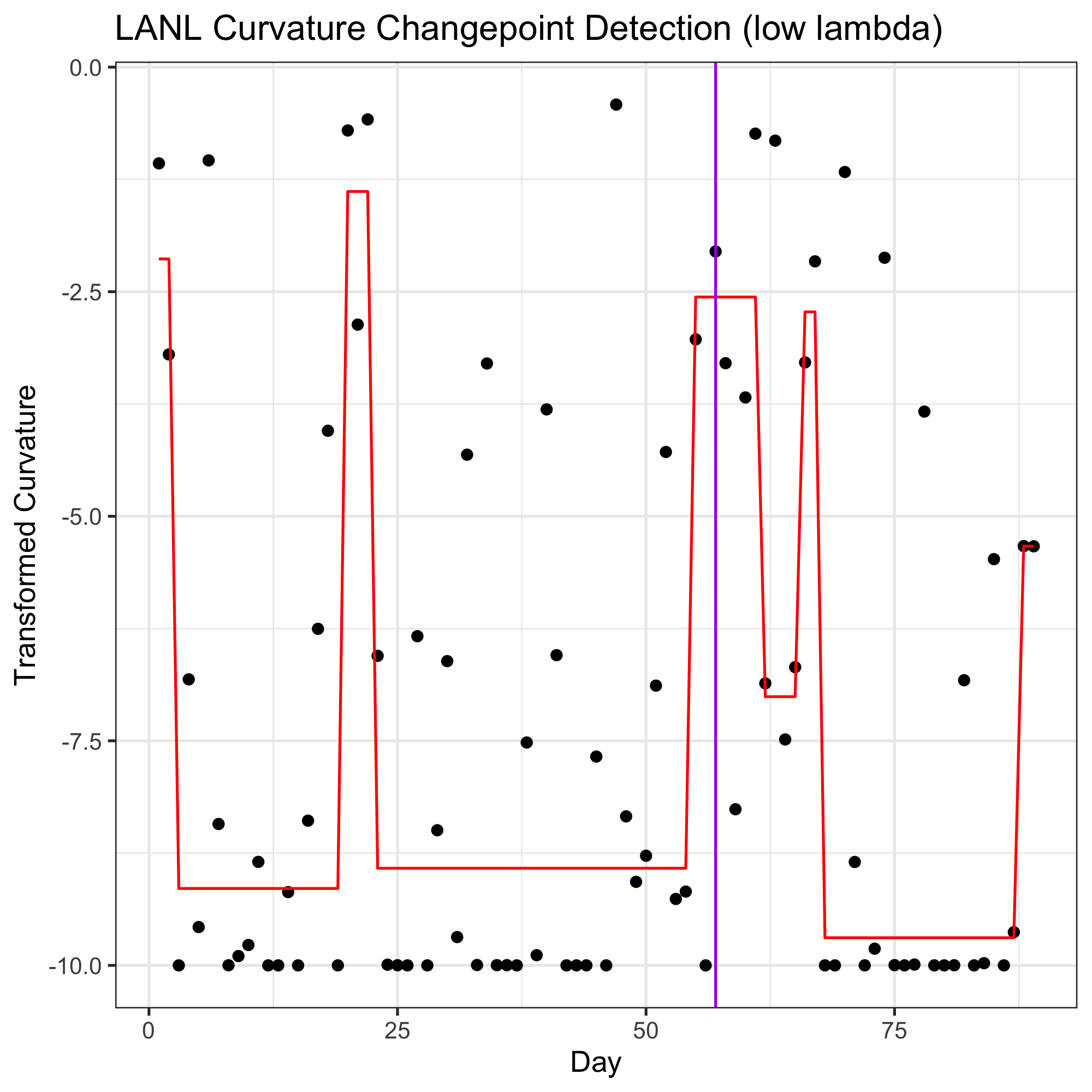}
        \caption{Low regularization changes in curvature estimates. }
    \end{subfigure}
    \caption{Two regularization values for LANL Netflow changepoint dataset. True red team attack time is illustrated in purple. }
    \label{fig:LANL_curvature_time_series}
\end{figure}

\begin{figure}[htb!] 
    \centering
    \begin{subfigure}[b]{0.35\textwidth}
        \centering
        \includegraphics[width=\textwidth]{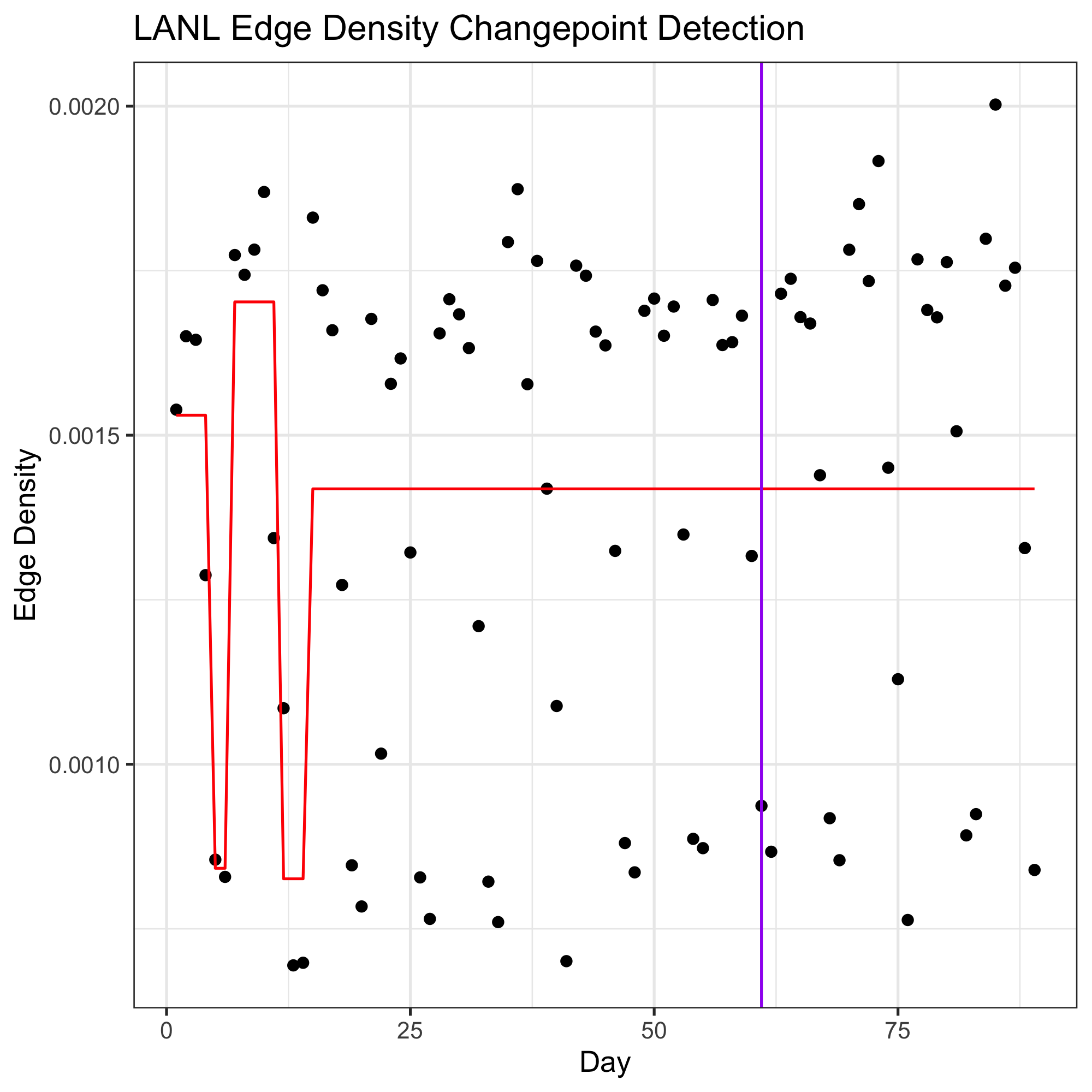}
        \caption{Edge density changepoints.}
    \end{subfigure}%
    ~ 
    \begin{subfigure}[b]{0.35\textwidth}
        \centering
        \includegraphics[width=\textwidth]{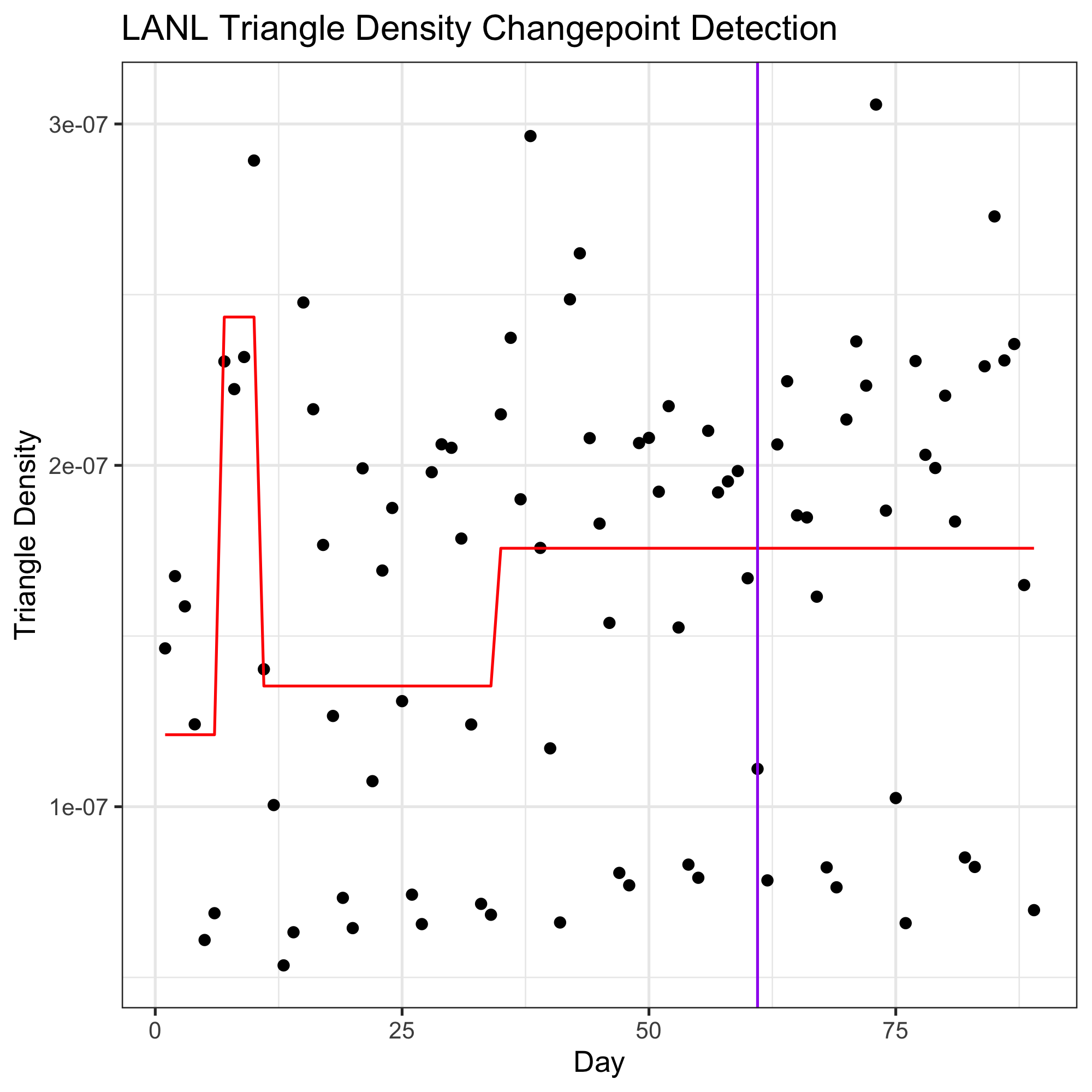}
        \caption{Tri. density changepoints.}
    \end{subfigure}
    \caption{Changepoints of daily LANL measurements using simple graph motifs. Red team attack illustrated in purple. }
    \label{fig:LANL_graph_motifs}
\end{figure}

In Figure~\ref{fig:LANL_curvature_time_series} we show that the most substantial changepoint in curvature occurs at the time of the red team attack.  In contrast, these changes are much less substantial in Figure~\ref{fig:LANL_graph_motifs} when using simple graph motifs from the daily averages.  

Since the time of detection after the event is most important, we wish to investigate the time after detection as a function of the number of events involved (alarm rate).  We show that in Figure~\ref{fig:detection_budget} that our method achieves a much smaller detection delay given any alarm rate. 

\begin{figure}[htb!] 
    \centering
    \includegraphics[width=0.45\textwidth]{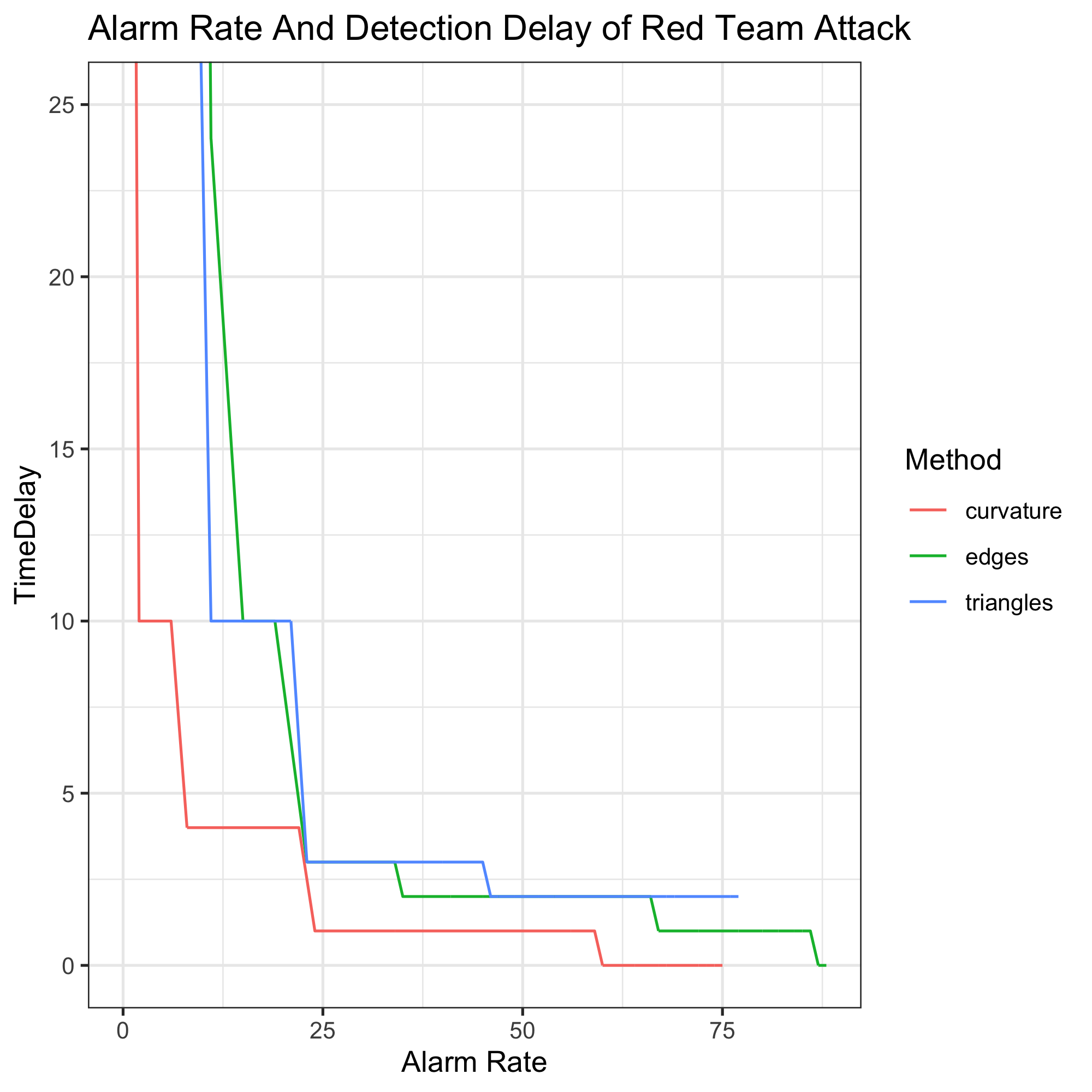}
    \caption{Time to first changepoint in days (TimeDelay) as a function of alarm rate. }
    \label{fig:detection_budget}
\end{figure}

This suggests that models accounting for changes in network curvature may be a promising avenue for the development of specialized models to detect anomalous events in the online setting.

\section{Discussion} \label{sec:discussion}

Riemannian sectional curvature is a fundamental property of a manifold, and we present a novel method to estimate it from a noisy distance matrix.  Though our motivating example involves estimating the distances of a latent distance model from a random network, the curvature estimates (and constant curvature tests) in this paper are more general and can be applied whenever one can either estimate a distance matrix (or bootstrap or subsample their distance matrices).

We develop a test for detecting whether the curvature is constant on the latent manifold.  A natural followup question is what should a practitioner do if they find that their data are not represented well by this model.  One might instead use non-geometric models such as stochastic block models and their variants, however, this also suggests the development of latent distance models which are not restricted to constant curvature. The development of such a model class as something which will scale to large networks is of further interest. One promising approach we plan to investigate in future work is that of product spaces for latent distance models. This geometry has seen considerable success in representation learning tasks such as \citet{gu2018learning} or \citet{zhang2021switch}.

One might question the interpretation of the curvature parameter $\kappa$ when the set of points does not reside within a canonical manifold. Our approach specifically fits a constant curvature manifold for each quartet $(x,y,z,m)$, embedding these four distances. By taking the median, we estimate the median curvature of these interpolating spaces. This concept echoes the manifold learning methods proposed by \citet{Li2017efficient, Li2019geodesic}, which use spherelets to approximate manifolds rather than using locally linear approximations of tangent spaces.

Future extensions may include methods for fitting models in the product space geometry (as is done in \cite{gu2018learning}) or other non-constant curvature spaces. Other methodology may include statistically consistent node-level definitions of network curvature as well as more application-focused development of anomaly detection incorporating curvature into the models.  Additional work may include using our local definition of curvature to understand its role in relation to notions of brokerage in the sociology literature \citep{burt1992structural, buskens2008dynamics}, as well as considering how curvature interacts with other quantities of interest across other sciences. Furthermore, additional applications may use non-parametric distance estimators to identify behaviors that form good midpoints in networks, such as overlapping sub-fields within physics in our applications.

\bibliographystyle{abbrvnat}
\bibliography{references}


\appendix

\section{Proofs of Theorems}
\subsection{Proof of Lemma~\ref{lem:submanifold_of_dim_2}} \label{sec:proof_submanifold_of_dim_2}

In order to prove this lemma, we first introduce a useful result. 
\begin{theorem}[Theorem 1.10.15 in \citet{Klingenberg1995RiemannianGeometry}] \label{thm:isometry}
    If $f:(\bar M, \bar g) \to (\bar M, \bar g)$ is an isometry on a Riemannian manifold, then the fixed point set of $f$ forms a totally geodesic submanifold. 
\end{theorem}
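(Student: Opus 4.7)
The plan is to analyze the fixed point set $F = \{p \in \bar M : f(p) = p\}$ locally around each $p \in F$ using the exponential map and the linear algebra of the differential $df_p$. Since $f$ is an isometry fixing $p$, the differential $df_p : T_p\bar M \to T_p\bar M$ is a linear isometry of the inner product space $(T_p\bar M, \bar g_p)$. Let $V_p := \ker(df_p - \mathrm{id}) \subset T_p\bar M$ be its $+1$-eigenspace. My claim is that, in a normal neighborhood of $p$, $F$ coincides with $\exp_p(V_p)$.

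First I would use the key compatibility between isometries and the exponential map: $f \circ \exp_p = \exp_{f(p)} \circ\, df_p = \exp_p \circ\, df_p$. From this, for $v \in V_p$ and small $t$, $f(\exp_p(tv)) = \exp_p(t\, df_p(v)) = \exp_p(tv)$, so the entire geodesic $\gamma_v(t) = \exp_p(tv)$ lies in $F$. Conversely, if $q = \exp_p(w) \in F$ for some $w$ in a normal neighborhood, then $\exp_p(df_p(w)) = f(q) = q = \exp_p(w)$, and injectivity of $\exp_p$ on the normal ball forces $df_p(w) = w$, i.e.\ $w \in V_p$. This identifies $F$ locally with a linear subspace mapped in via $\exp_p$, giving $F$ the structure of an embedded smooth submanifold with $T_pF = V_p$.

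The totally geodesic property is then essentially free from the construction: any geodesic in $F$ starting at $p$ with initial velocity $v \in T_pF = V_p$ equals $\exp_p(tv)$, which is by definition a geodesic in $\bar M$. To upgrade this from a single basepoint $p$ to every point of $F$, I would note that $F$ is closed (as a fixed point set of a continuous map) and that any $q \in F$ is itself a fixed point, so the same local description applies there, yielding $T_qF = \ker(df_q - \mathrm{id})$ and the same geodesic-preservation argument. Different connected components of $F$ may a priori have different dimensions, but each component is a totally geodesic submanifold of constant dimension.

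The main obstacle I expect is the careful verification that $F$ is genuinely a smooth submanifold (rather than merely a closed set locally parametrized by a subspace), which requires checking that the local parametrizations $\exp_p\big|_{V_p \cap B_\varepsilon(0)}$ assemble into consistent charts as $p$ varies within the same component. The cleanest way to handle this is to observe that $\exp_p$ is a local diffeomorphism on a normal neighborhood and restricts to a diffeomorphism onto its image from the linear subspace $V_p$, so the local equality $F \cap U = \exp_p(V_p \cap B_\varepsilon(0))$ directly provides a slice chart. Everything else reduces to the identity $f \circ \exp = \exp \circ\, df$ and the injectivity of the exponential map on normal neighborhoods.
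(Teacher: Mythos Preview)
Your argument is correct and is essentially the standard proof of this classical fact (exactly the one found in Klingenberg): use $f\circ\exp_p=\exp_p\circ\,df_p$ at a fixed point $p$ to identify $F$ locally with $\exp_p(V_p)$ where $V_p=\ker(df_p-\mathrm{id})$, obtain slice charts from the local diffeomorphism property of $\exp_p$, and conclude totally geodesic because ambient geodesics tangent to $F$ remain in $F$. One small point worth stating explicitly is the equivalence you are implicitly using at the end: a submanifold is totally geodesic iff every ambient geodesic that starts tangent to it stays in it locally; you have proved the latter, which is the standard criterion.

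Regarding comparison with the paper: the paper does \emph{not} supply its own proof of this theorem. It is quoted as Theorem~1.10.15 of Klingenberg and invoked as a black box in the proof of Lemma~\ref{lem:submanifold_of_dim_2}, where the authors construct an explicit isometry of $\Sp^p(\kappa)$ (a rotation composed with coordinate sign flips) whose fixed set is the $2$-dimensional totally geodesic sphere containing the triangle. So there is nothing to compare against; your write-up fills in the cited result rather than reproducing anything the paper does.
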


Using this theorem, if we can construct an isometry, a bijective isomorphism between two metric spaces (in this case, the submanifold within the canonical manifold which contains the triangle, as well as the canonical manifold of dimension 2), then the fixed points of the set will form a totally geodesic submanifold which will be useful for constructing our fixed point equation.  

\begin{proof}
    We will prove this by first considering constructing a change of basis to parameterize the sub-manifold using the first $3$ coordinates, then we will construct the isometry between the manifold of dimension $p \geq 2$ and that of dimension $2$.

    Consider  a set of 3 points  $(x,y,z) \in \Sp^p(\kappa)$ which are not co-linear.  
    We can construct an orthogonal matrix (rotation matrix) $Q \in \R^{p+1 \times p+1}$ which allows us to construct a rotational isometry. 
    
    WLOG, in our coordinate system, we place $x$ at the origin, i.e. $x = (1,0, \dots, 0)$. We construct an orthogonal rotation matrix $Q$ which rotates the coordinates into the first $3$ indices, with the rest of them being $0$. 
    Let $\tilde q_i$ denote the $i^{th}$ un-normalized column vector of $Q$, and $q_i = \tilde q_i/ \norm{\tilde q_i}_2$.  We the define the following first $3$ basis functions as the normalized projections  of the components of $y$ and then $z$:
    \al{
        \tilde q_1 &= [1,0,\dots, 0]^\top \\
        \tilde q_2 &= y - (y^\top q_1) q_1 \\
        \tilde q_3 &= z -  (z^\top q_2) q_2 -  (z^\top q_1) q_1.
    }
    Since $x,y,z$ are not colinear, then $q_1, q_2, q_3$ are independent and are the normalization's of $\tilde q_1, \tilde q_2, \tilde q_3$ respectively and are orthogonal.  The remaining columns of $Q$ are the completion of the orthonormal basis with any remaining orthogonal basis vectors of $\R^p$. Thus we have constructed an orthogonal matrix $Q$ and $Q^\top Q = I = QQ^\top$.  
    This matrix $Q$ can be used to construct an isometry $f_1(\cdot): \Sp^p \mapsto \Sp^p$ where 
    \al{
        f_1(x) &= Q^\top x \\
        d(f_1(x),f_1(y)) &= \frac{1}{\sqrt{\kappa}}\acos\left( f_1(x)^\top f_1(y) \right) \\
        &= \frac{1}{\sqrt{\kappa}}\acos\left( x^\top Q Q^\top y \right) \\
        &= \frac{1}{\sqrt{\kappa}}\acos\left( x^\top y \right) \\
        &= d(x,y).
    }
    Hence, this rotation is an isometry. Therefore, we can equivalently use the parameterization where the points \( x, y, z \) have non-zero coordinates only in the first three indices.
    
    Next, let $f_2(x) = (x_0,x_1,x_2,-x_3,-x_4, \dots, -x_p)$ denote an second mapping.  Under this transformation
    \al{
        f_2(x)^\top f_2(y) &= \sum_{i = 0}^2(x_i)(y_i) + \sum_{i = 3}^p (-x_i)(-y_i) \\
        &= x^T y \\
        \implies d(f_2(x),f_2(y)) &= d(x,y) 
    }
    and thus $f_2$ is also an isometry. Since the composition of isometries is also an isometry. By the composition of this rotation and sign flip of coordinates, we can construct the corresponding totally geodesic submanifold $\mathcal{M}^2(\kappa)$ as follows.  Lastly the totally geodesic submanifold of dimension $2$ can be constructed using the set of points satisfying $\{Q v\}$ for $ v = [v_0,v_1,v_2, 0,0,\dots, 0]^\top$ for any $v_0, v_1, v_2$ such that $v_0^2 + v_1^2 + v_2^2 = 1$, which is simply a reparameterization of $\Sp^2(\kappa)$ mapped into $\Sp^p(\kappa)$. 
    By construction of this rotation matrix, $x,y,z$ are all fixed points of this isometry ($f_2 \circ  f_1$). When translating between coordinate positions and the distances, one must use the same curvature value $\kappa$, and thus the totally geodesic manifold of dimension $2$ exists with the same curvature, and thus contains its midpoints.
    
    The proofs for $\E^p$ and $\Hy^p$ follows this argument identically and thus the proof is complete. 
    
\end{proof}

\subsection{Proof of Theorem~\ref{thm:midpoint_curve_equation}} \label{sec:proof_midpoint_curve_equation}

\begin{proof}
    
    In order to develop our identifying equation for the curvature using triangle distances, we note that any three points $(x, y, z)$ can be embedded isometrically in a totally geodesic submanifold of dimension 2 when $\M^p(\kappa)$ is a canonical manifold, as stated in Lemma~\ref{lem:submanifold_of_dim_2}. Since this is a totally geodesic submanifold, the distance to the midpoint parameterized by coordinates in $\M^2(\kappa)$ will be the same as in $\M^p(\kappa)$. We continue with the proof by embedding the triangle $xyz$ in a canonical manifold of dimension 2. This will provide an implicit equation for the curvature, determined by the side lengths of the triangle as well as the length of the triangle median. In each case, we use the representations of the canonical manifolds outlined in Section~\ref{sec: geometric environment}.
    
    \textbf{Case 1: Spherical.} For convenience of derivation, we derive an implicit equation in a coordinate system where the midpoint of points $y$ and $z$ is placed at the origin ($m = (1,0,0)$ in $\mathbb{S}^2$) and the line between $y$ and $z$ define the axis $(0,1,0)$.  Given the distance $d_{yz}$ we place point $y$ at the point $(y_0, y_1, 0)$ where $y_0 = \cos(\sqrt{\kappa}\frac{d_{yz}}{2})$.  Next we place point $z$ at $z = (z_0, z_1,0)$ where $z_0 = y_0$ and $z_1 = -y_1$.  Given this parameterization of the manifold embedding of $y$, $z$ and the distances,  $d_{xy}, d_{xz}$, we solve for the coordinates of $x$.
    
    In general, $x = (x_0,x_1,x_2)$. Since the midpoint is placed at $(1,0,0)$, then $x_0 = \cos(\sqrt{\kappa}\frac{d_{xm}}{2})$.  Next, $x_1$ can be solved for by considering its distance to $y$: 
    \al{
        \cos(\sqrt{\kappa}d_{xy}) &= x_0y_0 + x_1y_1 \\
        &= \cos(\sqrt{\kappa}d_{xm})\cos(\sqrt{\kappa}d_{yz}/2) + x_1\sin(\sqrt{\kappa}d_{yz}/2).
    }
    Similarly the distance to $z$ can be computed as: 
    \al{
        \cos(\sqrt{\kappa}d_{xz}) &= x_0z_0 + x_1z_1 \\
        &= \cos(\sqrt{\kappa}d_{xm})\cos(\sqrt{\kappa}d_{yz}/2) - x_1\sin(\sqrt{\kappa}d_{yz}/2).
    }
    Solving for $x_1$ leads to the expression 
    \al{
        \frac{\cos(\sqrt{\kappa}d_{xy}) - \cos(\sqrt{\kappa}d_{xm})\cos(\sqrt{\kappa}d_{yz}/2)}{\cos(\sqrt{\kappa}d_{xz}) - \cos(\sqrt{\kappa}d_{xm})\cos(\sqrt{\kappa}d_{yz}/2)} &= - 1 
    }
    which can be further rearranged as follows: 
    \al{
        \cos(\sqrt{\kappa}d_{xy}) - \cos(\sqrt{\kappa}d_{xm})\cos(\sqrt{\kappa}d_{yz}/2)&= - (\cos(\sqrt{\kappa}d_{xz}) - \cos(\sqrt{\kappa}d_{xm})\cos(\sqrt{\kappa}d_{yz}/2)) \\
        \implies  \cos(\sqrt{\kappa}d_{xy}) + \cos(\sqrt{\kappa}d_{xz}) &= 2\cos(\sqrt{\kappa}d_{xm})\cos(\sqrt{\kappa}d_{yz}/2)\\
        \implies \text{Sec}(\frac{d_{yz}}{2}\sqrt{\kappa})(\cos(\sqrt{\kappa}d_{xy}) + \cos(\sqrt{\kappa}d_{xz})) &= 2\cos(\sqrt{\kappa}d_{xm}).
    }
    This finally leads to the expression: 
    \al{
         2\cos(d_{xm}\sqrt{\kappa}) - \text{Sec}(\frac{d_{yz}}{2}\sqrt{\kappa})(\cos(d_{xy}\sqrt{\kappa}) + \cos(d_{xz}\sqrt{\kappa})) &= 0. 
    }
    We then normalize this by the curvature value $\frac{1}{\kappa}$,  which allows for $g(\kappa,d)$ to be a continuous function of $\kappa$ from the hyperbolic $\kappa < 0$ to spherical space $\kappa > 0$.  

    In the spherical case, we also require that the distances themselves satisfy $d_{pq} \leq \frac{\pi}{\sqrt{\kappa}}$, due to the fact that this corresponds to the maximum possible distance on the sphere, however this restriction is not present in the $\E^p$ and $\Hy^p$.
    
    \textbf{Case 2: Hyperbolic.} The proof for deriving this method in the hyperbolic case follows an identical method to the spherical case and is left out for brevity. The resultant estimating function is of the following form 
    \al{
        &\frac{1}{-\kappa} \bigg( 2\cosh(d_{xm}\sqrt{-\kappa}) \\
        &- \text{Sech}(\frac{d_{yz}}{2}\sqrt{-\kappa})(\cosh(d_{xy}\sqrt{-\kappa}) + \cosh(d_{xz}\sqrt{-\kappa})) \bigg)= 0
    }
    \text{Remark:}
    Under the limit as $\kappa \to 0$ we find that $\lim_{\kappa \to 0} g(\kappa,\vect{d}^{\triangleline}) =  \frac{1}{2}d_{xy}^2 + \frac{1}{2}d_{xz}^2 - \frac{1}{4}d_{yz}^2 -d_{xm}^2$ which gives exactly the parallelogram law in Euclidean space.  This also highlights the necessity that the term $\frac{1}{\kappa}$ plays in maintaining a smooth equation as a function of $\kappa$ through $0$. 
\end{proof}

\subsection{Proof of Theorem~\ref{thm:asymptotic_normality}} \label{sec:asymptotic_normality}
\begin{proof}
    We will use the implicit function theorem to construct a function for which we can later apply a delta method.  In order to do so, we must ensure that $\frac{d}{d \kappa}g(\kappa, \vect{d}^{\triangleline}) \not = 0$ for $\kappa$ such that $g(\kappa, \vect{d}^{\triangleline}) = 0$.  We first note that $g(\kappa, \vect{d}^{\triangleline})$ is continuously differentiable when $\sqrt{\kappa} < \frac{\acos(\pi)}{d_{yz}}$.  This boundary corresponds to the maximum distance allowed on a sphere, as given by two anti-polar points.  Therefore to apply the implicit function theorem, what remains is 
    \al{
        \frac{\partial}{\partial \kappa} g(\kappa, \vect{d}^{\triangleline}) &\not = 0. 
    }
    which will hold by a brief application of (B3). As we have derived in Theorem~\ref{thm:midpoint_curve_equation}, $d_{xm}(\kappa; \vect{d}^{\triangle}(x,y,z))$, the length of the triangle median as a function of the triangle lengths $\vect{d}^{\triangle}(x,y,z)$ is a differentiable, continuous, increasing function of $\kappa$.

    By definition, this function satisfies
    $$g(\kappa, d_{xy}, d_{xz}, d_{yz}, d_{xm}(\kappa; \vect{d}^{\triangle}(x,y,z))) = 0$$

    Let $d_{xm}$ denote the fixed value of the triangle median length at value of the true curvature $\kappa$, and let $d_{xm}(\kappa; \vect{d}^{\triangle}(x,y,z)))$ denote the median length as a function of the triangle side length. We can now write the estimating function $g$ as a function of the equivalent exact midpoint $d_{xm}(\kappa) := d_{xm}(\kappa; \vect{d}^{\triangle}(x,y,z)) $\footnote{We drop triangle lengths dependence $\vect{d}^{\triangle}(x,y,z)$ for brevity}
    \al{
        g(\kappa, d_{xy}, d_{xz}, d_{yz}, d_{xm}) &= \frac{2\cos(d_{xm}\sqrt{\kappa})}{\kappa} - \frac{2\cos(d_{xm}(\kappa)\sqrt{\kappa})}{\kappa}
    }
    We compute the derivative as a function of $\kappa$

    \al{
        &\frac{d}{d\kappa'}g(\kappa'; d_{xy}, d_{xz}, d_{yz}, d_{xm}) \bigg|_{\kappa' = \kappa} \\
        &= 2\frac{ - d_{xm}\sqrt{\kappa}\sin(d_{xm}\sqrt{\kappa}) + 2 \cos(d_{xm}\sqrt{\kappa})}{2\kappa^2} \\
        &+ 2\frac{(d_{xm}(\kappa)\sqrt{\kappa} + \kappa^{3/2}d'_{xm}(\kappa))\sin(d_{xm}\sqrt{\kappa}) - 2 \cos(d_{xm}(\kappa)\sqrt{\kappa})}{2\kappa^2} 
    }
    Since $g(\kappa,\vect{d}^{\triangleline}) = 0 \implies d_{xm} = d_{xm}(\kappa)$, 
     \al{
        &\frac{d}{d\kappa'}g(\kappa'; d_{xy}, d_{xz}, d_{yz}, d_{xm}) \bigg|_{\kappa' = \kappa} \\ 
        &= 2d'_{xm}(\kappa)\frac{\sin(d_{xm}(\kappa)\sqrt{\kappa})}{\sqrt{\kappa}}
    }
    Since $\kappa \leq \frac{\pi^2}{d_{xm}^2}$ then if $d'_{xm}(\kappa) > 0$ then $\frac{d}{d\kappa} g(\kappa, \vect{d}^{\triangleline}) > 0$ and hence $g(\kappa, \vect{d}^{\triangleline})$ has positive derivative $\frac{d}{d\kappa} g(\kappa, \vect{d}^{\triangleline}) > 0$ for $\kappa: g(\kappa; \vect{d}^{\triangleline}) = 0$, where we use the notation $g(\kappa; \vect{d}^{\triangleline})$ when we are specifying the $g$ as a function of $\kappa$ for a fixed set of distances. 
    
    Next, by the implicit function theorem, there exists an open neighborhood $\mathcal{\tilde D} \in \R^4$ and an ``implicit" function $\tilde f(d)$ such that: $\kappa = \tilde f(d)$ and $g(\tilde f(d), d) = 0$ for $d \in \mathcal{\tilde D}$.  Furthermore the gradient satisfies: 
    \al{
        \nabla_d \tilde f(d) &= - \left(\frac{\partial g(\kappa, \vect{d}^{\triangleline})}{\partial \kappa}\right)^{-1}\left[ \nabla_d g(\kappa, \vect{d}^{\triangleline})\right]. 
    }
    Therefore by the delta method we arrive at the asymptotic distribution of $\hat \kappa$
    \al{
        \sqrt{r(n)}(\hat \kappa - \kappa)&= \sqrt{r(n)}(\tilde f(\hat d) - \tilde f(d)) \to_{D} N(0, \nabla  \tilde f(d)^\top  \Sigma \nabla  \tilde f(d))
    }
    where $\to_{D}$ refers to convergence in distribution.  Lastly, by the implicit function theorem: 
    \al{
        \nabla  \tilde f(d)^\top  \Sigma \nabla  \tilde f(d) &= \left(\frac{\partial g(\kappa, \vect{d}^{\triangleline})}{\partial \kappa}^2\right)^{-1}\left[ \nabla_d g(\kappa, \vect{d}^{\triangleline})^\top \Sigma \nabla_d g(\kappa, \vect{d}^{\triangleline})\right]. 
    }
    If instead we only have consistency, rather than asymptotic normality of $\hat d$, $\norm{\hat d - d}_2 = o_P(r(n)^{-1/2})$, then we can use the continuous mapping theorem instead and we have $|\hat \kappa - \kappa| =  o_P(r(n)^{-1/2})$.  
    \textbf{Remark:} This form is very similar to the usual standard asymptotic normality proofs for $Z$ estimators as in \citet{vanderVaart1998AsymptoticStatistics}.  However, the main difference is in the fact that we are not averaging the estimating function, but rather plugging in a distance estimate which is asymptotically normal, meaning that we develop this delta method argument for a plug-in estimator. 
    
    We find that in practice, condition (B3) holds quite generally, unless the points $x,y,z$ are co linear.  In fact, in the Euclidean case, we can derive this exactly according to the Taylor series expansion at $\kappa = 0$
    
    \al{
        g(\kappa, \vect{d}^{\triangleline}) &=  \left(\frac{d_{xy}^2}{2} + \frac{d_{xz}^2}{2} - d_{xm}^2 - \frac{d_{yz}^2}{4}\right) + \left(\frac{d_{xm}^2}{12} - \frac{5d_{yz}^2}{192} + \frac{1}{16}d_{yz}^2(d_{xy}^2 + d_{xz}^2) - \frac{1}{24}(d_{xy}^2 + d_{xz}^2) \right)\kappa + \OO(\kappa^2). 
    }    
    Clearly at $\kappa = 0 $ the solution reduces to the parallelogram law $(\frac{d_{xy}^2}{2} + \frac{d_{xz}^2}{2} - d_{xm}^2 - \frac{d_{yz}^2}{4} = 0 )$.  When we substitute in the corresponding solution at $\kappa = 0$
    \al{
        \frac{\partial }{\partial \kappa }g(\kappa, \vect{d}^{\triangleline})\Bigg|_{\kappa = 0} &= -\frac{1}{48}\left(d_{yz}^4 - 2d_{yz}^2(d_{xz}^2 +(d_{xy}^2) + (d_{xz}^2 - d_{xy}^2)^2 \right) \\
        &= -\frac{1}{48}\bigg((d_{yz} + d_{xz} +d_{xy})(d_{yz} - d_{xz} - d_{xy})\\
        &\times(d_{yz} - d_{xz} +d_{xy})(d_{yz} + d_{xz} - d_{xy}) \bigg). 
    }
    As long as the triangle inequality is satisfied strictly for $x,y,z$, then $\frac{\partial }{\partial \kappa }g(\kappa; \vect{d}^{\triangleline})\Bigg|_{\kappa = 0} > 0$.  However, if the triangle inequality is not strict, i.e. $x,y,z$ are co-linear, and $d_{ij} = d_{ik} + d_{kj}$ for $(i,j.k)$ being some permutation of $(x,y,z)$, then $\frac{\partial }{\partial \kappa }g(\kappa; \vect{d}^{\triangleline})\Bigg|_{\kappa = 0} = 0$. For $\kappa \not = 0$, we do not have simple closed-form expressions for $\frac{\partial }{\partial \kappa }g(\kappa; \vect{d}^{\triangleline})\Bigg|_{\kappa = 0}$ and thus we leave (B3) as an assumption, however, we believe this pattern to hold in the other canonical manifolds. 

\end{proof}

\begin{lemma} \label{lem:non_decreasing_midpoint_length}
    If $x,y,z \in \M^p(\kappa)$,  let $m$ denote the midpoint between $y$ and $z$.  Then let $d_{xm}(\kappa; \vect{d}^{\triangle}(x,y,z))$ denote the distance to the midpoint $m$ from $x$ as a function of the curvature of the latent space $\kappa$. \\

    Then $ d_{xm}(\kappa; \vect{d}^{\triangle}(x,y,z))$ is a non-decreasing function in $\kappa$.  
\end{lemma}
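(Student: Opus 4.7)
The plan is to reduce to the two-dimensional canonical model spaces via Lemma~\ref{lem:submanifold_of_dim_2} and then invoke Toponogov's comparison theorem to compare medians across constant-curvature spaces with the same triangle side lengths.

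First I would use Lemma~\ref{lem:submanifold_of_dim_2} to embed $x, y, z$ and the midpoint $m$ isometrically in a totally geodesic submanifold isometric to $\M^2(\kappa)$, so that $d_{xm}(\kappa; \vect{d}^{\triangle}(x,y,z))$ depends only on $\kappa$ and the three triangle side lengths and not on the ambient dimension $p$. It therefore suffices to establish monotonicity inside $\M^2(\kappa)$. Since $\operatorname{diam}(\M^2(\kappa)) = \pi/\sqrt{\kappa}$ when $\kappa > 0$ and is infinite for $\kappa \le 0$, any triangle realizable at some curvature $\kappa_2$ is automatically realizable at every smaller curvature $\kappa_1$, so the statement is meaningful on the natural domain of $d_{xm}(\,\cdot\,; \vect{d}^{\triangle})$.

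Next, for $\kappa_1 < \kappa_2$ in that domain and $i \in \{1,2\}$, place a geodesic triangle $\triangle x_i y_i z_i \subset \M^2(\kappa_i)$ with the prescribed side lengths, with $m_i$ the midpoint of $y_i z_i$ and $\ell_i := d_{\M^2(\kappa_i)}(x_i, m_i)$ the corresponding median length. Because $\M^2(\kappa_2)$ is a complete Riemannian manifold whose constant sectional curvature $\kappa_2$ is bounded below by $\kappa_1$, Toponogov's theorem (Theorem~\ref{thm:toponogov}) applies with reference manifold $\M^2(\kappa_1)$ in the form that for any point $p$ on the side $y_2 z_2$ and the corresponding point $\bar p$ on $y_1 z_1$ with $d(y_2, p) = d(y_1, \bar p)$, one has $d(x_2, p) \ge d(x_1, \bar p)$. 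Taking $p = m_2$, so that $\bar p = m_1$, yields $\ell_2 \ge \ell_1$, which is exactly the required non-decreasing property of $d_{xm}(\,\cdot\,; \vect{d}^{\triangle})$.

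The main obstacle is invoking Toponogov in the ``distance from a vertex to an interior point of the opposite side'' form: the hinge and angle comparisons as usually stated imply this stronger form by splitting the geodesic triangle at $m$ into the two sub-triangles $\triangle x y m$ and $\triangle x z m$ and applying the hinge inequality inside each, but it is worth writing out the short reduction to make the argument self-contained. As an independent sanity check, combining the implicit function theorem with the inequality $\partial g / \partial d_{xm} < 0$ already established in the proof of Theorem~\ref{thm:asymptotic_normality} shows that our claim is equivalent to $\partial g / \partial \kappa \ge 0$ at $g(\kappa, \vect{d}^{\triangleline}) = 0$; this can be verified in a neighborhood of $\kappa = 0$ directly from the Taylor expansion of $g$ computed there, and Toponogov is what supplies the inequality globally.
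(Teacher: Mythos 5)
Your proof is correct and follows essentially the same route as the paper: reduce to the two-dimensional constant-curvature model via Lemma~\ref{lem:submanifold_of_dim_2} and then invoke the Toponogov comparison (Theorem~\ref{thm:toponogov}), with the curvature lower bound $\kappa_1 \le \kappa_2$ playing the role of $\underline{K}(\mathcal{M})$. If anything, you are slightly more careful than the paper's one-line argument, which cites Toponogov in a hinge-type form and asserts the median comparison "immediately follows": you correctly flag that one actually needs the vertex-to-interior-point-of-opposite-side variant of the theorem, which is a genuine (if standard) reduction from the hinge and angle comparisons via Alexandrov's lemma, and you also separate the implicit-function-theorem observation as a sanity check at $\kappa=0$ rather than an independent proof, which avoids circularity with the proof of Theorem~\ref{thm:asymptotic_normality}.
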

    \begin{proof}
        The proof follows from Toponogov's triangle theorem and the negative curvature extension \citep{berger1962extension} which we include immediately following this proof in Theorem~\ref{thm:toponogov}.  Abbreviated for our purposes in Theorem~\ref{thm:toponogov}, we simply let $p = z, x = r, q = m$ and compare two manifolds $\kappa \leq \kappa'$. 

        Then it immediately follows $$d_{xm}(\kappa; \vect{d}^{\triangle}(x,y,z)) \leq d_{xm}(\kappa'; \vect{d}^{\triangle}(x,y,z)).$$
    \end{proof}

\begin{theorem}[Toponogov's Triangle Comparison Theorem (Theorem 3 of \citet{berger1962extension}] \label{thm:toponogov}
    Let $\mathcal{M}$ be a complete Riemannian manifold, and let $\M(\kappa)$ be the simply connected manifold of constant curvature $\kappa \in \R$.  Let $\triangle_{zyx}$ denote a geodesic triangle with points $z,y,x \in \M$.  Let $\tilde \triangle_{zyx}$ is the geodesic triangle with side lengths $d_{zy} = \tilde d_{zy}$ and $d_{zx} = \tilde d_{yx}$ on $\M(\kappa)$. If $\underline{K}(\mathcal{M})$ is the minimum Riemannian sectional curvature on the manifold, and if $\kappa \leq \underline{K}(\mathcal{M})$, then 
    $$ \tilde d_{yx} \leq  d_{qr}.$$
\end{theorem}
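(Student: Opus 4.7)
The plan is to prove Toponogov's theorem by combining an infinitesimal Jacobi-field comparison (Rauch's theorem) with a global chaining argument, which is the classical route for results of this type. Fix the geodesic $\beta:[0,d_{zx}]\to\mathcal{M}$ from $z$ to $x$, set $f(t)=d(y,\beta(t))$, and define the analogous function $\tilde f(t)$ on the model triangle in $\mathcal{M}(\kappa)$. By construction both functions agree at $t=0$, $f(0)=\tilde f(0)=d_{zy}$, and their first derivatives at $0$ also agree because $f'(0)$ depends only on the angle at $z$, which is shared between the two triangles by the choice of hinge data.

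At smooth points, the first and second variation formulas express $f''(t)$ in terms of a Jacobi field along the minimizing geodesic from $y$ to $\beta(t)$. Under the curvature bound $\underline{K}(\mathcal{M})\ge \kappa$, Rauch's theorem bounds this Jacobi field by its analogue in $\mathcal{M}(\kappa)$, which translates into a differential inequality of the form $f''(t)\ \{\le,\ge\}\ \phi_\kappa(f(t))$, where $\phi_\kappa$ is the trigonometric expression arising from the Jacobi equation in constant curvature (sine, identity, or hyperbolic sine according to the sign of $\kappa$). Matching initial conditions at $t=0$ and applying a Sturm-type ODE comparison yields the desired pointwise ordering of $f$ and $\tilde f$ on the interval of smoothness, so evaluating at $t=d_{zx}$ gives the inequality between the third sides.

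The final step is to globalize. I would subdivide $\beta$ into short segments whose lengths stay below the injectivity radius at the intermediate points, apply the local comparison on each piece, and iteratively reassemble the model sub-triangles in $\mathcal{M}(\kappa)$ via the constant-curvature law of cosines; passing to the limit as the mesh size tends to zero then produces the global statement. Berger's contribution cited in the theorem is precisely to verify that this chaining argument carries through for negative $\kappa$, by replacing the circular trigonometric identities with their hyperbolic counterparts and checking that $\phi_\kappa$ and the corresponding model law of cosines behave monotonically across $\kappa=0$.

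The hard part is the globalization, not the local Rauch step. Locally the estimate is essentially an ODE comparison, but extending to arbitrary geodesic triangles requires careful tracking of the angles produced at the subdivision hinges (verifying that they add up correctly in the model space) and handling singularities of $d(y,\cdot)$ at the cut locus, where $f$ fails to be $C^2$. These technical points are where most of the work in Toponogov's original argument, and in Berger's extension, is concentrated, which is why the paper simply cites the result.
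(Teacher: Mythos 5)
The paper does not prove this theorem: it reproduces it as a citation to Berger's 1962 extension of Rauch's comparison theorem and invokes it purely as a black box in the proof of Lemma~\ref{lem:non_decreasing_midpoint_length}. There is therefore no in-paper argument to compare yours against. Your sketch is a reasonable high-level outline of how Toponogov's comparison is classically established: set $f(t)=d(y,\beta(t))$ along the hinge geodesic, control $f''$ via Jacobi-field (Rauch/Riccati) comparison under the lower sectional-curvature bound, match initial data at the hinge and apply a Sturm-type ODE comparison to get the local inequality, then globalize by subdividing past the injectivity radius and reassembling model sub-triangles via the constant-curvature law of cosines. You are also right that the cut-locus singularities and the angle bookkeeping at subdivision hinges account for most of the technical work, which is precisely what a self-contained proof would have to supply and why the paper cites the result instead of proving it.

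A caveat worth flagging: the theorem statement is transcribed imprecisely in the paper. The conclusion $\tilde d_{yx}\leq d_{qr}$ refers to labels $q,r$ that are never declared (they only surface in the proof of Lemma~\ref{lem:non_decreasing_midpoint_length}, where $p=z$, $q=m$, $r=x$), and the matching condition $d_{zx}=\tilde d_{yx}$ is evidently a typographical slip for $d_{zx}=\tilde d_{zx}$. Your reading of it as the standard hinge comparison is an editorial correction on your part, and the version the lemma actually invokes is the secant/median variant: fix all three side lengths, place the midpoint on one side, and compare the distance from the opposite vertex to that midpoint across constant-curvature spaces with $\kappa\leq\kappa'$. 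That form follows from the hinge form via a short additional comparison at the vertex angle, so your sketch, once filled in, would cover what the paper needs.
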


This theorem suggests that assumption (B3) is therefore very mild and that we will always have a non-decreasing $d_{xm}$ function of $\kappa$ but that we only have to assume that the increasingness is strict.  In practice, we observe this is only not strict when the points $x,y,z$ are co-linear, and therefore the distance to the midpoint does not change as a function of the curvature since, effectively, these points lie along a single geodesic.

\subsection{Proof of Theorem~\ref{thm:curve_bounds}}
\begin{proof}
    Recall that $\vect{d}^{\triangle}(x,y,z)$ denotes the vector of the distances of the triangle with vertices $(x,y,z)$, and let $\vect{d}^{\triangle}(m',y,z)$ be the distances in triangle $(m', y, z)$.  We also measure the distance to the surrogate midpoint, $d_{xm'}$.  Though we do not have access to the distance to the true midpoint $d_{xm}$, this is going to be a function of the curvature of the space, and the distances of the triangle $(x,y,z)$ and the median length $d_{xm}(\kappa; \vect{d}^{\triangle}(x,y,z))$.  
    For any curvature $\kappa$ and a surrogate midpoint $m'$, so that $m',x,y,z, \in \M^p(\kappa)$, we can upper and lower bound the triangle median length.  By the triangle inequality these upper and lower bounds are  
    \al{
        d_{xm}(\kappa; \vect{d}^{\triangle}(x,y,z)) &\leq d^{+}_{xm}(\kappa) := d_{xm'} + d_{m' m}(\kappa; \vect{d}(m',y,z)) \\
        d_{xm}(\kappa; \vect{d}^{\triangle}(x,y,z)) &\geq d^{-}_{xm}(\kappa) := d_{xm'} - d_{m' m}(\kappa; \vect{d}^{\triangle}(m',y,z))
    }
    By Lemma~\ref{lem:non_decreasing_midpoint_length} each $d_{xm}$ function is monotone in $\kappa$.  The upper and lower bounds $\kappa^{\pm}$ can then be computed by plugging in this value to $g$, i.e. $g(\kappa,\vect{d}^{\triangleline, \pm}(\kappa)) = 0$ and solving for $\kappa$. By this monotonicity, the number of solutions in $\kappa$ to either equation will typitcally be $\{0,1\}$.  
    

    If there is a solution to $g(\kappa,\vect{d}^{\triangleline, \pm}(\kappa)) = 0$, then we have found and upper and lower bounds. If there are no solutions, then the upper and lower bounds are $\pm \infty$. In the event that there are colinear points, then this can lead to a situation where $(\kappa,\vect{d}^{\triangleline, \pm}(\kappa)) = 0$ has multiple solutions in $\kappa$.  In this case, we can take the minimum of the upper bounds and maximum of the lower bounds respectively.  
\end{proof}

\subsection{Proof of Theorem~\ref{thm:midpoint_convergence} } \label{sec:proof_midpoint_convergence}

Before proving Theorem~\ref{thm:midpoint_convergence}, we first introduce a useful theorem for the proof. 
\begin{theorem}\label{thm:copula_theorem}
    Consider a set of continuous random variables $X_i \in \R$ which are marginally identical but have an arbitrary dependency.  Let $\mathbb{\hat  M}[X]$ denote the sample median of the set of these random variables. Then 
    \begin{equation}
        P(\mathbb{\hat  M}[X] \leq t) \leq 2 P(X \leq t)
    \end{equation}
\end{theorem}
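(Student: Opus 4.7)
The plan is to prove this by a direct application of Markov's inequality to the count of "small" observations, using the fact that linearity of expectation does not require independence. This is precisely the feature that makes the bound insensitive to the dependency structure (copula) among the $X_i$.

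First, I would set up the counting random variable. Let $N_t := \sum_{i=1}^n \mathbb{1}(X_i \leq t)$, where $n$ is the number of variables. Since the sample median $\hat{\mathbb{M}}[X]$ is at most $t$ if and only if at least $\lceil n/2 \rceil$ of the $X_i$ satisfy $X_i \leq t$, we have the inclusion
\[
\{\hat{\mathbb{M}}[X] \leq t\} \subseteq \{N_t \geq n/2\}.
\]
(For $n$ even one can use either of the standard conventions for the median; in all cases the event $\{\hat{\mathbb{M}}[X]\leq t\}$ implies $N_t \geq n/2$.)

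Next, I would compute the expectation of $N_t$. By linearity of expectation, which holds for arbitrarily dependent summands, and using the marginal identity of the $X_i$,
\[
\mathbb{E}[N_t] = \sum_{i=1}^n P(X_i \leq t) = n \, P(X \leq t).
\]
Applying Markov's inequality to the nonnegative random variable $N_t$ then gives
\[
P(\hat{\mathbb{M}}[X] \leq t) \leq P(N_t \geq n/2) \leq \frac{\mathbb{E}[N_t]}{n/2} = 2 P(X \leq t),
\]
which is the claimed bound.

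There is no real obstacle here: the entire content of the result is the observation that Markov's inequality only depends on the mean of $N_t$, and the mean of an indicator sum depends only on the marginals. The key subtlety worth flagging in the write-up is just the conventional definition of the sample median so that the set inclusion $\{\hat{\mathbb{M}}[X]\leq t\}\subseteq\{N_t\geq n/2\}$ holds; continuity of the $X_i$ ensures ties among the $X_i$ occur with probability zero, so the median is almost surely unambiguous.
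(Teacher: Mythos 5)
Your proof is correct, and it takes a genuinely different and substantially simpler route than the paper's. The paper passes through a three-lemma copula argument: it invokes Sklar's theorem to reduce to a copula $\C$ with uniform marginals, proves a transfer lemma relating the median of $X$ to the median of the underlying uniforms via $F^{-1}$, bounds the copula mass of unions of intersection events by the comonotone (upper Fr\'echet--Hoeffding) copula, and finally applies a pigeonhole argument to split the $\binom{n}{\lceil n/2\rceil}$ "at least half small" events into two families with pairwise-overlapping indices. Your argument bypasses all of this: letting $N_t = \sum_i \mathbb{1}(X_i \le t)$, the inclusion $\{\hat{\mathbb{M}}[X] \le t\} \subseteq \{N_t \ge n/2\}$ plus linearity of expectation and Markov's inequality gives the factor of $2$ immediately. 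This is shorter, requires no continuity hypothesis (whereas the paper's $F^{-1}$ step implicitly does), and makes transparent why the bound is dependence-free, namely that $\E[N_t]$ is determined entirely by the marginals. The one point you correctly flag, and which must be stated carefully in a final write-up, is the median convention for even $n$; since the lower middle order statistic always satisfies $X_{(n/2)} \le \hat{\mathbb{M}}[X]$ under any standard convention, the inclusion $\{\hat{\mathbb{M}}[X]\le t\}\subseteq\{N_t \ge n/2\}$ holds and only this direction is needed. Your approach also delivers the paper's subsequent corollary (a two-sided deviation bound for the median of correlated uniforms) with equal ease by applying the same argument to $1-U_i$.
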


We find the proof in the supplementary materials in Section \ref{sec:proof_copula_theorem}.  This bound is only useful up to $t = \mathbb{ M}[X]$ at which point the upper bound is $1$.  The implication here is that we only need most of the midpoints between any two points to be reasonably separated. We later illustrate how we will find good midpoints and so we can verify their existence in any observed dataset. 

\begin{proof}
    Let $\Xi(Z_{\{1,\dots, i\}})$ denote the set of distances from one midpoint to another as a function of their endpoints:  $Z_{j}, Z_{k}, Z_{l}, Z_{r}$, i.e. $\Xi(Z_{\{1,\dots, i\}}) = \{d(m(Z_j,Z_k), m(Z_l,Z_r))\}_{(j,k,l,r) \in \{1,\dots, i\}}$  where $m(Z_j,Z_k)$ refers to the midpoint of $Z_j$ and $Z_k$. 
    Denote the set of events $M_i(t) = \mathbb{\hat M}[\Xi(Z_{\{1,\dots, i\}})] > t$ the median of these distances after sampling $i$ endpoint positions is at least $t$, and let $G := G(t) = \cap_{i = 1}^K M_i(t)$ be the intersection of a growing sequence of these medians all satisfy this bound.  We can upper bound the probability that $\Phi(\D_K) \leq t$, that the smallest distance to the midpoint of the pair of endpoints by the following induction argument:
    \al{
        &P(\Phi(\D_K) \leq t) \\
        &= P(\Phi(\D_K) \leq t|\Phi(\D_{K - 1}) \leq t, G)P(\Phi(\D_{K - 1}) \leq t|G)P(G) \\
        &+ P(\Phi(\D_K) \leq t|\Phi(\D_{K - 1}) > t, G)P(\Phi(\D_{K - 1}) > t|G)P(G) \\
        &+ P(\Phi(\D_K) \leq t| G^c)P(G^c) \\
        &\leq \underbrace{P(\Phi(\D_K) \leq t|\Phi(\D_{K - 1}) \leq t, G)}_{(I)}P(\Phi(\D_{K - 1}) \leq t|G) + \underbrace{P(G^c)}_{(II)}. 
    }
    We first consider $(I)$.  Since we condition on $G$, at each step at least half of the midpoints are at a distance at least $t$ away from one another. By the locally Euclidean assumption (A2) and the continuity of the latent distribution on a convex region (C1), then since there are at least $\binom{K}{2}/2$ midpoints at a distance at least $t$ away from one another: 
    \al{
        (I) &= P(\Phi(\D_K) \leq t|\Phi(\D_{K - 1}) \leq t, G) \\
        &\leq \max\{0,1 - \binom{K}{2}\alpha\frac{C_p}{2}t^p\} \\
        &\leq \exp(-\binom{K}{2}\alpha\frac{C_p}{2}t^p)
    }
    By recursion, we can bound this over a growing sequence of samples of midpoints $k \in \{1,2,\dots, K\}$
    \al{
        \prod_{k = 1}^K P(\Phi(\D_K) \leq t|\Phi(\D_{K - 1}) > t, G) &\leq \exp(-\sum_{k = 1}^K\binom{k}{2}\alpha\frac{C_p}{2}t^p) \\
        &= \exp(-\binom{K + 1}{3}\alpha\frac{C_p}{2}t^p)\\
        &\leq \exp(-(K + 1)^3\alpha\frac{C_p}{12}t^p)
    }
    
    Secondly, let us consider term $(II)$. We will also demonstrate that this term is generally negligible compared to the first term. By a union bound arguement: 
    \al{
        P(G^c) &= P(\cup_{k = 1}^K \{\mathbb{\hat M}[\Xi(Z_{\{1,\dots, k\}})] \leq t\}) \\
        &\leq \sum_{k = 1}^K P( \mathbb{\hat M}[\Xi(Z_{\{1,\dots, k\}})] \leq t) \\
        &\leq K(P( \mathbb{\hat M}[\Xi(Z_{\{1,\dots, k\}})] \leq t)) 
    }

    We note that this in fact describes a set of marginally identical, yet correlated random variables, allowing up to leverage Theorem~\ref{thm:copula_theorem}. 
    \al{
        P( \{\mathbb{\hat M}[\Xi(Z_{K})] \leq t\}) &\leq 2P( \Xi(Z_{K}) \leq t\}) \\
        &= 2P( \cup_{m' \not = m}d(m,m') \leq t\}) \\
        &\leq 2\sum_{m' \not = m}P( d(m,m') \leq t\})
    }
    And by assumption (C3) in our theorem $P( d(m,m') \leq t) \leq \alpha_m C_p t^{p}$. Since there are $K\binom{K}{2}$ terms in this bounds, then we add a factor of $K\binom{K}{2}$ to construct the bound for (II). Hence we combine these bounds to obtain: 
    \al{
         P(\Phi(\D_K) \leq t) &\leq \exp(-(K + 1)^3\alpha_m\frac{C_p}{12}t^p) + 2\alpha_m C_pK\binom{K}{2}t^p 
    }
    Letting $t = \frac{C_2}{K^{3/p}}$ will bound $\lim_{K\to \infty} P(\Phi(\D_K) \leq t_K)$ by any constant and therefore 
    \begin{equation*}
        \Phi(\D_K) = \OO_P(K^{-3/p}).
    \end{equation*}
\end{proof}

\subsection{Proof of Lemma~\ref{lem:clique_localization_rate}} \label{sec:proof_clique_localization_rate}

The proof draws on a similar structure to that of the consistency result of \citet{lubold2023identifying}, we extend this to illustrate the rate of convergence.  

\begin{proof}
    Let $\epsilon_\delta := \{\max_{i,j} d(Z_i,Z_j) < \delta\}$. We wish to show that: 
    \al{
         \lim_{\ell \to \infty}\frac{P(\epsilon_{\delta}|C_\ell)}{P(\epsilon^C_{\delta}|C_\ell)} &\to \infty
    }
    for $\delta = O(\exp(-\tilde \mu_d \ell/p))$. 

    Firstly we derive the probability of a clique forming, conditional on  $\epsilon_\delta$.  For convenience, we denote $\bar{\nu} = \frac{1}{\ell} \sum_{i = 1}^\ell \nu_i$ and $\bar{d}(\vect{z} = \binom{\ell}{2}^{-1} \sum_{i  < j}^\ell d(Z_i, Z_j)$
    \al{
        P(C_\ell|\epsilon_{\delta}) 
        &= \int \int_{\epsilon_{\delta}}\exp(-\binom{\ell}{2}\bar d (z))\exp\left(-\binom{\ell}{2}\bar \nu\right)dP_{Z|\epsilon_{\delta}}(z)dP_{\vect{\nu}}(\nu)\\ 
        &= \Psi(P_{\nu})\int_{\epsilon_{\delta}}\exp(-\binom{\ell}{2}\bar d (z))dP_{Z|\epsilon_{\delta}}(z) \\
        \text{where }  &\Psi(P_{\nu}) = \int \exp\left(-\binom{\ell}{2}\bar \nu\right)dP_{\vect{\nu}}(\nu). 
    }
    Here we factor out the dependence on the random effects into the multiplicative term $\Psi(P_{\nu})$
    This can be developed analogously for $P(C_\ell|\epsilon_{\delta}^c)$: 
    \al{
        P(C_\ell|\epsilon_{\delta}^c) 
        &= \int \int_{\epsilon_{\delta}}\exp(-\binom{\ell}{2}\bar d (z))\exp\left(-\binom{\ell}{2}\bar \nu\right)dP_{Z|\epsilon_{\delta}}(z)dP_{\vect{\nu}}(\nu)\\ 
        &= \Psi(P_{\nu})\int_{\epsilon_{\delta}^c}\exp(-\binom{\ell}{2}\bar d (z))dP_{Z|\epsilon_{\delta}^c}(z) 
    }
    This allows us to ignore the dependence on the random effects when taking the ratio 
    \al{
        \frac{P(\epsilon_{\delta}|C_\ell)}{P(\epsilon^C_{\delta}|C_\ell)} &= \frac{P(C_\ell|\epsilon_{\delta})}{P(C_\ell|\epsilon^C_{\delta})}\frac{P(\epsilon_{\delta})}{P(\epsilon_{\delta}^c)}. 
    }
    
    We now proceed with the remainder of the proof.  Consider the ball of radius $\frac{\delta}{2}$ at a point $q$ $B(\delta, q)$. This describes one set of points for which all points have a maximal distance of $\delta$. Thus $\left\{\{Z_{i}\}_{i  = 1}^\ell \in B(\delta, q) \right\}\subset \epsilon_{\delta}$ for any $q$.  Since $f$ is continuous, then for some $\tilde \delta$ if $f(q)$ has positive probability $f(q) > 0 $ for some $q$ then for all $\delta \leq \tilde \delta$, $f(x) \geq c_2; \forall x \in B(q,\delta)$ for some constant $c_2$.  Therefore $P(\epsilon_{\delta}) \geq P(\{Z_{i}\}_{i  = 1}^\ell \in B(\delta, q)) \geq  \left(c_2\left(\frac{\delta}{2}\right)^p\right)^\ell$.  

    Next we exploit Lemma A.1 in \cite{lubold2023identifying} states that if $Z_i$ are drawn $iid$ from a latent distribution with funute mean $\mu_d$, then
    \al{
        P(C_\ell|\epsilon_{\delta}^c) &\leq \exp\left(-\binom{\ell}{2} \mu_d\right)
    }
    Putting this all together, for $P(\epsilon_{\delta}) \leq \frac{1}{2}$
    \al{
        \frac{P(\epsilon_{\delta}|C_\ell)}{P(\epsilon^C_{\delta}|C_\ell)} &= \frac{P(C_\ell|\epsilon_{\delta})}{P(C_\ell|\epsilon^C_{\delta})} \frac{P(\epsilon_{\delta})}{1 - P(\epsilon_{\delta})} \\
        &\geq \frac{P(C_\ell|\epsilon_{\delta})}{P(C_\ell|\epsilon^C_{\delta})} \frac{1}{2}P(\epsilon_{\delta}) \\
        &\geq \frac{1}{2}\frac{P(C_\ell|\epsilon_{\delta})}{P(C_\ell|\epsilon^C_{\delta})} c_2^{\ell}\left(\frac{\delta}{2}\right)^{p\ell} \\
        &\geq \frac{1}{2}\frac{\exp(-\binom{\ell}{2}\delta)}{P(C_\ell|\epsilon^C_{\delta})} c_2^{\ell}\left(\frac{\delta}{2}\right)^{p\ell} \\
        &\geq \frac{1}{2}\frac{\exp(-\binom{\ell}{2}\delta)}{\exp(-\binom{\ell}{2}\mu_d)} c_2^{\ell}\left(\frac{\delta}{2}\right)^{p\ell} \\
        &= \frac{1}{2}\exp\left( -\binom{\ell}{2}\delta + \binom{\ell}{2}\mu_d + \ell\log(c_2) + \log(\delta/2)p\ell \right) 
    }
    
    Therefore letting $\delta = 2\exp(-\tilde \mu_{d}\frac{\ell - 1}{p})$ for $\tilde \mu_d < \mu_d$.  Then
    \al{
        \lim_{\ell \to \infty} \frac{P(\epsilon_{\delta}|C_\ell)}{P(\epsilon^C_{\delta}|C_\ell)} \to \infty
    }
    and therefore the proof is complete. 
\end{proof}

\subsection{Proof of Lemma~\ref{lem:rand_eff_clique}}
At a high level, the proof structure is nearly identical to Lemma~\ref{lem:clique_localization_rate}, however, we swap the roles of $\nu$ and $d(Z_i,Z_j)$. 
\begin{proof} 
    
    Let $\varepsilon_{\delta} = \left\{ \min_{i \in \{1,2,\dots, \ell\}} \nu_i \geq -\delta \right\}$
    
    If $f_{\nu}(\nu)$ is continuous around $0$ then for small enough $\delta, F_{\nu}(\nu > - \delta) \geq c_3 \delta$ for some $c_3$. 
    As in the proof of Lemma~\ref{lem:clique_localization_rate}, we similarly take the ratio and integrate out $\Psi(P_{\bar{d}})$ since 
    \al{
        P(C_\ell|\varepsilon_{\delta}) 
        &= \int_{\varepsilon_{\delta}}\int \exp(-\binom{\ell}{2}\bar d (z))\exp\left(-\binom{\ell}{2}\bar \nu\right)dP_{Z}(z)dP_{\vect{\nu}|\varepsilon_{\delta}}(\nu) \\
        &= \Psi(P_{\bar{d}}) \int_{\varepsilon_{\delta}}\exp\left(-\binom{\ell}{2}\bar \nu\right)dP_{\vect{\nu}|\varepsilon_{\delta}}(\nu)  \\
        \text{where } \Psi(P_{\bar{d}}) &= \int \exp(-\binom{\ell}{2}\bar d (\vect{z})) dP_{Z}(\vect{z})
    }
    
    Then consider the set $[-\delta, 0]$. Then  $P(\varepsilon_{\delta}) \geq \left(c_3\left(\delta\right)\right)^\ell$.  
    
    Next we consider the probability of a clique, conditional on the fact that the locations did not occur in set $\varepsilon^{c}_{\delta}$. 
    \al{
        P(C_\ell|\varepsilon^{c}_{\delta}) &\leq \int \exp\left( 2\binom{\ell}{2}\bar{\nu}\right)dP_{\vect{\nu}}(\vect{\nu}).
    }
    Let $\mu_{\nu} = \E[\nu] < 0$. 
    This is due to the fact $\varepsilon_{\delta}$ denotes the event $\{\min_{i} \nu_i  < - \delta\}$ it is reasonable that $P(C_\ell|\epsilon^c_{\ell}) \geq P(C_\ell)$ since we are excluding the events most likely to create a clique.  
    Furthermore by the same argument as in Lemma A.1 in \cite{lubold2023identifying} we can show
    \al{
        P(C_\ell|\varepsilon^{c}_{\delta}) &\leq \exp(-2\binom{\ell}{2} |\mu_{\nu} )
    }

    This can be achieved as follows: 
    \al{
        P(C_\ell|\varepsilon^{c}_{\delta}) &= E \left[ \prod_{i < j}\exp(\nu_i + \nu_j) \bigg |\varepsilon_\delta^c \right] \\
        &\leq E \left[ \prod_{i < j} \exp(\nu_i + \nu_j)  \right] \\
        &\leq \prod_{i < j}\left(E \left[ \exp(\binom{\ell}{2}(\nu_i + \nu_j))  \right]\right)^{1/\binom{\ell}{2}} \\
        &\leq \exp(\binom{\ell}{2}2\mu_\nu) \prod_{i < j} \left(E \left[ \exp(\binom{\ell}{2}(\eta_i + \eta_j))  \right]\right)^{1/\binom{\ell}{2}} \\
        &\leq \exp(\binom{\ell}{2}2\mu_\nu) \times 1 \\
        &= \exp(- \binom{\ell}{2} |\mu_\nu|)
    }
   due to the fact that the probability of connection is higher within $\varepsilon_{\delta}$ than outside of $\varepsilon^{c}_{\delta}$.  The remaining steps follow from an application of Holder's generalized inequality and $x_{i} = \mu_\nu + \eta_i$ for some noise $\eta_i$. 

   Returning to the computation of the ratio of probabilities,
    \al{
        \frac{P(\varepsilon_{\delta}|C_\ell)}{P(\varepsilon^C_{\delta}|C_\ell)} &= \frac{P(C_\ell|\varepsilon_{\delta})}{P(C_\ell|\varepsilon^C_{\delta})} \frac{P(\varepsilon_{\delta})}{1 - P(\varepsilon_{\delta})} \\
        &\geq \frac{1}{2} \frac{P(C_\ell|\varepsilon_{\delta})}{P(C_\ell|\varepsilon^C_{\delta})} P(\varepsilon_{\delta}) \\
        &\geq \frac{1}{2}\frac{P(C_\ell|\varepsilon_{\delta})}{P(C_\ell|\varepsilon^C_{\delta})} c_3^{\ell}\left(\delta\right)^{\ell} \\
        &\geq \frac{1}{2}\frac{\exp(-2\binom{\ell}{2}\delta)}{P(C_\ell|\varepsilon^C_{\delta})} c_3^{\ell}\left(\delta\right)^{\ell} \\
        &\geq \frac{1}{2}\exp\left( -2\binom{\ell}{2}\delta + 2\binom{\ell}{2}|\mu_\nu| + \ell\log(c_3) + \log(\delta)\ell \right) 
    }
    
    therefore we can let $\delta = \exp(- |\tilde \mu_\nu|2(\ell - 1) )$ for any $|\tilde \mu_\nu| < |\mu_\nu|$
    and therefore the proof is complete. 
    
\end{proof}

\subsection{Proof of Theorem~\ref{thm:asymptotic_distance}} \label{sec:proof_asymptotic_distance}

In order to prove the derive the asymptotic distribution of the distance estimator, we introduce a useful theorem.  This theorem will illustrate the rate of estimation of the random effects.

\begin{lemma} \label{lem:randeff_consistency}
    Let $W \subset \{1,2,\dots, n\}$ denote a subset of indices which form a clique ($C_\ell$).  Let $d_i$ denote the degree of node $i \in W$ where $|W| = \ell$ and assume that the points in the clique have a common latent position.  Denote the estimator of $\gamma_\mathpzc{W}$.  Let $\mu_{\nu} := E[\nu]$.  Then for any $\mu_{\nu} < \tilde \mu_{\nu} < 0 $
    \al{
        \hat \gamma_\mathpzc{W} &= \log\left( \frac{1}{\ell}\sum_{i \in W} \frac{d_i}{\max_{j \in W} d_{j}}\right).
    }
    Then if (E1) in Lemma~\ref{lem:rand_eff_clique} holds, then define $\tilde \mu_{\nu}$ as in Lemma~\ref{lem:rand_eff_clique}
    \begin{equation}
        \hat \gamma_\mathpzc{W} - \gamma_\mathpzc{W} = \OO_P\left(\max\left\{\exp(\tilde \mu_{\nu} \ell), \frac{1}{\sqrt{n}}\right\}\right)
    \end{equation}

\end{lemma}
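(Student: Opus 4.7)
The plan is to decompose $\hat \gamma_\mathpzc{W} - \gamma_\mathpzc{W}$ into two pieces: one that measures how well the degree ratios estimate the exponentiated random-effect contrasts, and one that measures the gap between $\nu_\mathpzc{W} = \max_{i \in \mathpzc{W}} \nu_i$ and $0$. The first piece will be controlled by a law-of-large-numbers/CLT argument at rate $1/\sqrt{n}$; the second will be controlled directly by Lemma~\ref{lem:rand_eff_clique}.

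First I would expand $P(A_{ik} = 1 \mid C_\ell)$ as $\exp(\nu_i)\,\Theta_i$, where $\Theta_i := \int \exp(\nu_k - d(Z_i,Z_k))\, dF_Z(z_k \mid C_\ell)\, dF_\nu(\nu_k \mid C_\ell)$. By Lemma~\ref{lem:clique_localization_rate}, the latent positions inside the clique are all within $o_P(\exp(-\tilde \mu_d \ell))$ of a common point, so $\Theta_i = \Theta\,(1 + o_P(\exp(-\tilde \mu_d \ell)))$ uniformly in $i \in \mathpzc{W}$. Writing $d_i = \sum_{k \notin \mathpzc{W}} A_{ik}$ (dropping the $O(\ell)$ within-clique terms, which are negligible under (F1)-type reasoning since $\ell = o(\sqrt{n})$), an application of Bernstein/Chebyshev to the independent indicators $A_{ik}$ given the latent variables gives
\begin{equation}
    \frac{d_i}{n} = \exp(\nu_i)\,\Theta + O_P\!\left(\tfrac{1}{\sqrt{n}}\right)
\end{equation}
uniformly over $i \in \mathpzc{W}$. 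Dividing $d_i$ by $d_{j^\ast}$ where $j^\ast = \argmax_{j \in \mathpzc{W}} d_j$, the common factor $\Theta$ cancels and the delta method yields
\begin{equation}
    \frac{d_i}{\max_{j \in \mathpzc{W}} d_j} = \exp(\nu_i - \nu_{j^\ast})\,(1 + O_P(1/\sqrt n)).
\end{equation}

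Next I would handle the distinction between $\nu_{j^\ast}$ (the random effect of the argmax-degree node) and $\nu_\mathpzc{W} = \max_{i \in \mathpzc{W}} \nu_i$. Since both are bounded above by $0$ and below by $\min_{i \in \mathpzc{W}} \nu_i$, Lemma~\ref{lem:rand_eff_clique} gives $|\nu_{j^\ast}|, |\nu_\mathpzc{W}| \leq |\min_i \nu_i| = o_P(\exp(\tilde \mu_\nu \ell))$. Substituting and averaging,
\begin{equation}
    \frac{1}{\ell} \sum_{i \in \mathpzc{W}} \frac{d_i}{\max_j d_j} = \Bigl(\frac{1}{\ell}\sum_i \exp(\Delta \nu_i)\Bigr)\,\bigl(1 + O_P(1/\sqrt n) + O_P(\exp(\tilde \mu_\nu \ell))\bigr),
\end{equation}
where $\Delta \nu_i = \nu_i - \nu_\mathpzc{W}$. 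Taking logs and comparing to the identity $\gamma_\mathpzc{W} = \nu_\mathpzc{W} + \log\bigl(\tfrac{1}{\ell}\sum_i \exp(\Delta \nu_i)\bigr)$ displayed in the text, I get
\begin{equation}
    \hat \gamma_\mathpzc{W} - \gamma_\mathpzc{W} = -\nu_\mathpzc{W} + O_P(1/\sqrt n) + O_P(\exp(\tilde \mu_\nu \ell)) = O_P\!\left(\max\{\exp(\tilde \mu_\nu \ell),\, 1/\sqrt n\}\right),
\end{equation}
which is the claimed rate.

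The main obstacle will be justifying the uniformity and independence needed for the $1/\sqrt{n}$ bound on $d_i/n$ when we condition on the clique event $C_\ell$. Conditional on $C_\ell$ and on the latent positions/effects of the clique members, the edges $A_{ik}$ for $k \notin \mathpzc{W}$ are still independent Bernoulli draws from the marginal over $(Z_k, \nu_k)$, so a conditional Chebyshev argument works; I would carry this out explicitly and then integrate out the clique randomness. The other subtlety is that the $\log$ transformation requires $\tfrac{1}{\ell}\sum \exp(\Delta \nu_i)$ to stay bounded away from $0$, which follows because $\exp(\Delta \nu_{j^\ast}) = 1$ guarantees the sum is at least $1/\ell$, and Lemma~\ref{lem:rand_eff_clique} gives $\Delta \nu_i = o_P(1)$, so the log's argument is in fact close to $1$ and the delta method applies cleanly.
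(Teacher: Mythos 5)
Your proof is essentially correct and follows the same route as the paper: localize the latent positions within the clique via Lemma~\ref{lem:clique_localization_rate}, factor the conditional connection probability as $\exp(\nu_i)$ times a shared integral that cancels in the degree ratio, apply a concentration inequality (Hoeffding in the paper, Chebyshev/Bernstein in your write-up) to get the $O_P(1/\sqrt{n})$ term, and invoke Lemma~\ref{lem:rand_eff_clique} to control the bias from treating the max-degree node's random effect as $0$. You are somewhat more careful than the paper on one point worth keeping: you explicitly distinguish between $j^\ast = \argmax_j d_j$ (the node with largest observed degree) and the node achieving $\nu_\mathpzc{W} = \max_i \nu_i$, and bound their gap via $|\nu_{j^\ast} - \nu_\mathpzc{W}| \le 2|\min_i \nu_i| = o_P(\exp(\tilde\mu_\nu \ell))$; the paper's proof simply says ``setting the smallest magnitude estimate to $0$'' without elaborating. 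You also make explicit that the degree-ratio concentration requires the conditional connection probabilities to stay bounded away from $0$, which the paper treats implicitly. These are clarifications of detail rather than a genuinely different argument.
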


Therefore, estimation of $\gamma_\mathpzc{W}$ within a clique can occur at an exponential rate, meaning that this estimation will be negligible compared to the average cross-clique probabilities.

\begin{proof}
    The proof here is a straightforward application of the plug in estimator of $\hat p_{\mathpzc{X}\mathpzc{Y}}$ and the set of random effects $\hat \gamma_{X/Y}$. Then by the Lindeberg-Feller CLT:
    \al{
        \sqrt{\ell^2 \sigma_\ell}(\hat p_{\mathpzc{X}\mathpzc{Y}} - p_{\mathpzc{X}\mathpzc{Y}}) &\to_d N\left(0, 1\right)
    }
    We first consider the localization of nodes within a clique, specifically
    \al{
         p_{\mathpzc{X}\mathpzc{Y}} &= \frac{1}{\ell^2}\sum_{x \in \mathpzc{X}}\sum_{y \in \mathpzc{Y}} \exp(\nu_x + \nu_y - d_{xy}) \\
         &= \frac{1}{\ell^2}\sum_{x \in \mathpzc{X}}\sum_{y \in \mathpzc{Y}} \exp(\nu_x + \nu_y - d_{\mathpzc{X}\mathpzc{Y}} + o_P(-\tilde \mu_d \ell)) \text{ by Lemma~\ref{lem:clique_localization_rate}} \\
         \implies d_{\mathpzc{X}\mathpzc{Y}} &= -\log(p_{\mathpzc{X}\mathpzc{Y}}) + \gamma_{X} + \gamma_{Y} + o_P(-\tilde \mu_d \ell)
    }
    We now study our distance estimator as per equation~\eqref{eq: distance estimate}
    \al{
        \hat d_{\mathpzc{X}\mathpzc{Y}} &= -\log(\hat p_{\mathpzc{X}\mathpzc{Y}} ) + \hat \gamma_{X} +  \hat \gamma_{Y} + o_P(\tilde \mu_d \ell) + o_P(-\tilde \mu_d \ell)\\
        \sqrt{\ell^2}\hat d_{\mathpzc{X}\mathpzc{Y}} &= -\sqrt{\ell^2}\log(\hat p_{\mathpzc{X}\mathpzc{Y}} ) + \sqrt{\ell^2}\hat \gamma_{X} +  \sqrt{\ell^2}\hat \gamma_{Y} \\
        &= -\sqrt{\ell^2}\log(\hat p_{\mathpzc{X}\mathpzc{Y}} ) + \gamma_{X} + \gamma_{Y} + o_P(\ell n^{-1/2}) +  o_P((\ell + 1) \exp(-\tilde \mu_d \ell)) +  o_P(\ell (-\tilde \mu_\nu \ell)) \text{ By Lemma~\ref{lem:rand_eff_clique}} \\
        &= -\sqrt{\ell^2}\log(\hat p_{\mathpzc{X}\mathpzc{Y}} ) + \gamma_{X} + \gamma_{Y} + o_P(1). 
    }
    And hence the higher order terms are asymptotically negligible compared to the estimation of $p_{\mathpzc{X}\mathpzc{Y}}$. Therefore by the delta method we can derive the asymptotic distribution of the distance estimator
    \al{
        \sqrt{\ell^2 \frac{\sigma_{\ell}}{p_{\mathpzc{X}\mathpzc{Y}}}}\left(\hat d_{\mathpzc{X}\mathpzc{Y}} - d_{\mathpzc{X}\mathpzc{Y}} \right) &\to_d N\left(0, 1\right) 
    }
    
\end{proof}

\subsection{Proof of Lemma~\ref{lem:randeff_consistency}}
\begin{proof}
    Consider a clique $\mathpzc{X}$ of size $\ell$.  We note that by Lemma~\ref{lem:clique_localization_rate} that the nodes within the clique are exponentially close to one another, and $d_{xx'} = o_P(\exp(-\tilde \mu_{d} \ell))$.
    Therefore we consider the ratio of the connection probability to any node in the network $P(A_{xi} = 1|\mathpzc{X})/P(A_{x'i} = 1|\mathpzc{X})$, where 
    \al{
        P(A_{xi} = 1|\mathpzc{X}) &= \int \exp(\nu_x + \nu_i - d(Z_x, Z_i)) dF_\nu(\nu_i)dF_Z(Z_i) \\
        &= \exp(\nu_x) \int  \exp(\nu_i - d(Z_x, Z_i)) dF_\nu(\nu_i)dF_Z(Z_i) \\
        &= \exp(\nu_x) \int  \exp(\nu_i - d(Z_{x'}, Z_i) + o_P(\exp(-\tilde \mu_d \ell))) dF_\nu(\nu_i)dF_Z(Z_i) \\
        \implies P(A_{xi} = 1|\mathpzc{X})/P(A_{x'i} = 1|\mathpzc{X}) &= \exp(\nu_x - \nu_{x'}) \exp(o_P(\exp(-\tilde \mu_d \ell)))) \\
        &=  \exp(\nu_x - \nu_{x'}) + o_P(\exp(-\tilde \mu_d \ell)))
    }
    Next we note that the empirical ratio of the connection probabilities can be computed using a ratio of degrees 
    \al{
        \frac{\widehat{P}(A_{xi} = 1|\mathpzc{X})}{\widehat{P}(A_{x'i} = 1|\mathpzc{X})} &= \frac{d_x/n}{d_{x'}/n} \\
    }
    By a simple application of Hoeffding's inequality, we see that 
    \al{
        |\widehat{P}(A_{xi} = 1|\mathpzc{X}) - P(A_{xi} = 1|\mathpzc{X})| &= O_P(1/\sqrt{n})
    }
    And therefore
    \al{
        \frac{\widehat{P}(A_{xi} = 1|\mathpzc{X})}{\widehat{P}(A_{x'i} = 1|\mathpzc{X})}  &= \frac{{P}(A_{xi} = 1|\mathpzc{X})}{{P}(A_{x'i} = 1|\mathpzc{X})} + O_P(1/\sqrt{n})
    }
    Lastly we apply this to the estimation of $\gamma_\mathpzc{X}$ where. 
    \al{
        \gamma_\mathpzc{X} &= \log\left(\frac{1}{\ell}\sum_{x \in \mathpzc{X}}\exp(\nu_i) \right)
    }
    By Lemma~\ref{lem:randeff_consistency} we note that the random effects converge to $0$ within a clique, therefore setting the smallest magnitude estimate to $0$ we finally observe that the final rate for estimating the random effects is as follows: 
    \al{
        \widehat{\gamma}_{\mathpzc{X}} - {\gamma}_{\mathpzc{X}} &= O_P(1/\sqrt{n}) + o_P(\exp(-\tilde \mu_d \ell))) + o_P(\exp(-|\tilde \mu_\nu| \ell))).
    }
\end{proof}

\subsection{Proof of Theorem~\ref{thm:copula_theorem}} \label{sec:proof_copula_theorem}

In order to prove Theorem~\ref{thm:copula_theorem} we first introduce a series of lemmas that will be useful for proof.  The proofs of these are contained in the subsequent subsections of the appendix.

Let $X_i$ be a random variable with marginal distribution function $F$, and let $X = (X_1, X_2, \dots, X_n)$ denote a random vector for which all marginal distributions are $F$, that may in general be correlated.  Let $H$ denote the joint distribution. By Sklar's Theorem~\citep{SKLAR1959FonctionsMarges, Durante2013ATheorem} we can express the joint distribution as
\al{
    H(\vect{x}) &= P(X_1 \leq x_1, X_2 \leq x_2 , \dots, X_n \leq x_n) \\
    &= \C(F(x_1), F(x_2), \dots, F(x_n))
}
where $\C$ is a copula for the joint distribution. A copula is simply a multivariate distribution function for a set of uniform random variables which explains their dependence.  

\begin{lemma} \label{lem:med_uniform_marginal}
    For a set of random variables $X_i$ with equal marginal distribution, with arbitrary correlation structure denoted by a copula $\C$, then $\mathbb{\hat  M}(X) = F^{-1}(\mathbb{\hat  M}(U))$ where $U \sim \C$ is a multivariate uniform random variable $U = (U_1, U_2, \dots, U_n)$. 
\end{lemma}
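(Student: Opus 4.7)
The plan is to reduce the general case to the uniform case via the probability integral transform. Since each $X_i$ has the same continuous marginal distribution $F$, by Sklar's theorem the vector $(U_1, \ldots, U_n) := (F(X_1), \ldots, F(X_n))$ has marginals that are $\text{Unif}(0,1)$ and joint distribution given by the copula $\C$. Because $F$ is continuous, the generalized inverse $F^{-1}$ satisfies $X_i = F^{-1}(U_i)$ almost surely, so the joint distribution of $(X_1, \ldots, X_n)$ is the pushforward of $\C$ under the coordinatewise application of $F^{-1}$.

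The second step is to exploit the fact that $F^{-1}$ is non-decreasing, which means it commutes with the operation of taking order statistics. If $X_{(1)} \leq X_{(2)} \leq \cdots \leq X_{(n)}$ and $U_{(1)} \leq U_{(2)} \leq \cdots \leq U_{(n)}$ denote the order statistics of the two samples, then $X_{(k)} = F^{-1}(U_{(k)})$ for every $k$ (with probability one, given continuity, since ties among the $U_i$'s occur with probability zero). Taking the sample median as the appropriate order statistic (e.g.\ the $\lceil n/2 \rceil$-th) gives $\mathbb{\hat M}(X) = F^{-1}(\mathbb{\hat M}(U))$ directly.

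The only subtlety is the convention for the median when $n$ is even: the standard definition as an average of the two middle order statistics would not commute with the nonlinear map $F^{-1}$. Under continuity of $F$ we may adopt the convention that $\mathbb{\hat M}$ returns a fixed middle order statistic, so this concern is merely notational. Since Theorem~\ref{thm:copula_theorem} only needs the tail probability $P(\mathbb{\hat M}(X) \leq t)$, what ultimately matters is the event $\{F^{-1}(\mathbb{\hat M}(U)) \leq t\} = \{\mathbb{\hat M}(U) \leq F(t)\}$, which is determined purely by order statistics of the $U_i$'s, so any consistent convention suffices. I expect this convention issue to be the only mildly tricky point; the bulk of the lemma is a direct consequence of the monotonicity of $F^{-1}$ together with Sklar's representation.
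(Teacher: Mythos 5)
Your proof is correct and takes essentially the same route as the paper: represent $X = F^{-1}(U)$ via Sklar's theorem, use the monotonicity of $F^{-1}$ to pass to order statistics, and handle the sample median accordingly. The only difference is in the even-$n$ case — the paper treats the median as any value in the closed interval between the two middle order statistics and shows the interval maps correctly under $F^{-1}$, while you instead fix the convention that $\mathbb{\hat M}$ returns a specific middle order statistic and observe that only the tail event $\{\mathbb{\hat M}(U) \leq F(t)\}$ matters downstream for Theorem~\ref{thm:copula_theorem}; both resolutions are sound and rest on the same underlying monotonicity argument.
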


This simply relates the distribution of the median of correlated random variables to the median in the copula representation.

\begin{lemma}\label{lem:union_cop}
    Consider for a fixed $z$, a set of intervals $A_I = \cap_{i \in I} \{U_i \leq z\}$ where $I$ is a subset of $\{1,2,\dots, n\}$.  Consider a sequence of sets $A_J$ $J \in \mathcal{J}$ such that any pair of $J, J'$ contain at least $1$ overlapping index. Then 
    \begin{equation}
        \C(\cup_{J \in \mathcal{J}} A_J) \leq \W(\cup_{J \in \mathcal{J}} A_J)
    \end{equation}
    where $\W$ is the perfectly collinear measure corresponding to the upper Fr\'echet Hoeffding Bound \citep{Frechet1957SurProbabilite, Hoeffding1940MasstabvariarteKorrelationstheorie.}. 
\end{lemma}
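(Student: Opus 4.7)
The plan is to reduce both sides of the inequality to a single marginal-probability comparison by exploiting the overlap hypothesis to collapse the union into a one-coordinate event. For the right-hand side: under the comonotonic (upper Fr\'echet--Hoeffding) copula $\W$ the uniform marginals are coupled so that $U_1 = \cdots = U_n = U$ almost surely, and hence for every non-empty $J$ the event $A_J = \bigcap_{i \in J}\{U_i \leq z\}$ collapses to $\{U \leq z\}$. The union $\bigcup_{J \in \mathcal{J}} A_J$ is therefore also $\{U \leq z\}$, giving $\W\bigl(\bigcup_{J \in \mathcal{J}} A_J\bigr) = z$.

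For the left-hand side, the strategy is to produce an index $i^{\star}$ that belongs to every $J \in \mathcal{J}$. Given such an $i^{\star}$, each $A_J$ is contained in $\{U_{i^{\star}} \leq z\}$, hence so is their union, and the uniform marginal immediately yields $\C\bigl(\bigcup_{J} A_J\bigr) \leq P(U_{i^{\star}} \leq z) = z$, matching the right-hand side. I would build $i^{\star}$ inductively over $\mathcal{J}$, at each step intersecting the running candidate set with the next $J_k$ and using the pairwise-overlap hypothesis together with the specific combinatorial structure of $\mathcal{J}$ arising in Theorem~\ref{thm:copula_theorem} --- where $\mathcal{J}$ encodes a sample-median event through index subsets of size exceeding $n/2$, forcing every pair to intersect by pigeonhole --- to certify that the running intersection stays non-empty.

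The chief obstacle is precisely this Helly-type step: pairwise overlap of abstract set families does not in general guarantee a globally common element, so the argument cannot live purely on the stated hypothesis and must also draw on the structure of $\mathcal{J}$ in the intended application. When a global common index fails to exist, the fallback is to absorb each $A_J$ into a single marginal event $\{U_{i_{J,J_0}} \leq z\}$, with $i_{J,J_0}$ chosen in the pairwise intersection $J \cap J_0$ with a fixed reference $J_0 \in \mathcal{J}$, and then to iterate these overlap reductions across $\mathcal{J}$ until the entire union is contained in one coordinate event $\{U_{i_0} \leq z\}$. The pairwise-overlap hypothesis supplies precisely the indices needed at each step of this reduction, which is the combinatorial minimum required for the bound to match the collapsed $\W$-probability.
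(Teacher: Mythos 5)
Your skeleton is correct up to the point you yourself flag: under $\W$ all coordinates are a.s.\ equal, so every $A_J$ collapses to $\{U\le z\}$ and $\W(\cup_J A_J)=\min\{z,1\}$; and if a single index $i^\star$ lay in every $J$, containment in $\{U_{i^\star}\le z\}$ would finish the proof. The difficulty is that pairwise overlap of the $J$'s does not give a global common index, and the fallback you sketch---absorbing each $A_J$ into a marginal event $\{U_{i_{J,J_0}}\le z\}$ and ``iterating the overlap reductions''---does not produce a single coordinate event containing the union. The union over several distinct indices $i_{J,J_0}$ of $\{U_{i_{J,J_0}}\le z\}$ can have copula probability strictly larger than $z$, and nothing in the pairwise hypothesis lets you collapse those indices to one. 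The Helly-type failure you worry about is genuine, not a technicality that more bookkeeping will fix.

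Indeed the lemma as stated is false, so no argument can close the gap. Take $n=3$, $z=1/2$, and $\mathcal{J}=\{\{1,2\},\{2,3\},\{1,3\}\}$: each pair of index sets overlaps, and $\cup_J A_J$ is the event that at least two of $U_1,U_2,U_3$ lie at or below $1/2$. Draw $W$ uniformly on $\{1,2,3,4\}$ and, conditional on $W$, place $(U_1,U_2,U_3)$ independently and uniformly in $[0,1/2]$ or $(1/2,1]$ so that the pattern of which coordinates land in $[0,1/2]$ is $(1,1,0)$, $(1,0,1)$, $(0,1,1)$, $(0,0,0)$ for $W=1,2,3,4$ respectively. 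Each $U_i$ is then marginally uniform, yet $\C(\cup_J A_J)=P(W\in\{1,2,3\})=3/4>1/2=z=\W(\cup_J A_J)$.

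The paper's own inductive proof has the same hole: after supposing $\Q(B\cup A_{n+1})>z$ for contradiction, it derives only $\Q(B\cup A_{n+1})\le z+\bigl(\Q(A_{n+1})-\Q(B\cap A_{n+1})\bigr)$ with the bracketed term nonnegative, which is not a contradiction. What is actually needed downstream is Lemma~\ref{lem:union_cop2}---that the copula probability of the sample median of $U$ falling at or below $z$ is at most $2z$---and that statement \emph{is} true, with a short direct proof that bypasses Lemma~\ref{lem:union_cop} entirely: the event in question is $\{N\ge\lceil n/2\rceil\}$ where $N$ counts coordinates at or below $z$, $\E[N]=nz$ by uniform marginals, and Markov's inequality yields $P(N\ge\lceil n/2\rceil)\le nz/\lceil n/2\rceil\le 2z$. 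So Theorem~\ref{thm:copula_theorem} and everything built on it survive; Lemma~\ref{lem:union_cop} itself should either be dropped or have its hypothesis strengthened to a globally common index, which is the only regime in which your argument (and the paper's) is valid.
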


This lemma is useful since it can describe an upper bound on the union of sets of intervals $A_J$. It is used in an intermediary step in the derivation of Lemma~\ref{lem:union_cop2}.

\begin{lemma}\label{lem:union_cop2}
    For any copula, $\C$
    \begin{equation}
        \C(\mathbb{\hat M}(U) \leq z) \leq 2 z
    \end{equation}
\end{lemma}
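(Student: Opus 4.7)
The plan is to apply Markov's inequality to the count statistic $N = \sum_{i=1}^n \mathbb{1}\{U_i \leq z\}$, exploiting the fact that marginal uniformity of each $U_i$ is preserved by every copula $\C$. This lets us sidestep the dependence structure entirely while still obtaining a bound uniform over all copulas.

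First I would establish the set inclusion
\[
\{\mathbb{\hat M}(U) \leq z\} \subseteq \{N \geq n/2\}.
\]
This follows directly from the definition of the sample median: if fewer than $n/2$ of the values $U_i$ were at most $z$, then the middle-ranked value (or average of the two middle-ranked values) would exceed $z$, contradicting $\mathbb{\hat M}(U) \leq z$.

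Next, I would compute $\E_{\C}[N]$ using only the marginal constraint. By linearity of expectation and the defining property of a copula that each $U_i \sim \mathrm{Unif}[0,1]$,
\[
\E_{\C}[N] = \sum_{i=1}^n \C(U_i \leq z) = nz.
\]
Applying Markov's inequality to the non-negative random variable $N$ and combining with the set inclusion then gives
\[
\C\bigl(\mathbb{\hat M}(U) \leq z\bigr) \;\leq\; \C(N \geq n/2) \;\leq\; \frac{\E_{\C}[N]}{n/2} \;=\; 2z,
\]
which is the desired bound.

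The main conceptual obstacle, such as it is, lies in resisting the temptation to route the argument through Lemma~\ref{lem:union_cop} by writing $\{\mathbb{\hat M}(U)\leq z\}$ as a union of intersections over index subsets of size $\lceil n/2 \rceil$ and bounding by the comonotone copula $\W$. That path is awkward because for even $n$ the relevant subsets of size exactly $n/2$ need not pairwise overlap, so the hypothesis of Lemma~\ref{lem:union_cop} fails on the nose. The Markov argument above cleanly avoids these combinatorial issues, holds for all $n$, and delivers the factor $2$ directly, making it the most natural proof strategy given only what Lemma~\ref{lem:med_uniform_marginal} requires as input for the reduction in Theorem~\ref{thm:copula_theorem}.
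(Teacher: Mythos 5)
Your proof is correct and takes a genuinely different route from the paper. The paper writes $\{\mathbb{\hat M}(U) \leq z\}$ as a union of events $A_J = \cap_{i \in J}\{U_i \leq z\}$ over subsets $J$ of size $\lceil n/2 \rceil$, then partitions this union into two subfamilies $S_1$ (all $J$ containing index $1$) and $S_2$ (the rest); within each subfamily every pair of index sets overlaps by a pigeonhole count, so Lemma~\ref{lem:union_cop} applies to each and gives $\C(S_1) + \C(S_2) \leq \W(S_1) + \W(S_2) = 2z$. You instead observe that $\{\mathbb{\hat M}(U) \leq z\}$ forces at least $n/2$ of the $U_i$ to fall below $z$, and then apply Markov's inequality to the counting variable $N=\sum_i \mathbb{1}\{U_i\leq z\}$, whose mean $nz$ is pinned down by marginal uniformity alone. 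Your argument is shorter and more elementary: it bypasses Lemma~\ref{lem:union_cop}, the Fr\'echet--Hoeffding comonotone bound $\W$, and the partition bookkeeping entirely, and it visibly needs nothing beyond linearity of expectation. One small correction to your closing remark: you say the hypothesis of Lemma~\ref{lem:union_cop} ``fails on the nose'' for even $n$; in fact the paper circumvents exactly this by splitting the family into $S_1$ and $S_2$ so that each piece does satisfy the pairwise-overlap hypothesis, so the paper's route is not broken, merely heavier. Both proofs yield the same sharp constant $2$; the Markov route makes it transparent why $2$ appears (it is $1/(1/2)$, the reciprocal of the median's quantile level), which the combinatorial proof obscures.
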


This lemma is useful in that it provides a worst-case bound on the median of uniform random variables.  From here the proof is short.

\begin{proof}
    The proof immediately is a result of Lemma~\ref{lem:med_uniform_marginal} and \ref{lem:union_cop2}. 
    \al{
        P(\mathbb{\hat M}(X) \leq t) &= \C( F^{-1}(\mathbb{\hat M} (U)) \leq t) \text{ Lemma~\ref{lem:med_uniform_marginal}}\\
        &= \C( \mathbb{\hat M} (U) \leq F(t)) \\
        &\leq 2 F(t) \text{ Lemma~\ref{lem:union_cop2}}\\
        &= 2P(X_1 \leq t)
    }
\end{proof}
This concludes the proof of Theorem~\ref{thm:copula_theorem}.  We subsequently prove the results of Lemmas~\ref{lem:med_uniform_marginal}, \ref{lem:union_cop}, \ref{lem:union_cop2}.

\subsection{Proof of Lemma~\ref{lem:med_uniform_marginal}}
\begin{proof}
    We can express $X = F^{-1}(U)$ where the vector $U$ is sampled from the copula $\C$. Then if $n$ is odd, with the subscript $(i)$ denoting the $i^{th}$ order statistic, $X_{((n + 1)/2)}$ is the median.  In this case, clearly $X_{(i)} = F^{-1}(U_{(i)})$. Now consider the case if $n$ is even.  If $z$ is any median of $U$ then $z \in [U_{(n/2)}, U_{(n/2 + 1)}]$ then 
    \al{
        z &\in [U_{(n/2)}, U_{(n/2 + 1)}] \\
        \iff F^{-1}(z) &\in [F^{-1}(U_{(n/2)}), F^{-1}(U_{(n/2 + 1)})] \\
        \iff F^{-1}(z) &\in [X_{(n/2)}, X_{(n/2 + 1)}]
    }
    Hence medians of the uniform distribution generated by the corresponding copula are mapped to the median of the observed variables. 
\end{proof}

\subsection{Proof of Lemma~\ref{lem:union_cop}}
\begin{proof}
    We prove this by induction. Consider the base case where there are two sets of intervals $A_{J_1}, A_{J_2}$. 
    Suppose there exists some copula $\Q$ such that $\Q(A_{J_1} \cup A_{J_2}) > \W(A_{J_1} \cup A_{J_2})$.  Firstly under the perfectly correlated copula $\W$ 
    \al{
        \W(A_{J_1} \cup A_{J_2}) &= \W(A_{J_1}) + \W(A_{J_2}) - \W(A_{J_1} \cap A_{J_2}) \\
        &= \min\{1,z\} + \min\{1,z\} - \min\{1,z\} \\
        &= z + z - z \\
        &= z
    }
    Then since $\{A_{J_1} \cup A_{J_2}\} \subset \{U_k \leq z\}$ for some index $k$.  Then 
    \al{
        \Q(A_{J_1} \cup A_{J_2}) &\leq \Q(U_k \leq z)
    }
    However, then this implies that the marginal distribution of $U_k$ under $\Q$ is not uniform, i.e. $\Q(U_k \leq z) > z$ and thus $\Q$ is not a copula. Therefore it must be the case that $\Q(A_1 \cup A_{J_2}) \leq \W(A_{J_1} \cup A_{J_2})$ holds in the base case.
    
    We next prove the induction step.  Suppose the following holds for any copula $\Q$ and a sequence of n sets of intervals $\mathcal{J}_n$. 
    \al{
        \Q(\cup_{J \in \mathcal{J}_n} A_j) \leq \W(\cup_{J \in \mathcal{J}_n} A_j)
    }
    for some $n$. Denote $B = \cup_{J \in \mathcal{J}_n} A_J$ then we must show for a new set of intervals $A_{n + 1}$
    \al{
        \Q(B \cup A_{n + 1}) \leq \W(B \cup A_{n + 1}).
    }
    Again, we prove this by contradiction. Suppose that there exists $\Q$ such that $\Q(B \cup A_{n + 1}) > \W(B \cup A_{n + 1})$
    \al{
        \W(B \cup A_{n + 1}) &= \W(B) + \W(A_{n + 1}) - \W(B \cap A_{n + 1})
    }
    Clearly by definition of $\W:$ $\W(B) = z$, $\W(A_{n + 1}) = z$ and $\W(B \cap A_{n + 1}) = z$. 
    Then 
    \al{
        \Q(B \cup A_{n + 1}) &= \Q(B)  + \Q(A_{n + 1}) - \Q(B \cap A_{n + 1}) \\
         &\leq \W(B) + \Q(A_{n + 1}) - \Q(B \cap A_{n + 1}) \\
        &\leq z + \underbrace{\Q(A_{n + 1}) - \Q(B \cap A_{n + 1})}_{\geq 0}
    }
    which generates a contradiction hence 
    \al{
        \Q(B \cup A_{n + 1}) \leq \W(B \cup A_{n + 1})
    }
\end{proof}
\subsection{Proof of Lemma~\ref{lem:union_cop2}}
\begin{proof}
    Define the events $A_J$ as in Lemma~\ref{lem:union_cop} of size $\lceil{n/2\rceil}$. Then the event $\{\mathbb{\hat M}(U) \leq z\}$ is equal to the union of all such $A_J$ as the median will be equivalent to the case when at least half of all uniforms are below $z$.  We can partition $\{A_J\}_{J \in \mathcal{J}}$ into two sets $S_1, S_2$ for which any two pairs of $A_J$ in a set have at least one overlapping index. This can be done by taking all the sets $\{A_J\}_{J \in \mathcal{J}}$ which suggest $\{U_1 \leq z\}$ and placing them into set $S_1$. Then all other sets must be placed in set $S_2$.  Since there are $(n - 1)$ possible remaining indices $\{2,3,\dots, n\}$ available for $S_2$ then any two events must have an overlapping index by the pigeonhole principle. Therefore 
    \al{
        \C(\mathbb{\hat M}(U) \leq z) &= \C(\cup A_i) \\
        &= \C(S_1 \cup S_2) \\
        &\leq \C(S_1) + \C(S_2) \text{ Union Bound}\\
        &\leq \W(S_1) + \W(S_2) \text{ Lemma~\ref{lem:union_cop}}\\
        &= 2z. 
    }
\end{proof}

\subsection{A corollary of Lemma~\ref{lem:union_cop2}}
A corollary immediately follows from Lemma~\ref{lem:union_cop2}.  This bounds the deviation of an arbitrary set of correlated uniform random variables by the distribution of its marginal. 

\begin{corollary}
    For any copula $\C$. 
    \begin{equation}
        \C(|\mathbb{\hat M}(U) - 1/2| > \epsilon) \leq 2 P(|U - 1/2| > \epsilon)
    \end{equation}
\end{corollary}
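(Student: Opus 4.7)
The plan is to split the two-sided deviation event into its lower and upper tails and bound each by Lemma~\ref{lem:union_cop2}. Writing $\C(|\mathbb{\hat M}(U) - 1/2| > \epsilon) = \C(\mathbb{\hat M}(U) < 1/2 - \epsilon) + \C(\mathbb{\hat M}(U) > 1/2 + \epsilon)$, where the two halves are disjoint up to a measure-zero boundary since the marginals are continuous uniform, the lower tail is handled immediately: Lemma~\ref{lem:union_cop2} with $z = 1/2 - \epsilon$ gives $\C(\mathbb{\hat M}(U) \leq 1/2 - \epsilon) \leq 2(1/2 - \epsilon)$, and by uniformity of each marginal this quantity equals $2 P(U_1 \leq 1/2 - \epsilon)$.

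For the upper tail I would reflect the vector. Define $V_i := 1 - U_i$, so that the random vector $V = (V_1, \dots, V_n)$ has uniform marginals (uniform$(0,1)$ is symmetric about $1/2$) and a valid copula $\C'$ obtained by composing $\C$ with the componentwise reflection $x \mapsto 1-x$ on $[0,1]^n$. Since the reflection is strictly decreasing, it reverses order statistics, and in particular $\mathbb{\hat M}(V) = 1 - \mathbb{\hat M}(U)$. Thus the event $\{\mathbb{\hat M}(U) > 1/2 + \epsilon\}$ coincides with $\{\mathbb{\hat M}(V) < 1/2 - \epsilon\}$, and a second application of Lemma~\ref{lem:union_cop2}, this time to the copula $\C'$ with $z = 1/2 - \epsilon$, bounds its measure by $2(1/2 - \epsilon) = 2 P(U_1 > 1/2 + \epsilon)$.

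Adding the two bounds gives $\C(|\mathbb{\hat M}(U) - 1/2| > \epsilon) \leq 2 P(U_1 < 1/2 - \epsilon) + 2 P(U_1 > 1/2 + \epsilon) = 2 P(|U_1 - 1/2| > \epsilon)$, which is the stated inequality. The only step requiring care is verifying that the reflected measure $\C'$ actually satisfies the hypothesis of Lemma~\ref{lem:union_cop2}, i.e.\ that it is a bona fide copula; this reduces to checking that its one-dimensional marginals remain uniform on $[0,1]$, which is automatic by the symmetry of the uniform distribution about $1/2$. I do not anticipate any obstacle beyond this routine verification: the corollary is essentially a symmetrization of the already-established one-sided bound in Lemma~\ref{lem:union_cop2}.
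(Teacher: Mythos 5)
Your proof is correct and follows essentially the same route as the paper: split the two-sided deviation into lower and upper tails, bound the lower tail directly by Lemma~\ref{lem:union_cop2}, and bound the upper tail by applying Lemma~\ref{lem:union_cop2} to the reflected vector $V = 1 - U$ (whose law is again a copula) after noting that $\mathbb{\hat M}(V) = 1 - \mathbb{\hat M}(U)$. The only cosmetic difference is that you observe the two tail events are disjoint and write the split as an equality, whereas the paper reaches the same sum via a union bound; both yield the stated constant.
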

\begin{proof}
    $P(|U - 1/2| > \epsilon)$ is simply the marginal distribution of a uniform distribution. 
    \al{
        P(|U - 1/2| > \epsilon) &= \max\{1 - 2\epsilon, 0\} 
    }
    Next we note that
    \al{
        \{|\mathbb{\hat M}(U) - 1/2| > \epsilon\} &= \{\mathbb{\hat M}(U) < 1/2 - \epsilon\} \cap \{ \mathbb{\hat M}(U) > 1/2 + \epsilon\} \\
    }
    Note that we can define $1 - V = U$ where the measure of $V$, $\tilde \C$ is also a copula, since this is a joint distribution of marginally uniform variables.  
    \al{
        \{ \mathbb{\hat M}(U) > 1/2 + \epsilon\} &= \{\mathbb{\hat M}(V) < 1/2 - \epsilon\}
    }
    hence by a union bound. 
    \al{
        \C(|\mathbb{\hat M}(U) - 1/2| > \epsilon) &\leq \C(\mathbb{\hat M}(U) < 1/2 - \epsilon) + \tilde \C(\mathbb{\hat M}(V) < 1/2 - \epsilon) \\
        &= 2(1/2 - \epsilon) + 2(1/2 - \epsilon) 
    }
    then since the copula is non-negative 
    \al{
        \C(|\mathbb{\hat M}(U) - 1/2| > \epsilon) &\leq 2 \max\{1 - 2\epsilon, 0\} \\
        &= 2P(|U - 1/2| > \epsilon)
    }
\end{proof}

\section{Additional Computational Details}

\subsection{Newton Method for $\hat \kappa$}

Given a set of distances $\hat d$ we can estimate the curvature using a newton method.  Firstly, we compute the derivative of $g(\kappa,d)$ with respect to $\kappa$. 

\al{
    \frac{\partial}{\partial \kappa} g(\kappa, \vect{d}^{\triangleline}) &= (1/(4 \kappa^2)) \bigg(8 \cos(d_{xm} \sqrt{\kappa}) - 
  4 \cos(d_{xz} \sqrt{\kappa}) \text{sec}(d_{yz} \frac{\sqrt{\kappa}}{2}) \\
  &+ 
  4 d_{xm} \sqrt{\kappa} \sin(d_{xm} \sqrt{\kappa}) \\
  &- 
  2 d_{xy} \sqrt{\kappa} \text{sec}(d_{yz} \frac{\sqrt{\kappa}}{2}) \sin(d_{xy} \sqrt{\kappa}) \\
  &- 
  2 d_{xz} \sqrt{\kappa} \text{sec}(d_{yz} \frac{\sqrt{\kappa}}{2}) \sin(d_{xz} \sqrt{\kappa}) \\
  &+ 
  d_{yz} \sqrt{\kappa}
    \cos(d_{xz} \sqrt{\kappa}) \text{sec}(d_{yz} \frac{\sqrt{\kappa}}{2}) \text{tan}(d_{yz} \frac{\sqrt{\kappa}}{2}) \\
    &+ 
  \cos(d_{xy} \sqrt{\kappa}) \text{sec}((d_{yz} \sqrt{\kappa})/
    2) (-4 + d_{yz} \sqrt{\kappa} \text{tan}(d_{yz} \frac{\sqrt{\kappa}}{2})\bigg)
}

This allows us to construct a newton method for estimating the root $\hat \kappa$

\begin{equation*}
    \hat \kappa_{(m + 1)} = \hat \kappa_{(m)} - \frac{g(\hat \kappa_{(m)}, \hat d)}{\frac{\partial}{\partial \kappa} g(\hat \kappa_{(m)}, \hat d)}
\end{equation*}

\subsection{Distance Estimation} \label{sec:distance_estimation}

We recall the problem of estimating a distance matrix from a set of cliques $\mathcal{C}$.  Though this problem is convex, due to the $O(K^3)$ restrictions in the problem, it the problem is often slow to reach a solution in \texttt{CVXR}.  Instead, we solve this problem using a successive second order approximation.  

\al{
    &f(D):= \sum_{\mathpzc{X}, \mathpzc{Y} \in \mathcal{C}, i \in \mathpzc{X}, j \in Y} \bigg(A_{ij}\left( \nu_i + \nu_j - d_{xy}\right) \\
    &+ (1 - A_{ij})\log\left(1 - \exp\left( \nu_i + \nu_j - d_{xy}\right)\right) \bigg)\\
    &\approx \sum_{\mathpzc{X}, \mathpzc{Y} \in \mathcal{C}, i \in \mathpzc{X}, j \in Y} \bigg( A_{ij}\left( \nu_i + \nu_j - D_{0,x,y}\right) \\
    &+ (1 - A_{ij})\log\left(1 - \exp\left( \nu_i + \nu_j - D_{0,x,y}\right)\right) \\
    &+  \left( (A_{ij} -1)\frac{\exp\left( \nu_i + \nu_j - D_{0,x,y}\right)}{1 - \exp\left( \nu_i + \nu_j - D_{0,x,y}\right)} -A_{ij}\right)\left( d_{xy} - D_{0,x,y}\right) \\
    &+   (A_{ij} - 1)\frac{\exp\left( \nu_i + \nu_j - D_{0,x,y}\right)}{\left(1 - \exp\left( \nu_i + \nu_j - D_{0,x,y}\right)\right)^2}\frac{\left( d_{xy} - D_{0,x,y}\right)^2}{2} \bigg) \\
    &:= \tilde g(D,D_0)
}

Hence to compute the global solution $\vect{\hat D}$,  we can iteratively solve the following optimization problem
\al{
    \vect{\hat D}_{t +1} &= \text{argsup}_{D \in \in \R^{K \times K}} \tilde g(D,\vect{\hat D}_{t}) \\
     D_{ij} &\geq 0 \quad \text{ for all } i,j \\
     \text{Diag}(D) &= 0 \\
     \text{tr}(E_{s}^\top D) &\geq 0 \forall s \in \mathcal{S}
}

This process can be further sped up by choosing a good initialization matrix. We can do this by using the unconstrained maximum likelihood estimate $\vect{\hat D}_U$, which is very fast to compute but does not enforce triangle inequality restrictions. This can be computed analogously as in Theorem~\ref{thm:asymptotic_distance}.  
Though many of the distances $\vect{\hat D}_U$ may not satisfy the triangle inequality, we can trim the distances so that $\vect{\hat D}_U$ form a distance matrix, and use this as the starting point. 
The Floyd-Warshall Algorithm is a possible option for constructing a distance matrix from a noisy matrix which might not have a distance structure. A natural extension to this in our context is seen in Algorithm~\ref{alg:FWA}.

\begin{algorithm}[!t]
\caption{Adapted Floyd-Warshall Algorithm}\label{alg:FWA}
\begin{algorithmic}[1]
\Require $D \in \R^{K \times K}_{\geq 0}$
\Ensure $ D = D^\top$
\State Trim entries below 0: $  D[D < 0 ] \gets 0$
\For{$k \in \{1,2,\dots, n\}$}
{
    \For{$j \in \{1,2,\dots, n\}$}
    {
        \For{$i \in \{1,2,\dots, n\}$}
        {
            \If{$D_{ij} >  (D_{ik} + D_{kj})$}
            
                \State $D_{ij} \gets  (D_{ik} + D_{kj})$
            
            \EndIf
        }
        \EndFor
    }
    \EndFor
}
\EndFor
\end{algorithmic}
\end{algorithm}

One can draw similarities here to the problem of sparse metric repair. Metric repair seeks to adjust the fewest entries in a noisy distance matrix so that it still preserves the properties of being a metric (positivity, triangle inequality). \citet{gilbert2017if} illustrated this Floyd-Warshall algorithm to be a solution to a special type known as decrease only metric repair.

\section{Additional Discussion on the Latent Distance Model} \label{sec: additional discussion on the latent distance model}
\subsection{Other Link Functions} \label{sec: Other Link Functions}

Another common link function is the logistic link where the generative model for the network is as follows: 
\al{
    \nu_i &\sim F_{\nu}, \quad \nu_i \leq 0 \\
    Z_i &\sim F_Z, \quad Z_i \in \M^p \\
    P(A_{ij} = 1) &= \logit(\nu_i + \nu_j + \varphi - d(Z_i,Z_j)). 
}
We can consistently estimate the node level parameters up to a constant shift using conditional maximum likelihood as in the semiparametric Rasch model \citep{andersen1970asymptotic}. The parameter $\varphi$ controls the global sparsity. Similar to before, we note that $\nu_i$ terms are likely to be very large in a cliques then we can set the largest parameter in each group to be nearly zero.

Other link functions may be used but will likely all need specific methods to estimate $\nu_i$ parameters within each clique. However, we have shown how it can be developed in these two canonical cases. 

\subsection{Alternative Estimators of the Distance Matrix} \label{sec: Alternative Estimators of the Distance Matrix}
Additional methods for estimating the distance matrix can be developed. A promising direction is to utilize structured sparsity in the distance matrix. 

Our model exhibits numerous similarities to the $\beta$-model of network formation. In this framework, each node in the network possesses a gregariousness parameter $\beta$. An extension of this is the sparse beta model \cite{chen2021analysis}. In our context, $\nu$ functions analogously to $\beta$:

$$ P(A_{ij} = 1 | \nu) = \Lambda(\mu + \nu_i + \nu_j)$$

where $\|\nu\|_0 \leq s$ for some $s \ll n$. This parameterization facilitates a diminishing value of $\mu$ and is suitable for sparsely growing networks.

Furthermore, a distance matrix can be incorporated as follows:

$$ P(A_{ij} = 1 | \nu) = \Lambda(\mu + \nu_i + \nu_j - d_{ij})$$

where $\|D\|_{0} \leq s_D \ll n$, $D \in \mathcal{D}$. Here, $\mathcal{D}$ represents the convex region of matrices constrained to be distances. If sparsity is not maintained, the interior dimension of $\mathcal{D}$ for a set of $n$ points is $\binom{n}{2}$. Consequently, restricting the analysis solely to the distance matrix still results in $\binom{n}{2}$ observations, making estimation infeasible without additional shared structure, i.e. through sparsity of the distance matrix. 

Additionally, this formulation would be able to adjust to the sparsity level of the network data by asymptotically letting $\mu \to -\infty$. While a lasso-type procedure could be considered for estimating this distance matrix, a complete discussion of such methodologies is beyond the scope of this paper and is designated as future work.

\subsection{Rates of Clique Formation} \label{sec: rates of clique formation}
Here we clarify the notion of the likelihood of forming cliques of a given size. A famous result by \cite{grimmett1975colouring} states that the largest clique within an Erdos-Reyni random graph, $CN(n)$ 
$$ \frac{CN(n)}{\log(n)}\to_{a.s.}\frac{2}{\log(1/p)}$$
where $\to_{a.s.}$ indicates almost sure convergence.

The behaviour governing the formation of the cliques is determined by the concentration of the latent positions in the latent space.  Since this requires points that are exceedingly close together, the curvature of the space does not come into play here, but rather in the tendency of connections across cliques.  In the case of a hyperbolic space, this tends to generate tree-like structures between clusters of cliques, as seen in Figure~\ref{fig:curvature_partition:subfig:HEP_applied}.  Hence a hyperbolic space itself does not prevent the formation of cliques, so long as there as there is a reasonable concentration of the positions in the space. 

However, in our case, due to the fact that we assume that there is some continuous distribution of latent positions and gregariousness parameters, then we can illustrate a polynomial growth in the size of cliques. This results in probabilities of connection approaching arbitrarily close to $1$ rather than being bounded away from $1$ by a fixed probability.  Let $\mathfrak{S}(\ell)$ denote the combinations of nodes of size $\ell$ and $S \in \mathfrak{S}(\ell)$. 

Let $W_S$ denote the event that the nodes in $S$ form a clique. Our goal is to express the expected number of cliques of size $\ell$ generated from the model $\mathfrak{W} = \sum_{S \in \mathfrak{S}(\ell)} W_S$. 

\begin{theorem} \label{thm: lower_bound_clique_expectation}
    Let $\mathfrak{W}_{n,\ell}$ be the number of cliques formed of size $\ell$ formed from a network of size $n$.  If assumptions (D1) in \ref{lem:clique_localization_rate} and (E1) in \ref{lem:rand_eff_clique} hold, and $\ell = \OO(n^{1/(p + 2) - \epsilon'})$ for any $\epsilon' > 0$. Then the expected number of cliques of size $\ell$ diverges i.e.
    \begin{equation} \label{thm: num_cliques_diverge}
        \lim_{n \to \infty}\E[\mathfrak{W}_{n,\ell}] \to  \infty
    \end{equation}
\end{theorem}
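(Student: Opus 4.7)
\textbf{Proof Plan for Theorem~\ref{thm: lower_bound_clique_expectation}.}

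The plan is to directly lower bound the expectation by writing $\E[\mathfrak{W}_{n,\ell}] = \binom{n}{\ell} P(W_S)$ for any fixed $S \in \mathfrak{S}(\ell)$, and showing $\log P(W_S)$ decays at a controlled polynomial-in-$\ell$ rate that is overcome by $\log \binom{n}{\ell}$ when $\ell = \OO(n^{1/(p+2) - \epsilon'})$. Since clique formation happens when nodes are both close in the latent space and have nearly-zero random effects, I will use the same conditioning technique used in the proofs of Lemmas~\ref{lem:clique_localization_rate} and~\ref{lem:rand_eff_clique}: localize positions inside a small ball $B(q,\delta)$ around a point $q$ where $f_Z(q)>0$ (such a $q$ exists by continuity (D1) and normalization), and localize each $\nu_i$ in $[-\eta,0]$ where $f_\nu(0)>0$ by (E1).

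First, I will write
\[
P(W_S) \;\geq\; P\bigl(\{Z_i\}_{i\in S} \subset B(q,\delta)\bigr)\,P\bigl(\{\nu_i\}_{i\in S} \subset [-\eta,0]\bigr)\,P\bigl(W_S \,\big|\, \text{these events}\bigr).
\]
On the event of joint localization, every pairwise distance is at most $2\delta$ and $\nu_i+\nu_j \geq -2\eta$, so
\[
P\bigl(W_S \,\big|\, \cdot\bigr) \;=\; \exp\!\Bigl(\textstyle (\ell-1)\sum_i \nu_i - \sum_{i<j} d(Z_i,Z_j)\Bigr) \;\geq\; \exp\!\bigl(-2\tbinom{\ell}{2}(\eta + \delta)\bigr).
\]
Next, by the locally Euclidean assumption (A2) together with continuity of $f_Z$ around $q$, there is a constant $c_2 > 0$ with $P(\{Z_i\}\subset B(q,\delta)) \geq (c_2 \delta^p)^\ell$ for all sufficiently small $\delta$; similarly by (E1), $P(\{\nu_i\}\subset [-\eta,0]) \geq (c_3 \eta)^\ell$ for all small $\eta$.

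Combining and using $\log\binom{n}{\ell} \geq \ell \log(n/\ell)$,
\[
\log \E[\mathfrak{W}_{n,\ell}] \;\geq\; \ell\log(n/\ell) + \ell p\log(c_2 \delta^p) + \ell\log(c_3\eta) - 2\tbinom{\ell}{2}(\eta+\delta).
\]
I will then optimize over $(\delta,\eta)$. The first-order conditions give $\delta \asymp 1/\ell$ and $\eta \asymp 1/\ell$; plugging in these choices makes the $-2\binom{\ell}{2}(\eta+\delta)$ term absorbed into $O(\ell)$, while the logarithmic terms contribute $-(p+2)\ell\log \ell + O(\ell)$, yielding
\[
\log \E[\mathfrak{W}_{n,\ell}] \;\geq\; \ell\bigl[\log n - (p+2)\log \ell\bigr] - O(\ell).
\]
Under $\ell = \OO(n^{1/(p+2)-\epsilon'})$ one has $\log n - (p+2)\log\ell \geq (p+2)\epsilon' \log n - O(1) \to \infty$, hence the right-hand side diverges and \eqref{thm: num_cliques_diverge} follows.

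The main obstacle is not conceptual but bookkeeping: I need to verify that the regime where the lower bounds $(c_2\delta^p)^\ell$ and $(c_3\eta)^\ell$ are valid (i.e.\ $\delta,\eta$ small enough that the continuity-based lower bounds hold) is compatible with the optimizing choices $\delta,\eta \asymp 1/\ell$. This is automatic once $\ell$ is large enough since $1/\ell \to 0$, but one must be careful that all the hidden constants in the locally Euclidean bound of (A2) depend only on the point $q$ and not on $\ell$. A secondary subtlety is that the proof only yields a lower bound via one fixed $q$; this is enough for divergence but does not by itself give a sharp constant -- which is acceptable since the theorem only asserts divergence, not an exact rate.
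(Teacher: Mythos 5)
Your proposal is correct and follows essentially the same route as the paper: both lower bound $\E[W_S]$ by conditioning on localization events (positions in a small ball, random effects near $0$), choose localization scale $\asymp 1/\ell$ so the conditional clique probability is $\exp(-O(\ell))$ while the localization probability costs only $\exp(-(p+2)\ell\log\ell + O(\ell))$, and then multiply by $\binom{n}{\ell} \geq (n/\ell)^\ell$. The only cosmetic difference is that you carry separate parameters $\delta,\eta$ for the two localizations while the paper uses a single $\delta$; your final exponent computation matches the paper's.
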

\begin{proof}
We first lower bound $\E[W_S]$ then utilize the linearity of expectation in order to compute this lower bound. 

\al{
    \E[W_S] &= \int \prod_{i < j}\exp(\nu_i + \nu_j - d(Z_i,Z_j)) dF_{\vect{\nu}}(\vect{\nu}) dF_{\vect{Z}}(\vect{Z}) \\
     &= \int \exp( (\ell - 1)\sum_{i = 1}^\ell \nu_i - \sum_{i < j}^\ell d(Z_i,Z_j)) dF_{\vect{\nu}}(\vect{\nu}) dF_{\vect{Z}}(\vect{Z})
}
Next, using assumptions (D1) in \ref{lem:clique_localization_rate} and (E1) in \ref{lem:rand_eff_clique} we focus on the concentration of the latent positions. 

Suppose that $\epsilon_\delta = \{\max_{i = 1}^\ell d(Z_i, Z_j) \leq \delta\}$ and $\varepsilon_{\delta} = \{\min_{i = 1}^\ell \nu_i > -\delta\}$ then $P(\epsilon_\delta) \geq (c_2\delta/2)^{p\ell}$ and $P(\varepsilon_\delta) \geq (c_3\delta)^\ell$. 

Therefore: 
\al{
    \E[W_S] &\geq \exp(- 2\ell(\ell - 1)\delta ) (c_2\delta/2)^{p\ell}(c_3\delta)^\ell
}
and we let $\delta = \frac{\delta'}{\ell}$ for some $\delta' > 0$
\al{
    \E[W_S] &\geq \exp(- 2(\ell - 1)\delta' ) (c_2\delta'/(2\ell))^{p\ell}(c_3\delta'/\ell)^\ell \\
    &=  \exp(- 2(\ell - 1)\delta' - (p + 1)\ell\log(\ell))(c_2\delta'/(2))^{p\ell}(c_3\delta')^\ell
}
The dominant term here is the $\exp(- (p + 1)\ell\log(\ell))$. Next summing over $\mathfrak{S}(\ell)$. 
\al{
    \E[\mathfrak{W}] &= \sum_{S \in \mathfrak{S}(\ell)} \E[W_S] \\
    &\geq \binom{n}{\ell} \exp(- 2(\ell - 1)\delta' - (p + 1)\ell\log(\ell))(c_2\delta'/(2))^{p\ell}(c_3\delta')^\ell. 
}
We can lower bound the binomial coefficient $\binom{n}{\ell} \geq \left(\frac{n}{\ell}\right)^\ell$. Therefore, we consider the relationship between $n$ and $\ell$ such that the expected number of cliques of size $\ell$ grows to infinity.  We see that letting $\ell = n^{1/(p + 2) + \epsilon'}$ for any $\epsilon' > 0$ ensures that this lower bound diverges. 

Then for large $n$
\al{
    \E[\mathfrak{W}] &\gtrsim \left(\frac{n}{\ell} \right)^\ell \exp(- (p + 1)\ell\log(\ell)) \\
    &= \left(\frac{\ell^{(p + 2) + \epsilon'}}{\ell} \right)^\ell \exp(- (p + 1)\ell\log(\ell)) \\
    &= \exp\left( (p + 1 + \epsilon')\ell\log(\ell)  - (p + 1)\ell\log(\ell)\right) \\
    &\to_{\ell \to \infty } \infty
}
and hence we expect the number of cliques of this size to diverge to infinity. 
\end{proof}

\section{Riemannian Geometry Definitions} \label{sec: riemannian geometry definitions}

In this section, we review some definitions of the sectional and scalar curvature as well as the volume elements.  A Riemannian manifold $\mathcal{M} = (M,g)$ is a smooth manifold $M$ equipped with a Riemannian inner product $g_q$ on the tangent space $T_q(\M)$ at any point $q \in \mathcal{M}$, $g_q(u,v): T_q(\M) \times T_q(\M) \mapsto \R $.

This inner product can be used to define the Riemann curvature tensor at a point $q \in \M$ $R_q(u,v)w$, which takes $3$ vectors $u,v,w$ and returns an element of the tangent space
\al{
    R_q(u,v)w&: \quad  T_q(\M)  \times T_q(\M) \times T_q(\M) \mapsto T_q(\M) \\
    R_q(u,v)w &:= [\nabla_{u}, \nabla_{v}]w- \nabla_{[u,v]} w  
}
where $[u,v]$ is the lie bracket of vector fields and $[\nabla_{u}, \nabla_{v}]$ is the commuter of differential operators. The Riemann curvature tensor can be used to define our main quantity of interest, the sectional curvature at a point $\kappa_q(u,v): \times T_q(\M) \times T_q(\M) \mapsto \R$. The sectional curvature takes two linearly independent elements of the tangent space and maps them to the real line. 

$$ \kappa_q(u,v) := \frac{g_q(R_q(u,v)v, u)}{g_q(u,u)g_q(v,v) - g_q(u,v)^2}.$$

The sectional curvature is independent of the coordinate system used, but depends only on the linear subspace spanned by $u,v$.  Furthermore, in the canonical manifolds $\kappa_q(u,v) = \kappa$ by construction. 

From the sectional curvature, we can define the scalar curvature $S(m)$, 
$$  S(q) := \sum_{i \not = j } \kappa(e_i,e_j)$$
where $\{e_i\}_{i = 1}^p$ form an orthonormal frame for $T_q(\M)$.  We can think of the scalar curvature as an ``average of sectional curvatures" across the manifold. \\

Next we consider the distance induced by the metric tensor. Given a smooth curve \(\gamma: [a, b] \to \M^p\) with \(\gamma(a) = q_1\) and \(\gamma(b) = q_2\), the \textbf{length} \(L(\gamma)\) of \(\gamma\) is defined by:
\[
L(\gamma) = \int_a^b \sqrt{g_{\gamma(t)}\left(\dot{\gamma}(t), \dot{\gamma}(t)\right)} \, dt,
\]
where \(\dot{\gamma}(t)\) is the tangent vector to the curve \(\gamma\) at time \(t\).

The Riemannian distance \(d(q_1, q_2)\) between two points \(q_1, q_2 \in \M^p\) is defined as:
\[
d(q_1, q_2) = \inf_{\gamma} L(\gamma),
\]
where the infimum is taken over all smooth curves \(\gamma: [a, b] \to M\) such that \(\gamma(a) = q_1\) and \(\gamma(b) = q_2\).

We lastly define a volume form (also known as the Levi-Civita Tensor) via the Riemannian inner product $g$.  If $\omega$ is a local oriented coordinate system near a point $q$ then 
\al{
    dV &:= \sqrt{|\text{det}(g)|} d\omega. 
}
where $g_{q}$ is the metric tensor evaluated on the basis coordinate system $\omega$.  For further details on these quantities, see \citet{Klingenberg1995RiemannianGeometry}. 

From this definition of a volume, we can define probability density functions on the manifold.  A density function $f$ corresponding to a measure $F$ with support on the manifold can be defined as follows. For a set $\mathcal{X} \subset \M$.

$$ P(X \in \mathcal{X}) = \int_{x \in \mathcal{X}} f(x) dV_x$$

See \citet{Pennec1999ProbabilitiesMeasurements.} for further introduction for defining probabilities on the manifold.

\section{Assumptions on $\M$} \label{sec:manifold_assumptions}

Here we verify that the Algebraic Midpoint properties, as well as locally Euclidean properties are satisfied for a complete simply connected smooth Riemannian manifolds. 

If the algebraic midpoint property is satisfied for any complete metric space $\Mf$ then by Theorem 1.8 of \citet{2007MetricSpaces}, then $\Mf$ is a \textbf{path metric space}. The authors follow up in discussion a list of examples of path metric spaces, which include Riemannian manifolds with boundary. 

Secondly, if $\M^p$ has scalar curvature at point $q$, $S(q)$ then 
\al{
    \frac{\text{Vol}(B_{\M^p}(\epsilon, q))}{\text{Vol}(B_{\E^p}(\epsilon, q))} &= 1 - \frac{S(q)}{6(p + 2) }\epsilon^2 + o(\epsilon^3)
}
by Theorem 3.98 of \citet{Gallot2004RiemannianGeometry}. Since this holds, then for a latent metric which is generated by distances on a Riemannian manifold, the locally Euclidean volume property will hold.

\section{Graph Statistics From Simulations}\label{sec:graph_statistics}

Columns denote the scale factor used in the simulations.

\begin{table}[!h] \centering 

  \caption{$\kappa = -2$ graph statistics summary. } 
  \label{tab:kappa_m2} 
\begin{tabular}{@{\extracolsep{5pt}} ccccc} 
\\[-1.8ex]\hline 
\hline \\[-1.8ex] 
Scale ($\rho$) & 0.7 & 1 & 2  \\ 
\hline \\[-1.8ex] 
Edge.fraction & 0.016 (0.001) & 0.012 (0.001) & 0.007 (0.001)  \\ 
Max.Degree & 341.3 (23.596) & 412.29 (26.364) & 656.63 (33.954) \\ 
Mean.Degree & 56.726 (5.279) & 58.538 (5.351) & 73.975 (5.448)  \\ 
Distinct.Cliques$\geq \ell$ & 78.665 (6.955) & 53.675 (5.427) & 43.515 (4.41)  \\ 
Max.Clique.Size & 24.215 (4.985) & 28.725 (5.494) & NaN (NA)  \\ 
Mean.Degree.Centrality & 0.155 (0.009) & 0.134 (0.007) & 0.107 (0.005) \\ 
\hline \\[-1.8ex] 
\end{tabular} 
\end{table}

\begin{table}[!h] \centering 
  \caption{$\kappa = -1$ graph statistics summary. } 
  \label{tab:kappa_m1}  
\begin{tabular}{@{\extracolsep{5pt}} ccccc} 
\\[-1.8ex]\hline 
\hline \\[-1.8ex] 
 Scale ($\rho$) & 0.7 & 1 & 2 \\ 
\hline \\[-1.8ex] 
Edge.fraction & 0.017 (0.002) & 0.012 (0.001) & 0.008 (0.001) \\ 
Max.Degree & 343.69 (25.21) & 416.825 (28.796) & 661.12 (32.257) \\ 
Mean.Degree & 58.943 (5.354) & 61.168 (5.636) & 78.239 (5.6)  \\ 
Distinct.Cliques$\geq \ell$ & 80.14 (6.488) & 54.965 (5.404) & 42.32 (5.027)  \\ 
Max.Clique.Size & 24.005 (5.456) & 28.735 (6) & NaN (NA)  \\ 
Mean.Degree.Centrality & 0.159 (0.008) & 0.137 (0.007) & 0.112 (0.005) \\ 
\hline \\[-1.8ex] 
\end{tabular} 
\end{table}

\begin{table}[!h] \centering 
  \caption{$\kappa = -0.5$ graph statistics summary. } 
  \label{tab:kappa_m0.5}  
\begin{tabular}{@{\extracolsep{5pt}} ccccc} 
\\[-1.8ex]\hline 
\hline \\[-1.8ex] 
 Scale ($\rho$) & 0.7 & 1 & 2  \\ 
\hline \\[-1.8ex] 
Edge.fraction & 0.017 (0.001) & 0.013 (0.001) & 0.008 (0.001)  \\ 
Max.Degree & 340.785 (21.648) & 413.115 (24.818) & 656.07 (39.433)  \\ 
Mean.Degree & 59.658 (4.776) & 62.67 (4.989) & 79.873 (6.566)  \\ 
Distinct.Cliques$\geq \ell$ & 80.205 (6.493) & 55.155 (5.288) & 39.95 (5.289)  \\ 
Max.Clique.Size & 23.62 (5.216) & 29.52 (6.073) & NaN (NA)  \\ 
Mean.Degree.Centrality & 0.161 (0.009) & 0.141 (0.007) & 0.114 (0.005)  \\ 
\hline \\[-1.8ex] 
\end{tabular} 
\end{table}

\begin{table}[!h] \centering 
  \caption{$\kappa = 0$ graph statistics summary. } 
  \label{tab:kappa_0}  
\begin{tabular}{@{\extracolsep{5pt}} ccccc} 
\\[-1.8ex]\hline 
\hline \\[-1.8ex] 
 Scale ($\rho$) & 0.7 & 1 & 2  \\ 
\hline \\[-1.8ex] 
Edge.fraction & 0.012 (0.001) & 0.008 (0) & 0.005 (0)  \\ 
Max.Degree & 205.38 (16.619) & 242.88 (18.178) & 371.805 (24.336)  \\ 
Mean.Degree & 40.815 (2.626) & 41.759 (2.415) & 52.465 (2.813)  \\ 
Distinct.Cliques$\geq \ell$ & 70.2 (13.19) & 46.84 (6.057) & 41.54 (4.576)  \\ 
Max.Clique.Size & 23.735 (5.244) & 28.49 (5.421) & 41.63 (7.913)  \\ 
Mean.Degree.Centrality & 0.169 (0.017) & 0.149 (0.013) & 0.125 (0.009)  \\ 
\hline \\[-1.8ex] 
\end{tabular} 
\end{table}

\begin{table}[!h] \centering 
  \caption{$\kappa = 0.5$ graph statistics summary. } 
  \label{tab:kappa_0.5}  
\begin{tabular}{@{\extracolsep{5pt}} ccccc} 
\\[-1.8ex]\hline 
\hline \\[-1.8ex] 
 Scale ($\rho$) & 0.7 & 1 & 2  \\ 
\hline \\[-1.8ex] 
Edge.fraction & 0.017 (0.001) & 0.013 (0.001) & 0.008 (0) \\ 
Max.Degree & 267.08 (17.432) & 318.615 (18.062) & 488.545 (25.046)  \\ 
Mean.Degree & 61.311 (2.803) & 63.539 (2.96) & 80.373 (3.454)  \\ 
Distinct.Cliques$\geq \ell$ & 87.945 (12.778) & 55.385 (7.343) & 39.535 (5.591)  \\ 
Max.Clique.Size & 23.595 (4.844) & 28.45 (6.158) & 39.925 (7.861) \\ 
Mean.Degree.Centrality & 0.21 (0.016) & 0.183 (0.012) & 0.153 (0.008)  \\ 
\hline \\[-1.8ex] 
\end{tabular} 
\end{table}

\begin{table}[!h] \centering 
  \caption{$\kappa = 1$ graph statistics summary. } 
  \label{tab:kappa_1}  
\begin{tabular}{@{\extracolsep{5pt}} ccccc} 
\\[-1.8ex]\hline 
\hline \\[-1.8ex] 
 Scale ($\rho$) & 0.7 & 1 & 2  \\ 
\hline \\[-1.8ex] 
Edge.fraction & 0.025 (0.001) & 0.018 (0.001) & 0.012 (0)  \\ 
Max.Degree & 351.89 (16.965) & 421.36 (18.32) & 652.265 (26.131)  \\ 
Mean.Degree & 88.331 (3.89) & 91.901 (3.805) & 117.569 (4.089) \\ 
Distinct.Cliques$\geq \ell$ & 91.71 (30.245) & 64.23 (17.006) & 41.765 (4.761)  \\ 
Max.Clique.Size & 24.07 (5.334) & 29.29 (6.259) & 40.52 (7.969)  \\ 
Mean.Degree.Centrality & 0.241 (0.012) & 0.21 (0.01) & 0.174 (0.007)  \\ 
\hline \\[-1.8ex] 
\end{tabular} 
\end{table} 

\section{Values of Tuning Parameter $C_{\triangle}$ Used in Simulations And Applications} \label{sec: values of C_delta}

Here we provide details on the choice of $C_{\triangle}$ used in various simulations. 

\begin{table}[H]
\centering
\begin{tabular}{|l|c|}
\hline Section (Figure) & $C_{\triangle}$ \\ \hline 
\ref{sec: Consistency of Simulation Curvature Estimates} (Figure~\ref{fig:Estimator_Latent_GMM_consistency})       & 1.5     \\
\ref{sec:simulations_test_type1_error} (Figure~\ref{fig:Estimator_Type_I_error})      & 1.2     \\
\ref{sec: multiplex network constant curvature tests} (Figure~\ref{fig:power_multiview})      & 1.2     \\
\ref{sec: Noncanonical Manifolds} (Figure~\ref{fig:power_ad_sphere}) & 1.3 \\
 \ref{sec:application_cybersecurity} (Figure~\ref{fig:rmse_changepoint})               & 1.4 \\
\ref{sec:application_cybersecurity} (Figure~\ref{fig:LANL_curvature_time_series})      & 1.4  \\
\hline 
\end{tabular}
\end{table}

\section{Additional Miscellanea} \label{sec:}
\subsection{Embeddings and Graph Distances} \label{sec: Embeddings and Graph Distances}
Our focus on this paper is the estimation of curvature of latent spaces, however, one may consider a similar problem of embedding the graph distances ($D_G$), (i.e. shortest path distances) as in \cite{gu2018learning}.  This is a distinct, but related problem that we can study using our estimating equation for curvature.  Firstly, we note that using a set of graph distances will generally allow for the formation of many good quality midpoints, as any chain of $3$ node forms a midpoint set.  Our method may be useful here for identifying nodes who's distances may not be preserved well using a standard embedding in a space of constant curvature.  We leave this possible extension of our method as future work.


\subsection{Smoothness of The Estimating Equation} \label{sec: Smoothness of the Estimating Equation}
Note: We moved this section to the main text.
To highlight the smoothness of our estimating function we plot a set of examples. For a unit equilateral triangle, we compute the corresponding midpoint distance $d_{xm}$ for each curvature space with $\kappa \in \{-2,-1,0,1,2\}$.  We see in Figure~\ref{fig:example_ee} that our proposed estimating equation is differentiable around the solution with non-zero derivative, allowing one to identify the curvature from the $\kappa: g(\kappa,d) = 0$. 

\begin{figure}
    \centering
     \includegraphics[width = 0.4\textwidth]{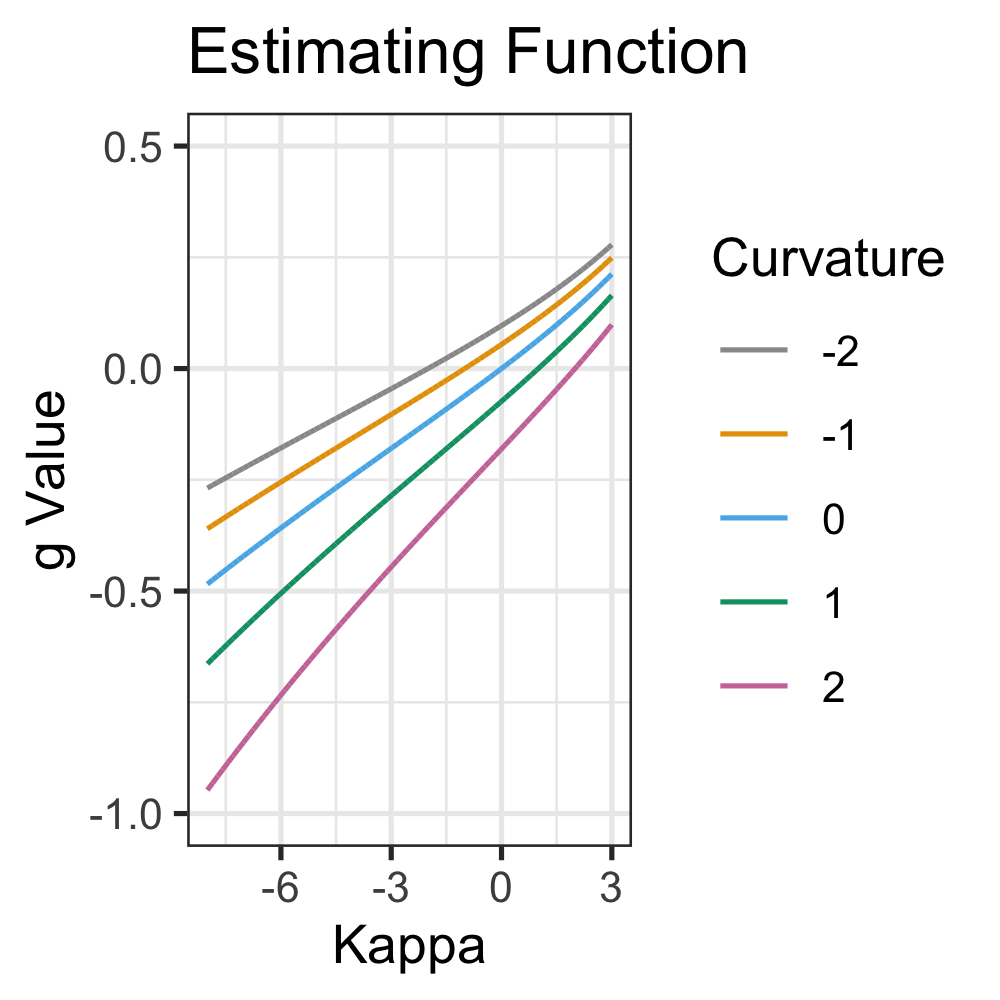}
    \caption{Example of estimating functions $g$ as a function of $\kappa$. }
    \label{fig:example_ee}
\end{figure}


\end{document}